\newtheorem{definition}{Definition}
\newtheorem{theorem}{Theorem}
\newtheorem{lemma}{Lemma}
\newtheorem{example}{Example}
\newtheorem{corollary}{Corollary}
\newtheorem{proposition}{Proposition}
\newtheorem{remark}{Remark}
\DeclareMathOperator*{\argmax}{arg\,max}
\begin{document}
\begin{singlespacing}
\title{{\textbf{Grade Inflation and Stunted Effort in a Curved Economics Course}}}
\author{\sc Dr. Alex Garivaltis\footnote{Assistant Professor, Department of Economics, School of Public and Global Affairs, College of Liberal Arts \& Sciences, Northern Illinois University, 514 Zulauf Hall, DeKalb, IL 60115.  E-Mail:  \href{mailto:agarivaltis1@niu.edu}{\tt agarivaltis1@niu.edu}.  Homepage:  \url{http://garivaltis.com}.  ORCID-iD:  0000-0003-0944-8517.}}
\maketitle
\abstract{To protect his teaching evaluations, an economics professor uses the following exam curve:  if the class average falls below a known target, $m$, then all students will receive an equal number of free points so as to bring the mean up to $m$.  If the average is above $m$ then there is no curve; curved grades above $100\%$ will never be truncated to $100\%$ in the gradebook.  The $n$ students in the course all have Cobb-Douglas preferences over the grade-leisure plane;  effort corresponds exactly to earned (uncurved) grades in a $1:1$ fashion.  The elasticity of each student's utility with respect to his grade is his ability parameter, or relative preference for a high score.  I find, classify, and give complete formulas for all the pure Nash equilibria of my own game, which my students have been playing for some eight semesters.  The game is supermodular, featuring strategic complementarities, negative spillovers, and nonsmooth payoffs that generate non-convexities in the reaction correspondence.  The $n+2$ types of equilibria are totally ordered with respect to effort and Pareto preference, and the lowest $n+1$ of these types are totally ordered in grade-leisure space.  In addition to the no-curve (``try-hard'') and curved interior equilibria, we have the ``$k$-don't care'' equilibria, whereby the $k$ lowest-ability students are no-shows.  As the class size becomes infinite in the curved interior equilibrium, all students increase their leisure time by a fixed percentage, i.e., $14\%$, in response to the disincentive, which amplifies any pre-existing ability differences.  All students' grades inflate by this same (endogenous) factor, say, 1.14 times what they would have been under the correct standard.   
\newline
\par
\textbf{Keywords:}  economics of education; supermodular games; strategic complementarity; grade inflation; continuous games; increasing returns; negative spillovers; ordered comparative statics; coordination games.}
\par
\textbf{JEL Classification Codes:}  A20; A22; C62; C70; C72; I20; I21; I23.  
\end{singlespacing}
\titlepage

\epigraph{\onehalfspacing ``Just as eating contrary to the inclination is injurious to the health, study without desire spoils the memory, and it retains nothing that it takes in.''}{\textemdash \onehalfspacing Leonardo da Vinci}

\epigraph{\onehalfspacing ``I think the big mistake in schools is trying to teach children anything, and by using fear as the basic motivation. Fear of getting failing grades, fear of not staying with your class, etc. Interest can produce learning on a scale compared to fear as a nuclear explosion to a firecracker.''}{\textemdash \onehalfspacing Stanley Kubrick}

\newpage

\section{A Game-Theoretic Model of the University.}

We assume that there are $n$ students in the course, called $i\in\left\{1,2,...,n\right\}$, where $n\in\mathbb{N}$ and $\mathcal{I}:=\left\{1,2,...,n\right\}$ is the set of players for our $n$-person game.  Each student $i$'s commodity space\footnote{Due to the presence of an exam curve, it will be possible to earn a grade that is higher than $100\%$;  this is why we write the commodity space as $\mathbb{R}_+\times[0,1]$ instead of $[0,1]\times[0,1]$.  As we will see below, the highest possible curved grade that can ever occur in the model is $\left(200-100/n\right)\%$.} $\mathbb{R}_+\times[0,1]$ will consist of two goods:  the exam grade, $G_i$, expressed as a percentage, and leisure, laziness, or non-effort, expressed as a percentage $L_i\in[0,1]$.  Each student chooses an effort level,  $x_i\in[0,1]$, where we have the resource constraint $L_i=1-x_i$.  In the absence of a curve or other type of grade-bloating scheme imposed by the professor, effort will be assumed to correspond perfectly with exam performance;  say, if you give $x_i=80\%$, then your exam grade will be $G_i=80\%$ and your leisure allocation will be $L_i=20\%$.  Each student $i$ will  be assumed to have Cobb-Douglas preferences $\succsim_i$ over the unit square (\cite{cobbdouglas});  the parameter $\alpha_i\in(0,1)$ will denote the elasticity of student $i$'s utility with respect to his or her grade:
\begin{equation}\label{normalized}
U_i(G_i,L_i):=G_i^{\alpha_i}L_i^{1-\alpha_i},
\end{equation}so that $1-\alpha_i$ is the elasticity of his utility with respect to leisure, or non-effort.  Thus, each student's preferences $\succsim_i$  are represented via a linearly homogeneous utility function, whereby we have made the normalization $U_i(1,1)=1$.  Thus, a $100\%$ utility index corresponds to receiving a grade of $100\%$ whilst exerting no effort.

\begin{figure}[t]
\begin{center}
\includegraphics[height=200px]{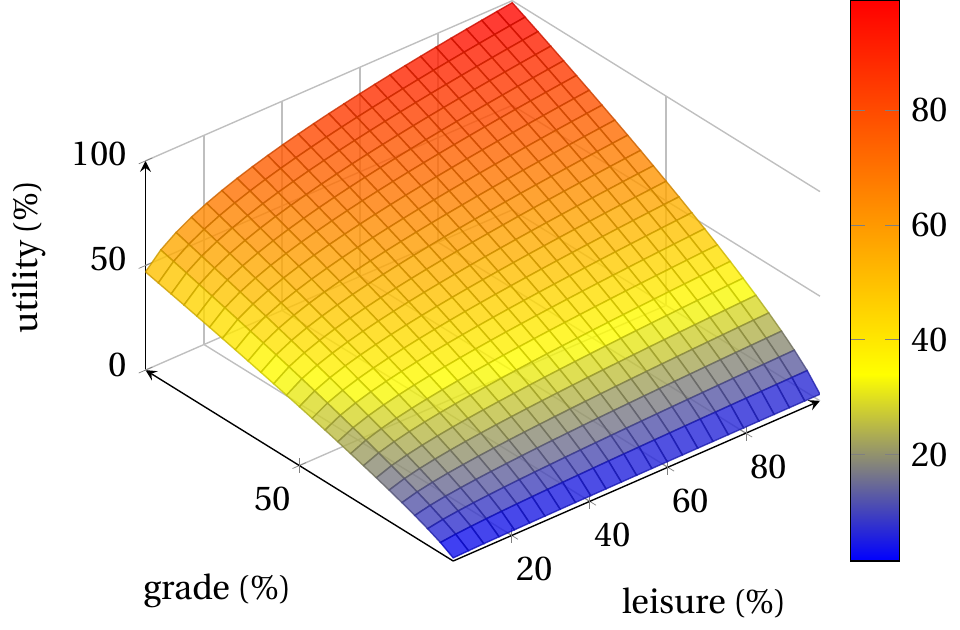}
\caption{\sc Student preferences over a commodity space of $\left(\text{grade},\text{leisure}\right)$ pairs.  This particular student (a C student) has the utility $U_i(G_i,L_i)=G_i^{0.75}L_i^{0.25}$.}
\end{center}
\end{figure}

Under ordinary (uncurved) conditions, student $i$ will choose a level $x_i^*$ of study intensity that solves
\begin{equation}\label{effort}
\max_{0\leq x_i\leq1}x_i^{\alpha_i}\left(1-x_i\right)^{1-\alpha_i},
\end{equation}or, equivalently, that solves
\begin{equation}\label{logs}
\max\limits_{0\leq x_i\leq1}\left(\alpha_i\log(x_i)+(1-\alpha_i)\log\left(1-x_i\right)\right).
\end{equation}The optimization problem (\ref{effort}) must have an interior solution, since the endpoints give zero utility.  Differentiating (\ref{logs}), we have the unique optimum $x_i^*=\alpha_i$, viz., each student's preference parameter $\alpha_i$ is precisely the grade that he or she would receive without the distortion of an exam curve.  Say, a student with the utility function $U(G_i,L_i)=G_i^{0.85}L_i^{0.15}$ would wind up with $85\%$ on the exam, for a solid B.  Thus, under ordinary conditions, the mean on the exam will be $\overline{x}:=(1/n)\sum\limits_{i=1}^nx_i=(1/n)\sum\limits_{i=1}^n\alpha_i=\overline{\alpha}$, viz., the sample mean of the students' preference parameters.

\begin{figure}[t]
\begin{center}
\includegraphics[height=200px]{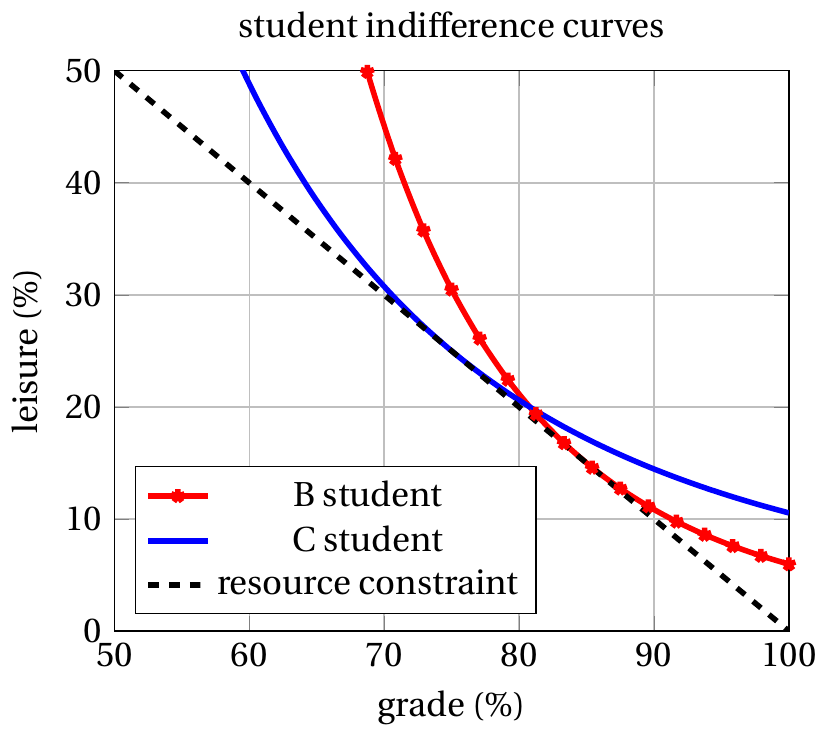}
\caption{\sc Indifference curves of B ($\alpha_i:=85\%$) and C ($\alpha_i:=75\%$) students in the grade-leisure plane.  Without an exam curve, we have the resource constraint $G_i+L_i=1$.}
\end{center}
\end{figure}

\begin{figure}[t]
\begin{center}
\includegraphics[height=200px]{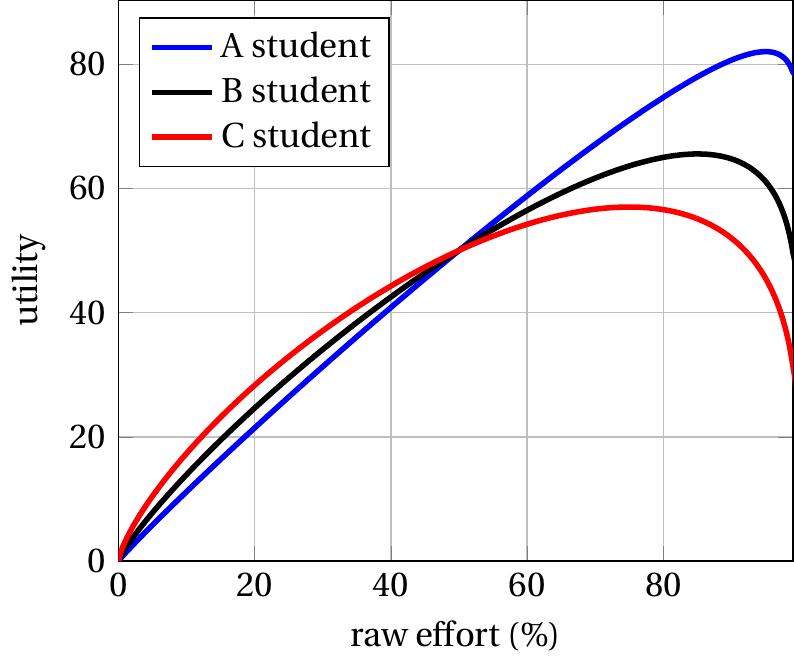}
\caption{\sc The utility of effort, $U_i(x_i)=x_i^{\alpha_i}\left(1-x_i\right)^{1-\alpha_i}$, for three different students, absent the distortions of an exam curve (i.e., $m:=0$).  The students' respective Cobb-Douglas parameters are $\left(\alpha_1,\alpha_2,\alpha_3\right):=\left(95\%,85\%,75\%\right)$.}
\end{center}
\end{figure}

\par
The Professor of the course will be assumed to use the following curving system, known to all the students in advance:  all students will receive an identical number of additional exam points in order to raise the mean $\overline{x}$ to an acceptable level, $m$.  If $\overline{x}\geq m$, then no points will be added;  in the event of a curve, any students whose curved scores exceed $100\%$ will not have their grades truncated\footnote{The author's reasoning for this feature is as follows:  under the present curving system, all students receive an equal number of free points $m-\overline{x}$, and the best students in the class are not being penalized for their high effort levels.  Naturally, the curve $\left(x_i,x_{-i}\right)\mapsto x_i+\max\left(m-\overline{x},0\right)$ already provides an incentive to decrease one's effort;  the exact magnitude of each student's decrease in study intensity will be determined below in equilibrium.  Allowing scores that are higher than $100\%$ net of the curve will turn out to be a useful device for educational inferences, i.e., we can calibrate the model by assuming that the best student in the class had an ability parameter of $\alpha_i=1$. } to $100\%$.  That is, a curved score of $105\%$ will figure into a given student's utility function as $1.05^{\alpha_i}(1-x_i)^{1-\alpha_i}$ utils.  Thus, player $i$'s curved grade $G_i=G_i\left(x_i,x_{-i}\right)$ consists in the expression  
\begin{equation}\boxed{
G_i\left(x_i,x_{-i}\right)=x_i+\max\left(m-\overline{x},0\right)=\max\left(m+\frac{n-1}{n}\left(x_i-\overline{x}_{-i}\right),x_i\right).}
\end{equation}Which is to say, in the event that there is an exam curve, then student $i$'s grade will be given by 
\begin{equation}
G_i(x_i,x_{-i})=x_i+m-\frac{S_{-i}+x_i}{n}=x_i+m-\frac{(n-1)\overline{x}_{-i}+x_i}{n}=m+\frac{n-1}{n}\left(x_i-\overline{x}_{-i}\right),
\end{equation}where $S_{-i}:=\sum\limits_{j\neq i}x_j$ is the aggregate effort of the non-$i$ students, and $\overline{x}_{-i}:=S_{-i}/(n-1)$ is the average effort of the non-$i$ students.  Thus, for the game outcome $x:=\left(x_i,x_{-i}\right)$, player $i$'s utility (or payoff) amounts to
\begin{equation}\boxed{
U_i\left(x_i,x_{-i}\right)=\max\left(m+\frac{n-1}{n}\left(x_i-\overline{x}_{-i}\right),x_i\right)^{\alpha_i}\left(1-x_i\right)^{1-\alpha_i},}
\end{equation}where $x_{-i}:=\left(x_j\right)_{j\neq i}$ is the action profile of the student $i$'s opponents.  Hence, each player's action set $\mathcal{A}_i$ is the unit interval $[0,1]$, and the set of all game outcomes, or action profiles $x=(x_i)_{i=1}^n$, is the unit hypercube $\bigtimes\limits_{i=1}^n\mathcal{A}_i=[0,1]^n$.  This completes our formal definition of the game $\Gamma_n:=\left(\left\{1,2,...,n\right\},\left(\mathcal{A}_i\right)_{i=1}^n,\left(\succsim_i\right)_{i=1}^n\right)$ that is presently at hand (cf. with \cite{vnm,osborne}).  In terms of the aggregate resources that are up for grabs in our environment, we have $\sum\limits_{i=1}^nG_i\ge nm$ and $\sum\limits_{i=1}^n(G_i+L_i)\ge n$.  The parameter space $\Theta$ for the game-theoretic model $\Gamma_n:=\Gamma_n\left(\left(\alpha_i\right)_{i=1}^n,m\right)$ consists in the open set $\Theta:=(0,1)^{n+1}$, which has $n+1$ degrees of freedom.
\par
Thus, in our mathematical formalism for the university interactions that happen in actual life, the pupils' economic behavior hinges on the precise structure of the continuous mapping $x\mapsto\left(U_1(x),...,U_n(x)\right)$ that transforms $[0,1]^n$ into the positive orthant $\mathbb{R}_+^n$.  Our object of study, $\Gamma_n$, is an infinite, continuous game with compact and convex (uni-dimensional) strategy sets.  Although each player $i$'s payoff function $U_i(x)$ is continuous, it is not quasi-concave in his own strategy $x_i$;  thus, there is no guarantee that the reaction correspondence $\argmax\limits_{x_i\in[0,1]}U_i(x_i,x_{-i})$ is convex-valued.  This non-convexity (cf. with \cite{fudenbergtirole}) means that the usual method of proving the existence of pure strategy equilibria (via the \cite{kakutani} fixed point theorem, cf. with \cite{nash,nash51,debreu}) does not apply to our particular situation.  In general, the continuity and compactness imply the existence of a mixed-strategy equilibrium (cf. with \cite{glicksberg});  in order to make more specific conclusions, we will need to unravel the exact, concrete properties of our particular payoff $U_i\left(x_i,x_{-i}\right)$.
\par
In our world, the students' effort levels are strategic complements in the sense of \cite{bulow}:  if my classmates invest and exert themselves, and try very hard for a good score on the exam, then the curve will be less generous (or non-existent), which lowers my grade and increases the marginal utility of my effort.  Thus, my best reply is to join in with a higher effort level of my own.  This type of coordination also works in the reverse direction:  if my classmates (opponents) dump and tank their scores on the exam, then the padding of the curve allows me to decrease my effort level by some correct amount, thereby optimizing my welfare and reaping the benefits of increased leisure time.
\par
Away from the kink (viz., where $\overline{x}=m$), this strategic complementarity is manifest in the non-negativity of the cross partials $\partial^2\log\left(U_i\left(x\right)\right)/\left(\partial U_i\partial U_j\right)$ for $i\neq j$, i.e.,

$$\frac{\partial^2}{\partial x_i\partial x_j}\log\left(U_i\left(x\right)\right)=\alpha_i\frac{n-1}{n^2}\times\begin{cases}
0 & \text{if }\overline{x}>m\\
\text{undefined} & \text{if }\overline{x}=m\\
G_i^{-2} & \text{if }\overline{x}<m.\\
\end{cases}$$

However, since student $i$'s payoff $U_i(x)$ is not differentiable over the entirety of the hypercube $[0,1]^n$, we will avail ourselves of the more general and direct approach that is furnished by the Topkis theory (\cite{topkisbook,topkispaper,vives,milgrom}) of ordered comparative statics in supermodular games, as follows.
\begin{theorem}[Increasing Differences]
Each student's log-payoff has increasing differences, meaning that the utility that is gained from extra effort
\begin{equation}\label{differences}
\Delta\log U_i\left(x\right):=\log\left(U_i\left(x_i+\Delta x_i,x_{-i}\right)\right)-\log\left(U_i\left(x_i,x_{-i}\right)\right),
\end{equation}where $\Delta x_i>0$, is increasing in the efforts of all the non-$i$ players. 
\end{theorem}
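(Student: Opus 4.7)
The plan is to reduce the claim to a one-variable monotonicity question by peeling off the summands of $\log U_i$ that do not involve $x_{-i}$. Writing
\[
\log U_i(x_i, x_{-i}) = \alpha_i \log G_i(x_i, x_{-i}) + (1-\alpha_i)\log(1-x_i),
\]
the leisure term cancels in the finite difference $\Delta \log U_i$, so it suffices to show that
\[
\phi(x_{-i}) := \log G_i(x_i + \Delta x_i, x_{-i}) - \log G_i(x_i, x_{-i})
\]
is nondecreasing in each coordinate $x_j$, $j \neq i$. Since $G_i$ depends on $x_{-i}$ only through the scalar $\bar{y} := \bar{x}_{-i}$, and $\bar{y}$ is componentwise increasing in $x_{-i}$, I only need to verify monotonicity of $\phi$ as a function of the single real variable $\bar{y}$.

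Next I would partition the range of $\bar{y}$ into three intervals according to whether the curve binds at the lower effort $x_i$, at the higher effort $x_i + \Delta x_i$, or both. Setting $\bar{y}_1 := (nm - x_i - \Delta x_i)/(n-1)$ and $\bar{y}_2 := (nm - x_i)/(n-1)$, the curve binds at the higher (resp.\ lower) effort iff $\bar{y} \leq \bar{y}_1$ (resp.\ $\bar{y} \leq \bar{y}_2$). In the bottom region both grades carry the curve, so $\phi(\bar{y}) = \log\bigl(1 + \tfrac{(n-1)\Delta x_i/n}{B(\bar{y})}\bigr)$, where $B(\bar{y}) := m + \tfrac{n-1}{n}(x_i - \bar{y})$ strictly decreases in $\bar{y}$; hence $\phi$ strictly increases. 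In the middle region only the lower effort carries the curve, so $\phi(\bar{y}) = \log(x_i + \Delta x_i) - \log B(\bar{y})$, which is again strictly increasing because the subtracted denominator shrinks. In the top region neither grade carries the curve and $\phi$ is the constant $\log(1 + \Delta x_i / x_i)$.

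Finally, I would verify that the three pieces agree at the breakpoints $\bar{y}_1$ and $\bar{y}_2$, which is immediate because at those thresholds the two arguments of the outer max in $G_i$ coincide by construction. Stitching the pieces together yields a continuous, piecewise-smooth $\phi$ that is nondecreasing throughout $[0,1]$, establishing increasing differences. The main technical obstacle is precisely the non-differentiability of $G_i$ along the kink $\bar{x} = m$, which rules out the one-line cross-partial argument suggested by the tripartite formula displayed just before the theorem; the case-split confines the nonsmoothness to the two isolated breakpoints in $\bar{y}$ and lets the smooth monotonicity of the logarithm do the rest of the work within each region.
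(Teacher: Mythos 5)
Your proposal is correct and follows essentially the same route as the paper's own proof: reduce to monotonicity in the scalar $\overline{x}_{-i}$, drop the leisure term, split at the two kinks $(nm-x_i-\Delta x_i)/(n-1)$ and $(nm-x_i)/(n-1)$, and verify monotonicity of the log-grade difference on each of the three resulting pieces (constant, $-\log B$ increasing, and $\log(1+c/B)$ increasing as $B$ decreases). Your explicit check of continuity at the breakpoints is a small tidiness improvement over the paper's version, which relies on the three closed intervals overlapping at their endpoints, but the argument is otherwise identical.
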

\begin{proof}
Since each player's log-payoff depends only on the sample mean $\overline{x}_{-i}$ of his opponents' effort levels, it suffices to prove that the utility change (\ref{differences}) is increasing in $\overline{x}_{-i}$, since $\overline{x}_{-i}$ is increasing in $x_j$ for $j\neq i$.  Now, the leisure term $(1-\alpha_i)\log\left(1-x_i\right)$ is unaffected by the opponents' sample mean $\overline{x}_{-i}$;  accordingly, we define the relevant difference
\begin{multline}\label{f}
f\left(\overline{x}_{-i}\right):=\log\left(\max\left(m+\frac{n-1}{n}\left(x_i+\Delta x_i-\overline{x}_{-i}\right),x_i+\Delta x_i\right)\right)\\
-\log\left(\max\left(m+\frac{n-1}{n}\left(x_i-\overline{x}_{-i}\right),x_i\right)\right),
\end{multline}and proceed to show that $f(\bullet)$ is increasing over $[0,1]$.
\par
Now, the (univariate) function (\ref{f}) has a pair of distinct kinks, or points of non-differentiability, namely, $\left(nm-x_i-\Delta x_i\right)/(n-1)$ and $(nm-x_i)/(n-1)$.  These two points partition $[0,1]$ into three distinct intervals;  it suffices to show that $f(\bullet)$ is increasing over each such interval.  Over the interval $\overline{x}_{-i}\in\left[\left(nm-x_i\right)/\left(n-1\right),1\right]$, we have the constant function $f\left(\overline{x}_{-i}\right)\equiv\log\left(x_i+\Delta x_i\right)-\log(x_i)$ which is non-decreasing in $\overline{x}_{-i}$.  Next, in the event that $\overline{x}_{-i}\in\left[\left(nm-x_i-\Delta x_i\right)/\left(n-1\right),\left(nm-x_i\right)/\left(n-1\right)\right]$, we will have $f\left(\overline{x}_{-i}\right)=\log\left(x_i+\Delta x_i\right)-\log\left(m+\frac{n-1}{n}\left(x_i-\overline{x}_{-i}\right)\right)$, which is increasing in $\overline{x}_{-i}$ because student $i$'s grade is decreasing in $\overline{x}_{-i}$.  Finally, in the event that $\overline{x}_{-i}\leq\left(nm-x_i-\Delta x_i\right)/(n-1)$, our job amounts to demonstrating that the ratio
\begin{equation}
\frac{nm+(n-1)\left(x_i+\Delta x_i-\overline{x}_{-i}\right)}{nm+(n-1)\left(x_i-\overline{x}_{-i}\right)}=1+\frac{(n-1)\Delta x_i}{nm+n-1\left(x_i-\overline{x}_{-i}\right)}
\end{equation}is increasing in $\overline{x}_{-i}$, which is clearly true, since $\Delta x_i>0$ by hypothesis.  This completes the proof.
\end{proof}

\begin{figure}[t]
\begin{center}
\includegraphics[height=200px]{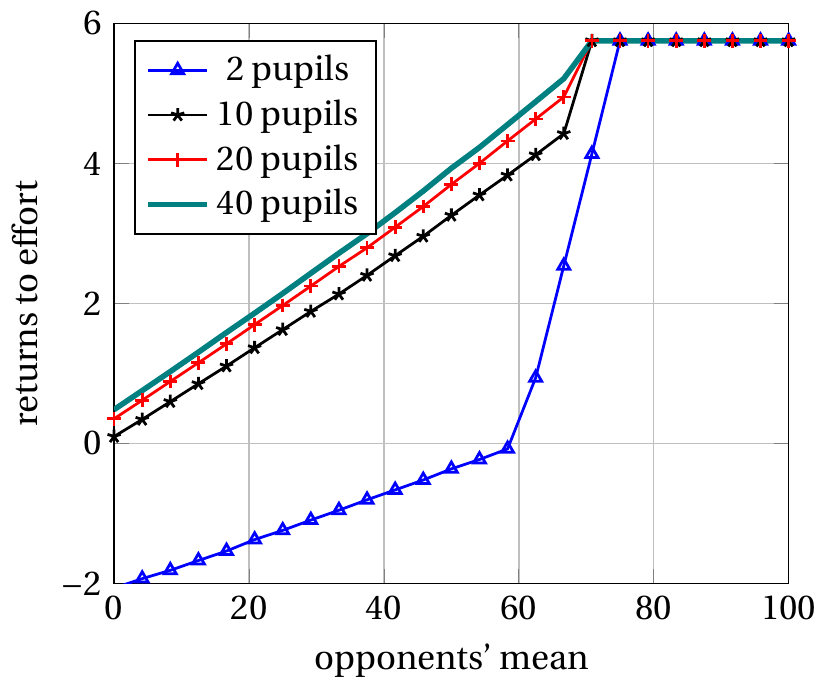}
\caption{\sc Increasing returns $\overline{x}_{-i}\mapsto\Delta U_i\left(x_i,\overline{x}_{-i}\right)=U_i\left(x_i+\Delta x_i,\overline{x}_{-i}\right)-U_i\left(x_i,\overline{x}_{-i}\right)$ to extra effort $\Delta x_i$ (``increasing differences'') for player $i$.  If $i$'s classmates study harder, then their sample mean $\overline{x}_{-i}$ increases, and pupil $i$'s benefit $\Delta U_i$ from the extra effort $\Delta x_i$ also increases.  Hence, we have strategic complements (and a supermodular game) in the author's classroom.  This illustration uses the parameters $\left(x_i,\Delta x_i,m,\alpha_i\right):=\left(65\%,15\%,70\%,85\%\right)$ and $n\in\{2,10,20,40\}$.  The middle piece of this (tripartite) function is an artifact of small class sizes;  it disappears in the limit as $n\to\infty$.}
\end{center}
\end{figure}

\begin{corollary}[Supermodularity of the $n$-Person Game $\Gamma_n$]
The game $\Gamma_n$ (that the author's students were playing for eight semesters) is supermodular with negative spillovers.  Hence, there exists at least one equilibrium point $x^e=\left(x^e_1,...,x^e_n\right)$ in pure strategies.  There exists a low-effort equilibrium $x^*$ and a high-effort equilibrium $y^*$ that bracket all Nash equilibria with respect to the vector partial order $\leq$ over $\mathbb{R}^n$.  That is, for every equilibrium point $x^e$, we have $x^*_i\leq x^e_i\leq y^*_i$ for all $i=1,...,n$.
\end{corollary}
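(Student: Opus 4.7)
The plan is to verify the hypotheses of the standard existence theorem for supermodular games (cf.\ \cite{topkispaper,topkisbook,milgrom,vives}) and then read off the lattice structure of the equilibrium set as an immediate consequence. Each strategy space $\mathcal{A}_i=[0,1]$ is a compact chain, hence a complete lattice, and each payoff $U_i$ is continuous in $x_i$, hence upper semicontinuous. Supermodularity of $U_i$ in the one-dimensional variable $x_i$ alone is automatic, so the only substantive condition to check is increasing differences of $U_i$ in $x_i$ and $x_{-i}$, and this is essentially what the preceding theorem delivers (albeit for $\log U_i$ rather than $U_i$).

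The bridge between the log-payoff formulation of Theorem 1 and the increasing-differences requirement of the Topkis machinery is the observation that, since the logarithm is strictly increasing and $U_i>0$ throughout the relevant region, the best-response correspondence $\argmax_{x_i\in[0,1]}U_i(x_i,x_{-i})$ coincides with $\argmax_{x_i\in[0,1]}\log U_i(x_i,x_{-i})$. Hence $\Gamma_n$ and the log-transformed game share identical sets of pure-strategy Nash equilibria, and the latter is literally supermodular in the Topkis sense. Invoking the lattice-theoretic existence theorem then yields that this common equilibrium set is a non-empty complete lattice in $[0,1]^n$, from which I extract the smallest element $x^*$ and the largest element $y^*$ in the coordinatewise order on $\mathbb{R}^n$. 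Every Nash equilibrium $x^e$ of $\Gamma_n$ therefore satisfies $x_i^*\leq x_i^e\leq y_i^*$ for all $i=1,\ldots,n$.

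The negative-spillover claim I would dispatch directly from the payoff formula: for any $j\neq i$, an increase in $x_j$ weakly raises $\overline{x}$, weakly shrinks the curve $\max(m-\overline{x},0)$, and therefore weakly reduces the grade $G_i$, while leaving the leisure factor $(1-x_i)^{1-\alpha_i}$ untouched. Hence each $U_i$ is non-increasing in each opponent's effort, and the game exhibits negative spillovers in the sense of \cite{bulow}.

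The main obstacle is the pedantic but essential step of confirming that the log transformation genuinely delivers us into the Topkis framework in spite of the kink along $\overline{x}=m$. The saving observation is that the increasing-differences inequality established in the preceding theorem is a \emph{global, ordinal} statement rather than a derivative condition, so it is wholly untroubled by the non-smoothness of the payoff; once this is in hand, the existence and bracketing assertions are immediate corollaries of the standard lattice-theoretic results.
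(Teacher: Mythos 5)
Your proposal is correct and follows essentially the same route as the paper: verify the Topkis hypotheses (compact one-dimensional strategy sets, continuous payoffs, automatic supermodularity in one's own action, and increasing differences from the preceding theorem), read off existence together with the greatest and least equilibria from the lattice structure of the fixpoint set, and obtain negative spillovers directly from the fact that $G_i$ is decreasing in $\overline{x}_{-i}$. Your explicit bridge from increasing differences of $\log U_i$ to the ordinal conclusions for $U_i$ itself (via invariance of $\argmax_{x_i\in[0,1]}U_i(x_i,x_{-i})$ under the logarithm) is a small but welcome refinement of a step the paper's proof glosses over; it does not change the substance of the argument.
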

\begin{proof}
Each player's payoff $U_i\left(x_i,x_{-i}\right)$ is continuous, it has increasing differences, and it is supermodular in his own action $x_i$, since every function of a single real variable is supermodular (cf. with \cite{berkeley,yildiz}).  Thus, the game $\Gamma_n$ is supermodular in the sense of \cite{topkispaper,topkisbook}.  Accordingly, the set of   fixed points of the best response correspondence is non-empty, and it has a greatest and least element with respect to the usual (coordinate-wise) partial ordering of $n$-dimensional Euclidean space (cf. with \cite{fudenbergtirole,yildiz}).  Since each player's payoff $U_i$ is a decreasing function of the opponents' sample mean $\overline{x}_{-i}$, it is therefore decreasing in the opposing action profile $x_{-i}$, so that we have negative spillovers (cf. with \cite{milgrom,levin}).
\end{proof}

\begin{figure}[t]
\begin{center}
\includegraphics[height=200px]{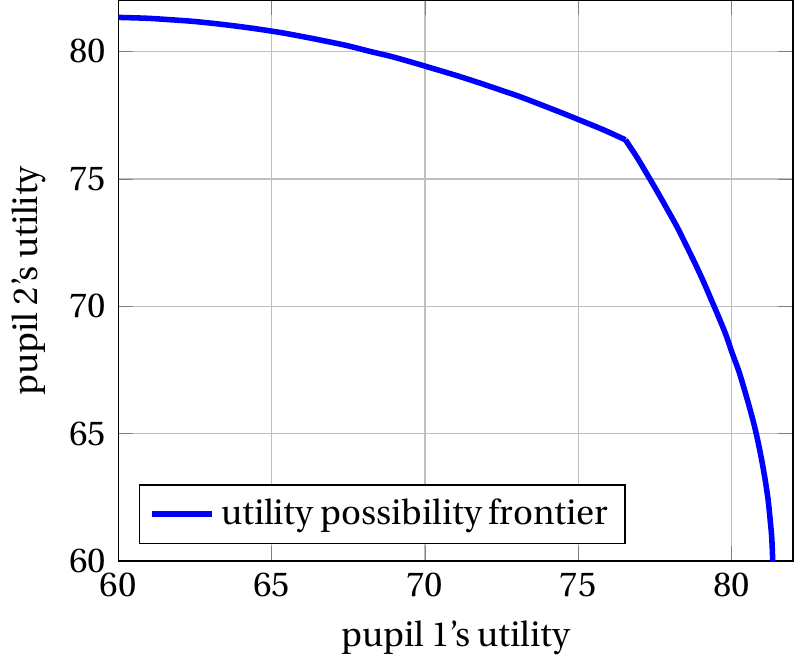}
\caption{\sc The utility possibility frontier for $n:=2$ students, under the parameter vector $\theta:=\left(\alpha_1,\alpha_2,m\right)=\left(75\%,75\%,70\%\right)$.  The set of Pareto efficient allocations $\left(x_1,x_2\right)$ is $\left([0,0.4]\times\{0\}\right)\cup\left(\{0\}\times[0,0.4]\right)$.}
\end{center}
\end{figure}

\begin{figure}[t]
\begin{center}
\includegraphics[height=200px]{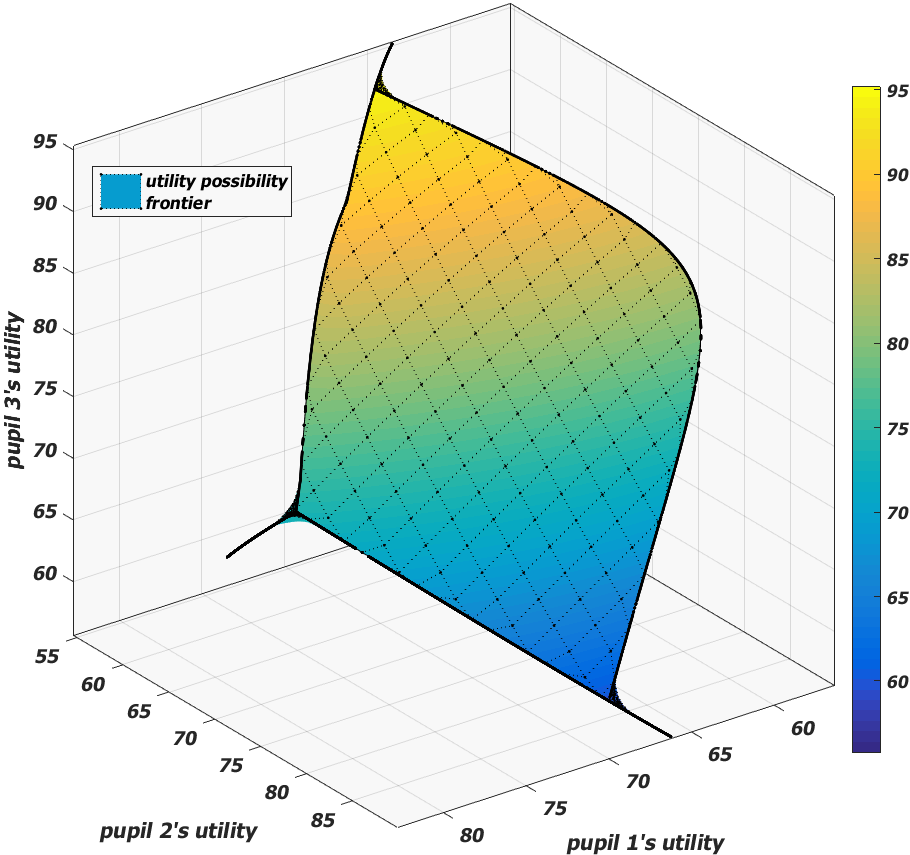}
\caption{\sc The 3-person utility possibility frontier (of undominated triples $\left(U_1,U_2,U_3\right)$ for the parameters $(\alpha_1,\alpha_2,\alpha_3,m):=(60\%,80\%,85\%,70\%)$. }
\end{center}
\end{figure}

\begin{figure}[t]
\begin{center}
\includegraphics[height=200px]{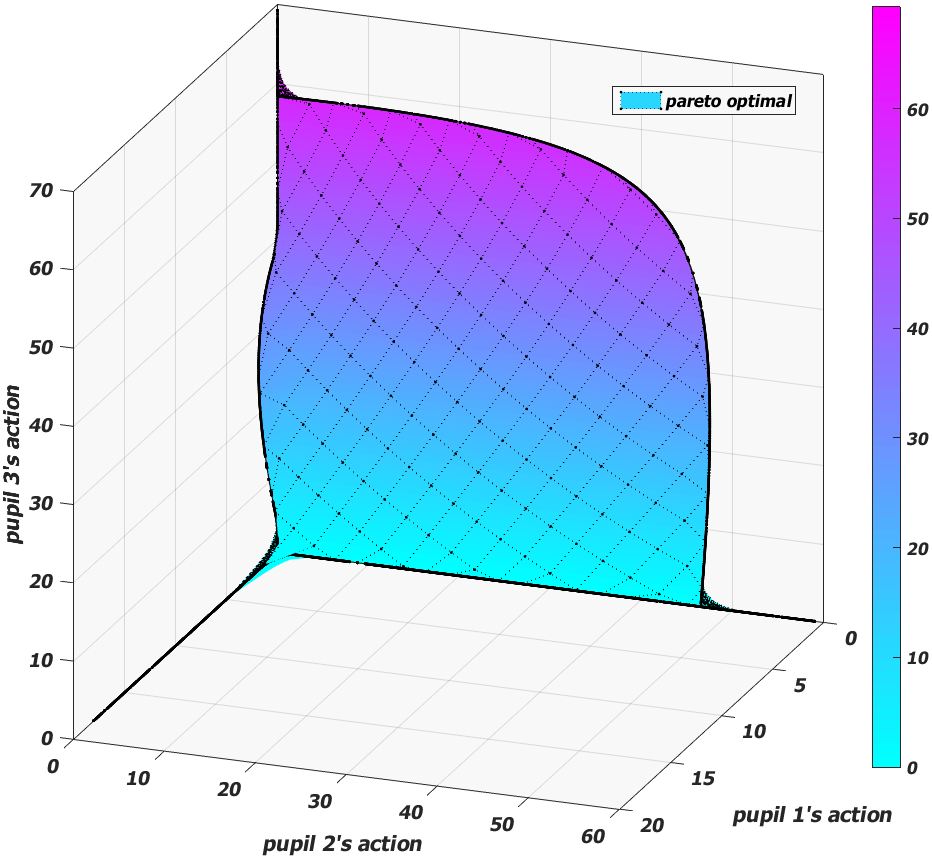}
\caption{\sc The set of pareto efficient allocations (``contract surface'') for the 3-person game, under the parameters $(\alpha_1,\alpha_2,\alpha_3,m):=(60\%,80\%,85\%,70\%)$.  These are very low-effort outcomes $(x_1,x_2,x_3)$ that will not generally obtain in equilibrium, due to the non-cooperative nature of the game.  If the students can all conspire to lose $\varepsilon$ points on the exam, then the curve will be increased by $\varepsilon$, washing away the loss for everybody.  Accordingly, in any pareto optimal allocation, one or more students must put zero effort.}
\end{center}
\end{figure}

In addition to the strategic complementarity, the pupils' returns to effort also respond monotonically to changes in the parameter values.  Say, when student $i$'s ability parameter increases, his relative preference for leisure decreases, and the returns to his effort increase;  similarly, when the professor's target mean decreases, the relative scarcity of exam points increases the extra utility that is gained from any fixed amount of additional effort, $\Delta x_i$.  Accordingly, we have the following Definition, which imposes the appropriate order structure on the parameter set, $\Theta$.
\begin{definition}[Lattice Structure of the Parameter Space]
Let $\theta:=(\alpha,m_1)\in\Theta$ and $\eta:=(\beta,m_2)\in\Theta$ denote two parameter vectors, where $\alpha,\beta\in(0,1)^{n}$ are the respective ability vectors and $m_1,m_2\in(0,1)$ are the respective target means.  We say that $\theta$ is harder than $\eta$, and we write $\theta\geq\eta$, if and only if $\alpha\geq\beta$ and $m_1\leq m_2$.  That is, the game is considered to get harder if any student's ability parameter increases or if the instructor's target mean decreases.  
\par
The partial ordering $\geq$, so defined, turns the parameter space into a lattice $\left(\Theta,\le,\vee,\wedge\right)$, whose join operation\footnote{The least upper bound $\alpha\vee\beta$ is the coordinate-wise maximum $\left(\max\left(\alpha_1,\beta_1\right),...,\max\left(\alpha_n,\beta_n\right)\right)$ and the greatest lower bound $\alpha\wedge\beta:=\left(\min\left(\alpha_1,\beta_1\right),...,\min\left(\alpha_n,\beta_n\right)\right)$ is the coordinate-wise minimum of the two vectors.  The relation $\theta\ge\eta$, when it holds, means that returns to effort are unambiguously greater in the model $\theta$ than they are in the model $\eta$.} is $\theta\vee\eta=\left(\alpha\vee\beta,m_1\wedge m_2\right)$ and whose meet is given by $\theta\wedge\eta=\left(\alpha\wedge\beta,m_1\vee m_2\right)$.
\end{definition}
The next Proposition shows that our chosen order structure $\left(\Theta,\leq\right)$ is the correct one, since each player's log-payoff now has increasing differences with respect to the model parameters.    
\begin{proposition}[Increasing Differences with Respect to Hardness]
Each player's log-payoff has increasing differences with respect to the hardness ($\geq$) of the parameter vector.  That is, given any fixed amount of extra effort $\Delta x_i>0$ for student $i$, the utility gain $\Delta\log U_i\left(x_i,x_{-i};\theta\right)=\log U_i\left(x_i+\Delta x_i,x_{-i};\theta\right)-\log U_i\left(x_i,x_{-i};\theta\right)$ is decreasing in the instructor's target mean $m$ and it is increasing in the ability vector $\left(\alpha_1,...,\alpha_n\right)$.  
\end{proposition}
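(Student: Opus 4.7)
The plan is to start from the log-payoff decomposition
\[
\Delta \log U_i \;=\; \alpha_i\bigl[\log G_i' - \log G_i\bigr] \;+\; (1-\alpha_i)\bigl[\log(1-x_i-\Delta x_i) - \log(1-x_i)\bigr],
\]
where $G_i' := G_i(x_i+\Delta x_i, x_{-i}; m)$ and $G_i := G_i(x_i, x_{-i}; m)$ are player $i$'s curved grades at the two effort levels. Since $\log U_i$ does not depend on $\alpha_j$ for $j \neq i$, the claim is automatic in those coordinates, and the substantive work reduces to (i) monotonicity in $\alpha_i$ and (ii) antitonicity in $m$.

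For part (i), I would differentiate the display above in $\alpha_i$ to obtain
\[
\frac{\partial}{\partial\alpha_i}\Delta\log U_i \;=\; \log\frac{G_i'}{G_i} \;+\; \log\frac{1-x_i}{1-x_i-\Delta x_i}.
\]
The second logarithm is strictly positive because extra effort strictly reduces leisure, and the first is non-negative because extra effort never decreases one's own curved grade: in each piece of the $\max$-operator one checks directly that $G_i' \geq G_i$, with equality only along the seam. Hence the derivative is positive and $\Delta\log U_i$ is increasing in $\alpha_i$.

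For part (ii), only the grade terms depend on $m$, so I would fix $(x_i, x_{-i}, \Delta x_i)$ and partition the $m$-axis by the activity of the curve at the two effort levels. Define $m_{\star} := (x_i + (n-1)\overline{x}_{-i})/n$ and $m_{\star\star} := (x_i + \Delta x_i + (n-1)\overline{x}_{-i})/n$, noting $m_\star < m_{\star\star}$. On $(0, m_\star]$ the curve is inactive at both efforts, the grade contribution collapses to $\log(x_i+\Delta x_i) - \log x_i$, and the $m$-derivative vanishes. On $(m_\star, m_{\star\star}]$ only the low-effort curve activates, and the grade contribution becomes $\log(x_i+\Delta x_i) - \log\bigl(m + \tfrac{n-1}{n}(x_i-\overline{x}_{-i})\bigr)$, which is strictly decreasing in $m$. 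On $(m_{\star\star}, 1)$ both grades are curved and the $m$-derivative of the grade contribution reduces to $\alpha_i\bigl(1/G_i' - 1/G_i\bigr) < 0$, again because $G_i' > G_i$ in this regime. Continuity of $\Delta\log U_i$ at the two thresholds then glues the three pieces into a globally non-increasing function of $m$.

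The main technical obstacle, as in the proof of the Increasing Differences theorem, is the piecewise-smooth nature of the payoff: one cannot finish with a single global derivative, but must track the regimes of the $\max$ operator explicitly. The unifying observation that makes the case analysis clean in both halves of the argument is the monotone-grade inequality $G_i' \geq G_i$ — holding opponents' effort fixed, additional study never lowers one's own curved score — which drives both the sign in step (i) and the middle- and upper-regime signs in step (ii).
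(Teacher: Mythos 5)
Your proposal is correct and takes essentially the same route as the paper: the same linear-in-$\alpha_i$ decomposition with the derivative $\log(G_i'/G_i)+\log\bigl[(1-x_i)/(1-x_i-\Delta x_i)\bigr]>0$, and the same three-regime partition of the $m$-axis at the thresholds $\overline{x}$ and $\overline{x}+\Delta x_i/n$ (your $m_\star$ and $m_{\star\star}$), with only a cosmetic difference in how the upper regime's monotonicity is verified (you differentiate $\log(G_i'/G_i)$ directly, the paper rewrites the ratio as $1+\tfrac{(1-1/n)\Delta x_i}{x_i+\max(m-\overline{x},0)}$).
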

Thus, in the sense of \cite{milgrom}, our supermodular game $\Gamma_n$ has been properly indexed, or parameterized (cf. with \cite{levin}), by the ordered set $\left(\Theta,\le\right)$.  In the sequel, such indexation will paramount for analyzing the (monotone) comparative statics (\cite{monotone}) of the students' equilibrium behavior.

\begin{proof}
First, the utility change $\Delta\log U_i\left(x;\theta\right)$ is unaffected by the abilities of the non-$i$ players;  clearly it is non-decreasing in the opposing ability vector $\alpha_{-i}$.  Now, $\Delta\log U_i$ is linear in player $i$'s own ability, and we have
\begin{equation}
\frac{\partial}{\partial\alpha_i}\left(\Delta\log U_i(x;\theta)\right)=\log\left(\frac{G_i(x_i+\Delta x_i,x_{-i};\theta)}{G_i(x;\theta)}\right)-\log\left(\frac{1-x_i-\Delta x_i}{1-x_i}\right)>0,
\end{equation}since the grade ratio on the left is $\geq1$ and the leisure ratio on the right is $<1$. 
\par
Next, in order to show that $\Delta\log U_i\left(x;\theta\right)$ is decreasing in $m$, we consider the grade ratio
\begin{equation}\label{graderatio}
m\mapsto\frac{x_i+\Delta x_i+\max\left(m-\overline{x}-\Delta x_i/n,0\right)}{x_i+\max\left(m-\overline{x},0\right)}
\end{equation}over the separate intervals $m\in\left[0,\overline{x}\right]$, $m\in\left[\overline{x},\overline{x}+\Delta x_i/n\right]$, and $m\in\left[\overline{x}+\Delta x_i/n,1\right]$, respectively.  For $m\leq\overline{x}$, the grade ratio (\ref{graderatio}) is a constant, i.e., it is non-increasing.  If $\overline{x}\leq m\leq\overline{x}+\Delta x_i/n$, then (\ref{graderatio}) equals $(x_i+\Delta x_i)/\left[x_i+\max\left(m-\overline{x},0\right)\right]$, which is decreasing in $m$.  Finally, if $m\geq\overline{x}-\Delta x_i/n$, then we have the function
\begin{equation}
m\mapsto1+\frac{\left(1-1/n\right)\Delta x_i}{x_i+\max\left(m-\overline{x},0\right)},
\end{equation}which decreases monotonically with the instructor's target mean.  Q.E.D.
\end{proof}

\begin{figure}[t]
\begin{center}
\includegraphics[height=200px]{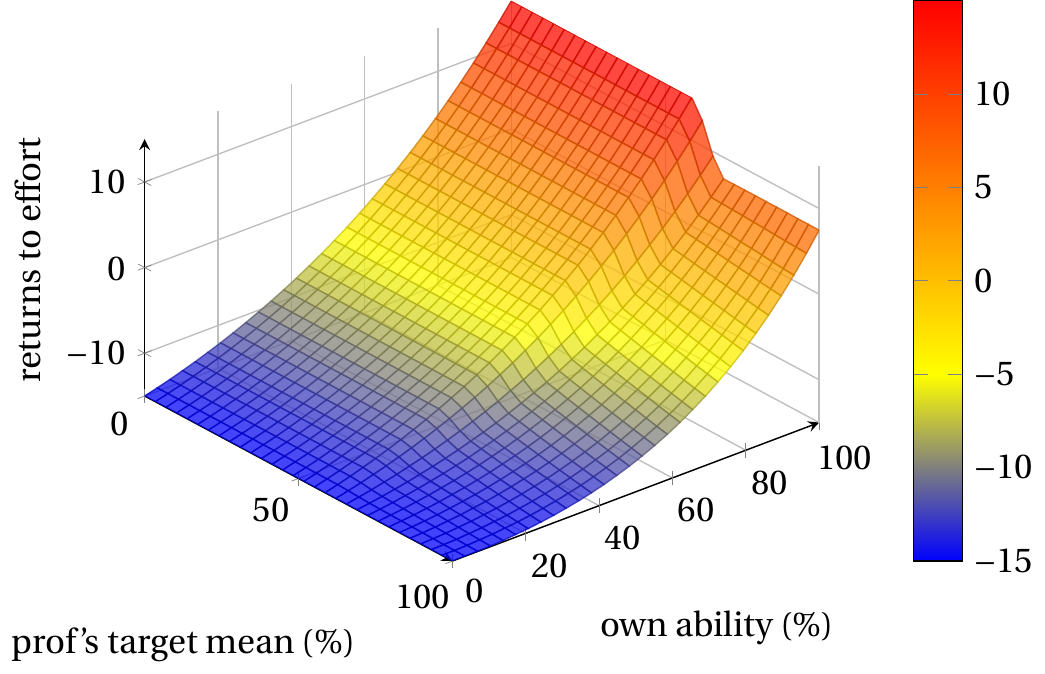}
\caption{\sc Proper indexation (\cite{milgrom,levin}) of the supermodular game (viz., increasing differences) by the poset $\left(\Theta,\le\right)$.  The returns to effort are increasing in each student's own ability $\alpha_i$, and decreasing in the teacher's target mean, $m$, which is a universal disincentive.  This illustration uses the parameters $\left(n,\overline{x}_{-i},x_i,\Delta x_i\right)=\left(2,60\%,60\%,15\%\right)$.  Player $i$'s utility gain $\Delta U_i$ is depicted on the vertical axis.}
\end{center}
\end{figure}

The following Corollary gives the general consequences (\cite{topkisbook,milgrom,yildizslides}) of the fact that each student's payoff has increasing differences with respect to his opponents' moves and also with respect to the hardness of the parameter vector.  In the sequel, these abstract results will be sharpened significantly, in so far as they apply to our concrete situation.

\begin{corollary}[Monotone Comparative Statics]
The greatest and least pure Nash equilibria (in the vector lattice $[0,1]^n$) are increasing in every ability parameter $\alpha_i$ and decreasing in the professor's target mean, $m$;  similarly for the extremal best responses

\begin{equation}\label{extremal}
\max\left(\argmax_{x_i\in[0,1]}U_i\left(x_i,x_{-i};\theta\right)\right)\text{ and }\min\left(\argmax_{x_i\in[0,1]}U_i\left(x_i,x_{-i};\theta\right)\right).
\end{equation}Due to the negative spillovers, the set of equilibria is totally ordered with respect to Pareto preference (\cite{milgrom,levin});  the minimum-effort pure Nash equilibrium Pareto dominates all the others.
\par
The greatest and least pure Nash equilibria are also, respectively, the greatest and least profiles of rationalizable strategies (\cite{milgrom,yildiz}).  If we iterate the greatest best responses (\ref{extremal})  on an initial seed of $\mathbb{1}:=(1,1,...,1)$, the resulting sequence converges to the greatest equilibrium point;  similarly, if we iterate the minimum best responses (\ref{extremal}) seeded by the zero vector, the resulting sequence converges to the least pure Nash equilibrium (\cite{milgrom}).

\end{corollary}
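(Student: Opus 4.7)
The plan is to recognize that this Corollary is essentially a harvest of the abstract Topkis-Milgrom-Roberts machinery, once one notes that all the required hypotheses have already been verified upstream. Corollary 1 supplies the supermodularity of $\Gamma_n$ together with the negative spillovers, while Proposition 1 delivers the proper parameterization, in that each $\log U_i$ has increasing differences in the strategy profile as well as in the hardness parameter $\theta$. Continuity of $U_i$ on the compact hypercube $[0,1]^n$ is built into the definition of the game. Each of the four clauses of the Corollary will then fall out of a distinct off-the-shelf theorem.

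First I would dispatch the monotonicity of the extremal best responses. Since $[0,1]$ is trivially a complete lattice and $\log U_i\left(x_i,x_{-i};\theta\right)$ has increasing differences in $\left(x_i;\theta,x_{-i}\right)$, Topkis's monotonicity theorem yields immediately that both selections $\max\argmax$ and $\min\argmax$ in (\ref{extremal}) are nondecreasing in $\left(\theta,x_{-i}\right)$. For the monotonicity of the extremal Nash equilibria themselves in $\theta$, I would invoke the Milgrom-Roberts comparative-statics theorem for properly parameterized supermodular games: raising $\theta$ shifts the extremal best-reply correspondences upward pointwise, and therefore shifts their extremal fixed points upward.

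For the Pareto assertion I would give a short direct argument to handle the extremal cases, then invoke \cite{milgrom,levin} to upgrade it to a total Pareto order. Concretely, if $x^e\le y^e$ are two Nash equilibria that are comparable in the vector order, then the negative-spillover property furnishes $U_i\left(y_i^e,x_{-i}^e\right)\ge U_i\left(y_i^e,y_{-i}^e\right)$ for each player $i$, while the equilibrium condition at $x^e$ delivers $U_i\left(x_i^e,x_{-i}^e\right)\ge U_i\left(y_i^e,x_{-i}^e\right)$; chaining yields $U_i(x^e)\ge U_i(y^e)$. Since $x^*$ sits below every equilibrium in the vector order (Corollary 1), it Pareto-dominates every equilibrium, and the mirror observation at $y^*$ places it Pareto-last; totality of the Pareto order over the whole equilibrium set is then the cited Milgrom-Levin fact for supermodular games with negative spillovers.

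Finally, the iteration and rationalizability statements will follow from the classical Tarski-type argument for isotone self-maps on the complete lattice $[0,1]^n$. The map $B^{\max}(x):=\bigl(\max\argmax_{x_i}U_i(x_i,x_{-i})\bigr)_i$ is isotone by the Topkis step above; seeded at $\mathbb{1}$ it produces a nonincreasing sequence (since $B^{\max}(\mathbb{1})\le\mathbb{1}$) that converges by compactness and continuity to a fixed point, necessarily $y^*$. The dual iteration seeded at $\mathbf{0}$ with $B^{\min}$ converges to $x^*$, and the identification of $x^*,y^*$ as the extremal serially undominated profiles is the standard consequence of iterated elimination of strictly dominated strategies in a supermodular game. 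The only real obstacle will be marshalling the citations cleanly, since the substantive work has been discharged already in Theorem 1, Corollary 1, and Proposition 1.
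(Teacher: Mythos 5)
The paper supplies no proof of this Corollary at all: it is presented as a direct harvest of the general Topkis--Milgrom--Roberts--Vives theory, justified by the supermodularity, negative spillovers, and proper indexation by $\left(\Theta,\le\right)$ established in the preceding Theorem, Corollary, and Proposition, exactly as you recognize. Your proposal takes the same route and the standard arguments you fill in (Topkis monotonicity for the extremal selections, the chaining inequality $U_i\left(x_i^e,x_{-i}^e\right)\ge U_i\left(y_i^e,x_{-i}^e\right)\ge U_i\left(y_i^e,y_{-i}^e\right)$ for Pareto ranking of order-comparable equilibria, and the monotone iteration from the extremal seeds) are correct, so this is essentially the paper's own treatment with the citations unpacked.
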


Before we proceed to find all Nash equilibria of the general $n$-person game, it is helpful to give a brief solution for the rational outcomes of a single-student course.  Here, the student has a ``bang-bang'' solution that oscillates between zero effort and his normal effort level of $x_1=\alpha_1$, depending on the particular value of the professor's target mean, $m$. 

\begin{example}[Baseline Behavior in a Single-Student Course]
If we have the smallest possible class size of $n:=1$, then $\overline{x}=x_1$, so that player 1's utility from the effort level $x_1$ is
\begin{equation}
U_1(x_1)=\left(x_1+\max\left(m-x_1,0\right)\right)^{\alpha_1}\left(1-x_1\right)^{1-\alpha_i}=\max\left(m,x_1\right)^{\alpha_1}\left(1-x_1\right)^{1-\alpha_1}.
\end{equation}Thus, in order to find $\argmax\limits_{x_1\in[0,1]} U_1(x_1)$, we must check the set of endpoints $\{0,1\}$, the point of non-differentiability ($m$), and the stationary point $\alpha_1$ in a no-curve optimum (cf. with \cite{meaning}).  Clearly, $x_1=1$ is sub-optimal, since the lack of leisure yields zero utility.  The sharp corner $x_1=m$ is inferior to $\alpha_1$, viz., 
\begin{equation}
U_1(m)=m^{\alpha_1}\left(1-m\right)^{\alpha_1}\leq\max\limits_{0\leq x_1\leq1}\left[x_1^{\alpha_1}(1-x_1)^{1-\alpha_1}\right]=\alpha_1^{\alpha_1}(1-\alpha_1)^{1-\alpha_1},
\end{equation}with the inequality being strict if $m\neq\alpha_1$.  Thus, the agent's optimal behavior in a single-student course is to put

\[x_1^*(\alpha_1,m)=\begin{cases} 
      \alpha_1 & \text{ if  }m<\alpha_1\left(1-\alpha_1\right)^{\frac{1-\alpha_1}{\alpha_1}}\\

  \left\{0,\alpha_1\right\} & \text{ if  }m=\alpha_1\left(1-\alpha_1\right)^{\frac{1-\alpha_1}{\alpha_1}}\\

0 & \text{ if  }m>\alpha_1\left(1-\alpha_1\right)^{\frac{1-\alpha_1}{\alpha_1}}.\\
    
   \end{cases}
\]

That is, if the curve is sufficiently generous, then the student will put zero effort;  otherwise, he will exert himself to the extent $\alpha_1$ that he normally does on an uncurved exam.  For the special cutoff value $m=\alpha_1\left(1-\alpha_1\right)^{\frac{1-\alpha_1}{\alpha_1}}$,  the student is just indifferent between these two extremes, and we get a pair of distinct optima.

\end{example}

\begin{figure}[t]
\begin{center}
\includegraphics[height=200px]{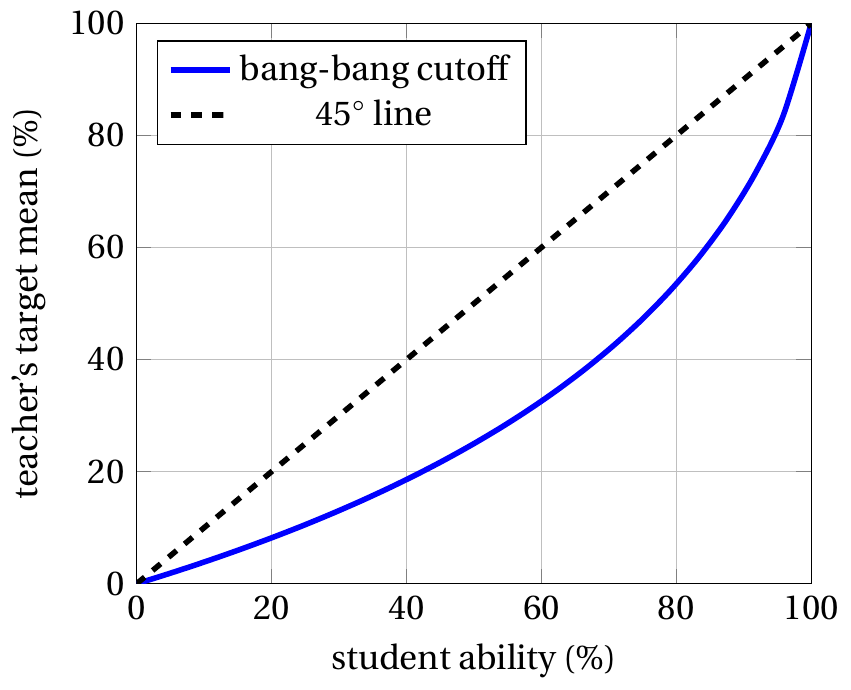}
\caption{\sc The bang-bang cutoff level $m=\alpha_1\left(1-\alpha_1\right)^{\frac{1-\alpha_1}{\alpha_1}}$ for the parameters of a single-student classroom.  Parameters $(\alpha_1,m)$ above this curve lead to zero student effort;  points below the curve generate full effort.  On the curve itself, the student is exactly indifferent between $x_1=0$ and $x_1=\alpha_1$.}
\end{center}
\end{figure}

\begin{figure}[t]
\begin{center}
\includegraphics[height=200px]{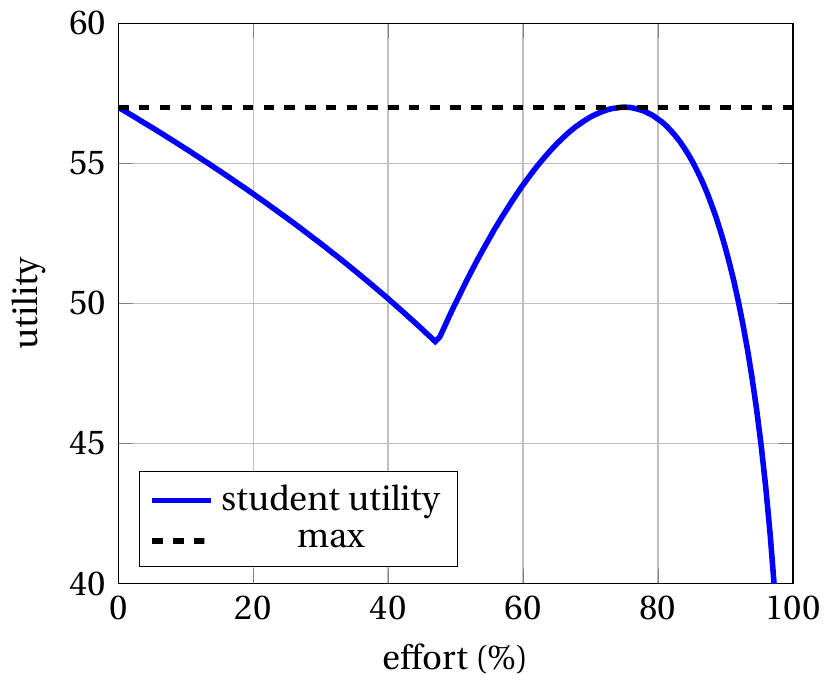}
\caption{\sc Multiple equilibria in a single-student classroom.  Here, we used the parameters $\alpha_1:=3/4$ and $m:=\nicefrac{3}{\left(4^{4/3}\right)}$.  The magnitude of this particular exam curve makes the student indifferent between zero effort and his usual effort level of $75\%$.}
\end{center}
\end{figure}

\section{Rational Behavior in Equilibrium.}

In order to unravel the exact, concrete structure of our contest $\Gamma_n$, it is necessary to identify various situations, from the standpoint of player $i$, whereby there is guaranteed to be a curve, or there is guaranteed to not be a curve, regardless of student $i$'s action.  In case of the former, we say that the curve has been ``made'' by $i$'s classmates $x_{-i}$, and in the latter, we say that the curve has been ``broken'' by $x_{-i}$.  If the curve is neither made nor broken by $i$'s opponents, then we say that $x_{-i}$ lies in the \textit{make-or-break region} for pupil $i$.  In this (tripartite) demarcation of the domain of kid $i$'s reaction correspondence, the payoff-relevant statistic is the opponents' sample mean $\overline{x}_{-i}$.  Thus, Proposition \ref{tripartite} below decomposes $[0,1]^{n-1}$ into a key triplet of convex polytopes.  In the sequel, student $i$'s most subtle and complicated economic behavior will occur over his or her make-or-break region;  in general, kid $i$ will then have to decide between distant pairs of critical reactions to the \textit{modus operandi} of his classmates.

\begin{proposition}[Making or Breaking the Exam Curve]\label{tripartite}
If 
\begin{equation}\label{breaking}
\overline{x}_{-i}\in\underbrace{\left(\frac{nm}{n-1},1\right]}_{\text{curve broken}},
\end{equation}then the non-$i$ players have guaranteed that there will be no curve, regardless of player $i$'s effort.  On the other hand, if 
\begin{equation}\label{making}
\overline{x}_{-i}\in\underbrace{\left[0,\frac{nm-1}{n-1}\right)}_{\text{curve made}},
\end{equation}then the non-$i$ players have guaranteed that there will be a curve, regardless of player $i$'s level of effort.  If
\begin{equation}\label{complement}
\overline{x}_{-i}\in\underbrace{\left[\frac{nm-1}{n-1},\frac{nm}{n-1}\right]}_{\text{make-or-break region}},
\end{equation}then there may or may not be a curve, depending on student $i$'s effort level.  Specifically, there is a curve if and only if $x_i$ is below the cutoff value 
\begin{equation}
\hat{x_i}:=\underbrace{nm-(n-1)\overline{x}_{-i}}_{\text{value of }x_i\text{ that makes }\overline{x}=m},
\end{equation}otherwise there is no curve.  The critical region (\ref{complement}) always contains the instructor's target mean $m$, and it collapses to $\{m\}$ as $n\to\infty$.
\end{proposition}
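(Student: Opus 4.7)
The plan is to translate everything into a single inequality on $x_i$ using the identity $\overline{x}=\bigl((n-1)\overline{x}_{-i}+x_i\bigr)/n$, then perform a case analysis on whether that inequality is satisfied for all, none, or some $x_i\in[0,1]$.

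First I would observe that, by the definition of the curve, a curve is applied on a profile $(x_i,x_{-i})$ if and only if $\overline{x}\leq m$, which rearranges to
\begin{equation*}
x_i\leq nm-(n-1)\overline{x}_{-i}=\hat{x_i}.
\end{equation*}
Hence the cutoff $\hat{x_i}$ stated in the proposition is forced by algebra, and the question of whether the opponents have ``made'' or ``broken'' the curve is simply the question of where $\hat{x_i}$ sits relative to the interval $[0,1]$ of admissible $x_i$.

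Next I would handle the three regimes in order. The curve is \emph{made} for every $x_i\in[0,1]$ iff $\hat{x_i}\geq 1$, which rearranges to $\overline{x}_{-i}\leq(nm-1)/(n-1)$, giving (\ref{making}); for a strict ``guarantee'' that the curve is operative I would check the extreme case $x_i=1$ and use $\overline{x}=\bigl((n-1)\overline{x}_{-i}+1\bigr)/n<m$. Symmetrically, the curve is \emph{broken} for every $x_i\in[0,1]$ iff $\hat{x_i}<0$, which rearranges to $\overline{x}_{-i}>nm/(n-1)$, giving (\ref{breaking}); here I would check the extreme case $x_i=0$ to confirm that even a total no-show cannot resurrect the curve. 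The complement of these two sets in $[0,1]$ is precisely the closed interval (\ref{complement}), on which $\hat{x_i}\in[0,1]$ genuinely partitions the $x_i$-axis into a curved sub-interval $[0,\hat{x_i}]$ and an uncurved sub-interval $(\hat{x_i},1]$.

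Finally, the last two sentences are straightforward order checks. To see that $m$ lies in the make-or-break region, I would verify the two inequalities $(nm-1)/(n-1)\leq m$ and $m\leq nm/(n-1)$, each of which reduces to $m\leq 1$ (true by hypothesis) after clearing denominators. For the collapse as $n\to\infty$, I would write
\begin{equation*}
\frac{nm-1}{n-1}=\frac{m-1/n}{1-1/n}\longrightarrow m,\qquad \frac{nm}{n-1}=\frac{m}{1-1/n}\longrightarrow m,
\end{equation*}
so that both endpoints pinch down to $\{m\}$. No single step here is a serious obstacle; the only thing to be careful about is the handling of the boundary case $\overline{x}=m$, where the two arguments of the $\max$ in the payoff coincide so that the classification ``curved'' vs.\ ``uncurved'' is harmless, which is why the middle region (\ref{complement}) is written as a closed interval.
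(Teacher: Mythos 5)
Your proposal is correct and follows essentially the same route as the paper: both reduce the question to the single inequality $\overline{x}\le m$ in the variable $x_i$, identify the cutoff $\hat{x}_i=nm-(n-1)\overline{x}_{-i}$, and classify the three regimes by checking the extreme efforts $x_i=0$ and $x_i=1$ (equivalently, by locating $\hat{x}_i$ relative to $[0,1]$), finishing with the same elementary limit computation for the endpoints. The only difference is cosmetic ordering --- you solve for the cutoff first and then place it, while the paper tests the endpoints first --- and your explicit remark about the harmless boundary case $\overline{x}=m$ is a welcome touch the paper leaves implicit.
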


\begin{proof}
To derive the curve-breaking condition (\ref{breaking}), assume that player $i$ puts minimum effort ($x_i=0$).  Then the curve will be broken if and only if $S_{-i}/n>m$, which is equivalent to the stated condition $\overline{x}_{-i}>nm/(n-1)$.  Similarly, assume that player $i$ exerts maximum effort ($x_i=1$).  Then, an exam curve will be in effect if and only if $\left(1+S_{-i}\right)/n<m$, which is equivalent to the fact that $\overline{x}_{-i}<(nm-1)/(n-1)$, as promised.  Finally, when the average effort of the non-$i$ players lies in the interval (\ref{complement}), then the sign of $m-\overline{x}$ is under player $i$'s control, and can go either way.  The cutoff value $\hat{x_i}$ of $x_i$ is specified by the equation $(S_{-i}+\hat{x_i})/n=m$, or, equivalently, $\hat{x_i}=nm-\left(n-1\right)\overline{x}_{-i}$.  If player $i$'s effort is less than this cutoff value, then we have $\overline{x}<m$, and there will be an exam curve.  If $x_i$ exceeds the cutoff, then we will have $\overline{x}>m$, and the curve will be broken.  Finally, in (\ref{complement}), we have $\left(nm-1\right)/\left(n-1\right)=m+\left(m-1\right)/\left(n-1\right)<m$, since $m-1$ is negative.  Taking the limit of the endpoints of the segment (\ref{complement}) as $n\to\infty$, we obtain the degenerate interval $[m,m]=\left\{m\right\}$.  Q.E.D.
\end{proof}

\begin{figure}[t]
\begin{center}
\includegraphics[height=200px]{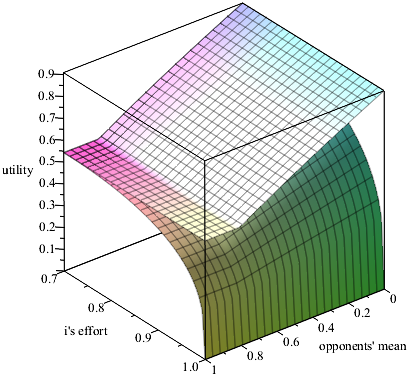}
\caption{\sc Strict dominance of the strategies in the interval $(\alpha_i,1]$ by $\alpha_i$.  The upper surface is the graph of $\left(x_i,\overline{x}_{-i}\right)\mapsto U_i(\alpha_i,\overline{x}_{-i})$, and the lower surface is the graph of $U_i(x_i,\overline{x}_{-i})$ over the rectangle $[\alpha_i,1]\times[0,1]$.  This illustration used the parameters $\left(n,m,\alpha_1\right):=(18,80\%,70\%)$.}
\end{center}
\end{figure}

The next Proposition says that all effort levels above the usual optimum $x_i=\alpha_i$ (that would obtain for player $i$ in an uncurved economics course) are strictly dominated by $\alpha_i$ itself.  That is, since the professor's curving scheme is an incentive to reduce effort, there is never any reason to give more than $\alpha_i$.  Of course, there are ample reasons to give less:  the padding of the curve means that agent $i$ can decrease his effort somewhat, and still wind up with a higher grade than the $\alpha_i$ he would normally receive.  In a small class, the curve furnishes a sizeable ``refund'' for dumping on the exam.  Say, for $n:=2$ students, every $2\%$ that you lose tanks the class average by $1\%$, which is paid back to you by the exam curve.  This latter mechanic, which leads to a jump in each player's reaction correspondence, gets erased in the limit as $n\to\infty$.

\begin{proposition}[Dominated Strategies]\label{dom}
For every player $i$, the strategies in the interval $(\alpha_i,1]$ are all strictly dominated by $\alpha_i$.  Thus, in analyzing the game $\Gamma_n$, we may restrict our attention to the box $\bigtimes\limits_{i=1}^n[0,\alpha_i]$.
\end{proposition}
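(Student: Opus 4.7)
The plan is to fix an arbitrary opponent profile $x_{-i}\in[0,1]^{n-1}$ and show that the univariate payoff $f(x_i):=U_i(x_i,x_{-i})$ is strictly decreasing on $[\alpha_i,1]$; strict dominance then follows immediately, and the reduction of the game to the box $\bigtimes_{i=1}^n[0,\alpha_i]$ is automatic, since a strictly dominated strategy can never appear as a best response. Using Proposition \ref{tripartite}, the cutoff $\hat{x}_i:=nm-(n-1)\overline{x}_{-i}$ splits $[0,1]$ into a curved branch $[0,\hat{x}_i)$, on which $G_i(x_i)=m+\frac{n-1}{n}(x_i-\overline{x}_{-i})$, and an uncurved branch $[\hat{x}_i,1]$, on which $G_i(x_i)=x_i$; a direct computation shows the two formulas agree at $\hat{x}_i$, so $f$ is continuous on the whole interval even though it is not differentiable at the kink.

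On the uncurved portion of $[\alpha_i,1]$, namely $[\max(\alpha_i,\hat{x}_i),1]$, $f$ coincides with the textbook Cobb-Douglas objective $x_i^{\alpha_i}(1-x_i)^{1-\alpha_i}$ from (\ref{effort}), which is uniquely maximized at $x_i=\alpha_i$ and is therefore strictly decreasing on this sub-interval. On the curved portion $[\alpha_i,\hat{x}_i)$ (non-empty only when $\alpha_i<\hat{x}_i$), I would differentiate the log-payoff,
\[
\frac{d}{dx_i}\log f(x_i)=\frac{\alpha_i(n-1)}{n\,G_i(x_i)}-\frac{1-\alpha_i}{1-x_i},
\]
and show it is strictly negative via two elementary bounds: first, since the curve is active, $G_i(x_i)>x_i\ge\alpha_i$, whence the first summand is strictly smaller than $(n-1)/n<1$; second, for $x_i\ge\alpha_i$ the leisure ratio $(1-\alpha_i)/(1-x_i)$ is at least $1$. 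The combination forces $(\log f)'<0$ throughout, so $f$ is strictly decreasing on $[\alpha_i,\hat{x}_i)$ as well.

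Splicing the two strictly decreasing pieces together at the continuous joining point $\hat{x}_i$ yields strict monotonicity on all of $[\alpha_i,1]$, which is the desired strict dominance (the endpoint $x_i=1$ is handled trivially, as $f(1)=0$). The main obstacle, though mild, is essentially bookkeeping: because the payoff is non-smooth at $\hat{x}_i$, a single derivative argument cannot cover the whole interval, so one must separately address the configurations $\alpha_i\ge\hat{x}_i$ (in which the curved branch is irrelevant) and $\alpha_i<\hat{x}_i$ (in which a curved sub-interval must be stitched onto an uncurved one), invoking continuity at the kink in the latter case in order to upgrade piecewise monotonicity to monotonicity on $[\alpha_i,1]$.
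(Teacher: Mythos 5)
Your argument is correct, but it proceeds by a genuinely different route from the paper's. You fix $x_{-i}$, split $[\alpha_i,1]$ at the kink $\hat{x}_i=nm-(n-1)\overline{x}_{-i}$, and verify that the payoff is strictly decreasing on each smooth branch: on the uncurved branch this is the classical Cobb-Douglas fact that $x_i^{\alpha_i}(1-x_i)^{1-\alpha_i}$ peaks at $\alpha_i$, and on the curved branch your derivative bound is sound, since $G_i(x_i)>x_i\geq\alpha_i$ forces $\tfrac{\alpha_i(n-1)}{nG_i(x_i)}<\tfrac{n-1}{n}<1\leq\tfrac{1-\alpha_i}{1-x_i}$; continuity at $\hat{x}_i$ then glues the two monotone pieces, and $f(1)=0$ disposes of the endpoint. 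The paper instead avoids calculus and case analysis at the kink altogether: it applies the AGM inequality to the utility ratio $U_i(x_i,x_{-i})/U_i(\alpha_i,x_{-i})=\left(G_i(x_i,x_{-i})/G_i(\alpha_i,x_{-i})\right)^{\alpha_i}\left((1-x_i)/(1-\alpha_i)\right)^{1-\alpha_i}$, reducing the claim to the inequality $\alpha_i/G_i(\alpha_i,x_{-i})\leq x_i/G_i(x_i,x_{-i})$, which it establishes by exploiting the positive homogeneity of $\max(\cdot,\cdot)$ to show that $t\mapsto t/G_i(t,x_{-i})$ is monotone and hence maximized at an endpoint of $[0,x_i]$. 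Your approach buys a slightly stronger conclusion as a byproduct (strict monotonicity of the whole payoff on $[\alpha_i,1]$, not merely dominance by $\alpha_i$) at the cost of the bookkeeping you acknowledge; the paper's buys a uniform, kink-free argument whose only structural input is that $G_i$ is nondecreasing in $x_i$ and positively homogeneous in the relevant sense. Both are complete proofs of the Proposition.
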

\begin{proof}
Let $G_i(x_i,x_{-i})$ denote player $i$'s grade when he plays $x_i$, and let $G_i(\alpha_i,x_{-i})$ denote his grade when he plays $\alpha_i$ against $x_{-i}$.  We will apply the general theorem of the arithmetic and geometric means (the AGM inequality, cf. with \cite{berge}) to the utility ratio
\begin{equation}\label{agm}
\frac{U_i(x_i,x_{-i})}{U_i(\alpha_i,x_{-i})}=\left(\frac{G_i(x_i,x_{-i})}{G_i(\alpha_i,x_{-i})}\right)^{\alpha_i}\left(\frac{1-x_i}{1-\alpha_i}\right)^{1-\alpha_i}.
\end{equation}Note that the grade ratio $G_i(x_i,x_{-i})/G_i(\alpha_i,x_{-i})$ is at least one, since each player's grade $G_i$ is non-decreasing in his own effort, and we have the hypothesis that $x_i>\alpha_i$.  On the other hand, the leisure ratio $(1-x_i)/(1-\alpha_i)$ is strictly less than one, so that the two factors that appear in the geometric mean (\ref{agm}) are distinct numbers.  Thus, we have the (strict) AGM inequality
\begin{equation}
\frac{U_i(x_i,x_{-i})}{U_i(\alpha_i,x_{-i})}<\alpha_i\frac{G_i(x_i,x_{-i})}{G_i(\alpha_i,x_{-i})}+1-x_i.
\end{equation}Accordingly, it suffices to prove the relation
\begin{equation}\label{suffices}
\alpha_i\frac{G_i(x_i,x_{-i})}{G_i(\alpha_i,x_{-i})}\leq x_i
\end{equation}in order to establish the fact that $\alpha_i$ strictly dominates $x_i$.  We will show below that
\begin{equation}
\max_{\alpha_i\in[0,x_i]}\frac{\alpha_i}{G_i(\alpha_i,x_{-i})}=\frac{x_i}{G_i(x_i,x_{-i})},
\end{equation}at which point (\ref{suffices}) will have been demonstrated in earnest.  We have
\begin{equation}\label{monotonic}
\frac{\alpha_i}{G_i\left(\alpha_i,x_{-i}\right)}=\max\left(\frac{n-1}{n}+\alpha_i^{-1}\left(m-\frac{n-1}{n}\overline{x}_{-i}\right),1\right)^{-1},
\end{equation}on account of the fact that $\max\left(\bullet,\bullet\right)$ is positively homogeneous of degree one.  According to (\ref{monotonic}), then, the function $\alpha_i\mapsto\alpha_i/G_i(\alpha_i,x_{-i})$ is either monotonically decreasing (if $m\leq(n-1)\overline{x}_{-i}/n$) or monotonically increasing (if $m\geq\left(n-1\right)\overline{x}_{-i}/n)$.  Thus, the maximum value of (\ref{monotonic}) for $\alpha_i\in[0,x_i]$ must occur at one of the endpoints $\{0,x_i\}$.  Hence, the expression (\ref{monotonic}) is majorized by $x_i/G_i(x_i,x_{-i})$, and the Proposition is proved.
\end{proof}
Thanks to Propositions \ref{tripartite} and \ref{dom}, we have the following Corollary, which gives some simple necessary or sufficient conditions for whether or not there will be an exam curve in equilibrium.

\begin{corollary}[Basic Conditions for Curved \& Uncurved Equilibria]\label{basic}
A necessary condition for a no-curve equilibrium is that $\overline{\alpha}\geq m$;  and a sufficient condition for the existence of a no-curve equilibrium is that $\overline{\alpha}>m+(1/n)\max\limits_{1\leq i\leq n}\alpha_i$.  Thus, $\overline{\alpha}<m$ will suffice for an equilibrium that has an exam curve;  the condition $\overline{\alpha}\le m+(1/n)\max\limits_{1\leq i\leq n}\alpha_i$ is necessary for the existence of a curved equilibrium.
\end{corollary}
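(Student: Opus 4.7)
The corollary bundles four related claims, which I would handle as two pairs: the necessary and sufficient conditions for a no-curve equilibrium, followed by their curved-equilibrium duals, which fall out by contrapositive plus the existence guarantee from the earlier supermodularity corollary.

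For the necessary condition $\overline{\alpha}\ge m$, my approach is immediate. In any no-curve equilibrium $x^*$, by definition the class mean satisfies $\overline{x}^*\ge m$; meanwhile Proposition \ref{dom} tells us $x_i^*\le\alpha_i$ for each $i$, so $\overline{x}^*\le\overline{\alpha}$, and chaining the two inequalities yields $\overline{\alpha}\ge m$.

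For the sufficient condition $\overline{\alpha}>m+(1/n)\max_i\alpha_i$, I would test the natural ``try-hard'' candidate $x^*:=(\alpha_1,\ldots,\alpha_n)$ and verify it is a Nash equilibrium. For each $i$, one computes $\overline{\alpha}_{-i}=(n\overline{\alpha}-\alpha_i)/(n-1)\ge(n\overline{\alpha}-\max_j\alpha_j)/(n-1)>nm/(n-1)$, the last step invoking the hypothesis. Proposition \ref{tripartite}'s curve-broken criterion then places the opposing profile $\alpha_{-i}$ outside player $i$'s make-or-break region \emph{regardless of his own action}, so $i$'s payoff collapses to the uncurved Cobb-Douglas $x_i^{\alpha_i}(1-x_i)^{1-\alpha_i}$, whose unique maximizer is $x_i=\alpha_i$. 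Hence $x^*$ is a no-curve Nash equilibrium.

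The two curved statements now drop out by duality. If $\overline{\alpha}<m$, then claim (a) rules out any no-curve equilibrium, while a pure Nash equilibrium is nonetheless guaranteed by the preceding supermodularity corollary, so this equilibrium must exhibit a curve, yielding (c). The necessary condition for a curved equilibrium is the logical dual of (b), read against the existence backdrop: once $\overline{\alpha}>m+(1/n)\max_i\alpha_i$, the try-hard profile materializes as a no-curve equilibrium. The main obstacle I foresee is that this does not, strictly speaking, preclude the coexistence of a separate curved equilibrium; to close that gap rigorously one would sum the interior first-order conditions of any hypothetical curved equilibrium and show that the average shortfall $(1/n)\sum_i(\alpha_i-x_i^*)$ is forced to take a negative value under the strong hypothesis, contradicting the $x_i^*\le\alpha_i$ bound of Proposition \ref{dom}. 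The corner cases (students pinned at $x_i^*=0$, or sitting at $x_i^*=\alpha_i$ within the make-or-break region) yield to the same bookkeeping, since each such regime imposes its own inequality on $\overline{x}_{-i}^*$ that combines with $\overline{\alpha}_{-i}>nm/(n-1)$ to the same contradiction.
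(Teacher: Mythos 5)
Your treatment of the first three claims coincides with the paper's own proof: necessity of $\overline{\alpha}\ge m$ via Proposition \ref{dom} together with the definition of a broken curve; sufficiency of $\overline{\alpha}>m+(1/n)\max_i\alpha_i$ by verifying that $\overline{\alpha}_{-i}>nm/(n-1)$ places every opponent profile of the try-hard candidate in the curve-broken region of Proposition \ref{tripartite}; and sufficiency of $\overline{\alpha}<m$ for a curved equilibrium by contraposition plus the existence guarantee of the supermodularity corollary (a step the paper leaves tacit and you rightly make explicit). The genuine divergence is on the last claim. The paper dispatches it by ``taking the contrapositive,'' but the contrapositive of the sufficiency statement only says that \emph{if no uncurved equilibrium exists}, then $\overline{\alpha}\le m+(1/n)\max_i\alpha_i$; this is strictly weaker than the asserted necessity, because curved and uncurved equilibria can coexist (the paper's own two-person parameter-space figure exhibits precisely this overlap). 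You correctly flag this gap, and your proposed repair can be made to work: in any curved equilibrium every player satisfies $x_i^*\ge\alpha_i-(1-\alpha_i)\bigl(nm/(n-1)-\overline{x}^*_{-i}\bigr)$, with equality for active players and slack for no-shows, and summing these inequalities with $\epsilon:=m-\overline{x}^*>0$ and $c_i:=(1-\alpha_i)/(n-\alpha_i)<1/n$ gives $n(\overline{\alpha}-m)\le\sum_i c_i\alpha_i+n\epsilon\bigl(\sum_i c_i-1\bigr)<\overline{\alpha}\le\max_i\alpha_i$, which is exactly the required bound. (Equivalently, in the all-interior case one may solve the first-order system outright and use $\hat{\alpha}_n\ge\overline{\alpha}$, an AM--HM consequence, to show that the hypothesis $\overline{\alpha}>m+(1/n)\max_i\alpha_i$ forces $\overline{x}^*>\overline{\alpha}$ --- the negative shortfall you describe --- contradicting $x_i^*\le\alpha_i$.) In short, your argument is, if anything, more complete than the one printed in the paper; the only thing missing is the explicit computation just indicated.
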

\begin{proof}
Assume that $x\in[0,1]^n$ is an equilibrium point for which the curve is broken ($\overline{x}\geq m$).  Then, since dominated strategies cannot be played in equilibrium, we must have $x_i\leq\alpha_i$ for all $i$, so that $\overline{\alpha}\geq\overline{x}\geq m$.  Now, let $\alpha_{-i}\in[0,1]^{n-1}$ denote the vector of ability parameters of the non-$i$ players.  In the event that $\overline{\alpha}_{-i}$ breaks the curve for all $i$, then each player $i$'s best response is to himself put $x_i=\alpha_i$, since the curve is guaranteed to be broken, regardless of his own action.  Thus, in order to generate a curved equilibrium, it suffices to have $\overline{\alpha}_{-i}>nm/(n-1)$ for all $i$, which is equivalent to $\overline{\alpha}>m+(1/n)\max\limits_{1\leq i\leq n}\alpha_i$.  Taking the contrapositive of these respective conditions for an uncurved equilibrium, we obtain their stated counterparts for an equilibrium that features an exam curve.  Q.E.D.
\end{proof}

Based on the foregoing (tripartite) decomposition of the action profiles $x_{-i}\in[0,1]^{n-1}$ of student $i$'s opponents, we have the following Lemma, which gives a fundamental expression for each player's best response correspondence.

\begin{lemma}[Basic Structure of the Reaction Correspondence]\label{br}
Player $i$'s best response correspondence, $BR_i(x_{-i}):=\argmax\limits_{x_i\in[0,1]}U_i\left(x_i,x_{-i}\right)$ is given by the following piecewise formula:\footnote{One or more of the intervals in this piecewise correspondence may turn out to be empty, depending on our precise location $(\alpha,m)\in\Theta$ in the parameter space.  Say, if $m\geq1-1/n$, then the no-curve region $\left(nm/\left(n-1\right),1\right]$ will be empty for all players.  Similarly, if $nm/(n-1)<\alpha_i/(1-\alpha_i)$, then player $i$'s no-show region will be empty.  However, the best response formula given in the text is still correct, as it simply asserts that if $\overline{x}_{-i}$ belongs to such-and-such segment, then the set of all best responses amounts to such-and-such.  If any of the intervals in the piecewise formula turn out to be empty, then the assertion is vacuously true.}

\[\boxed{ BR_i\left(x_{-i}\right)=\begin{cases}
      \alpha_i & \text{ if  }\frac{nm}{n-1}<\overline{x}_{-i}\leq1\text{   (curve broken)}\\

\argmax\limits_{x_i\in C(x_{-i})}U_i(x_i,x_{-i})& \text{ if }\frac{nm-1}{n-1}\leq\overline{x}_{-i}\leq\frac{nm}{n-1}\text{  (make-or-break region)}\\
      
      \alpha_i-\left(1-\alpha_i\right)\left(\frac{nm}{n-1}-\overline{x}_{-i}\right) & \text{ if } \frac{nm}{n-1}-\frac{\alpha_i}{1-\alpha_i}\leq\overline{x}_{-i}<\frac{nm-1}{n-1}\text{  (curve made)}\\
      
0 & \text{ if }0\leq\overline{x}_{-i}\leq\frac{nm}{n-1}-\frac{\alpha_i}{1-\alpha_i}\text{ (no-show region),}      
   \end{cases}
}\]where $C(x_{-i})$ is the set of three critical points
\begin{equation}\boxed{
C(x_{-i}):=\left\{0,\alpha_i-(1-\alpha_i)\left(\frac{nm}{n-1}-\overline{x}_{-i}\right),\alpha_i\right\}\cap[0,1].}
\end{equation}
\end{lemma}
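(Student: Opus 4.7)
The plan is to invoke Proposition \ref{tripartite} to partition the opponents' profile space $[0,1]^{n-1}$ by $\overline{x}_{-i}$ into three regimes and analyze $U_i(\cdot,x_{-i})$ separately on each. In each regime I will identify the strictly log-concave pieces of $U_i$, locate their interior maxima via first-order conditions, and enforce the feasibility constraint $x_i\in[0,1]$.

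First, when $\overline{x}_{-i}>nm/(n-1)$ (curve broken), Proposition \ref{tripartite} gives $U_i(x_i,x_{-i})=x_i^{\alpha_i}(1-x_i)^{1-\alpha_i}$ for every $x_i\in[0,1]$, whose unique maximizer is $\alpha_i$ by the analysis of equation (\ref{effort}). Next, when $\overline{x}_{-i}<(nm-1)/(n-1)$ (curve made), the payoff reduces to $\left(m+\frac{n-1}{n}(x_i-\overline{x}_{-i})\right)^{\alpha_i}(1-x_i)^{1-\alpha_i}$, which is strictly log-concave in $x_i$; setting the log-derivative to zero and simplifying yields the unique interior critical point $x_c^\star:=\alpha_i-(1-\alpha_i)\left(nm/(n-1)-\overline{x}_{-i}\right)$. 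If $x_c^\star\ge 0$ (equivalently, $\overline{x}_{-i}\ge nm/(n-1)-\alpha_i/(1-\alpha_i)$), it is the best response; otherwise, log-concavity forces $U_i$ to be strictly decreasing on $[0,1]$, giving the corner solution $x_i=0$.

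Third, in the make-or-break region $(nm-1)/(n-1)\le\overline{x}_{-i}\le nm/(n-1)$, I write $U_i=\max(U_i^{\text{curve}},U_i^{\text{no-curve}})$, where both pieces are strictly log-concave with interior maximizers at $x_c^\star$ and $\alpha_i$ respectively, and they agree at the kink $\hat{x}_i:=nm-(n-1)\overline{x}_{-i}\in[0,1]$, with $U_i^{\text{curve}}$ governing $U_i$ on $[0,\hat{x}_i]$ and $U_i^{\text{no-curve}}$ governing it on $[\hat{x}_i,1]$. Any global maximizer must coincide with the piecewise-maximizer on one side of the kink, and log-concavity reduces each such piecewise-maximizer to an element of $\{0,x_c^\star,\alpha_i,\hat{x}_i\}\cap[0,1]$ (with $1$ excluded by Proposition \ref{dom}). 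To eliminate $\hat{x}_i$, I split on the position of $x_c^\star$: if $x_c^\star\le\hat{x}_i$, then $U_i(\hat{x}_i)=U_i^{\text{curve}}(\hat{x}_i)\le U_i^{\text{curve}}(\max(x_c^\star,0))$, so $\hat{x}_i$ is dominated by either $x_c^\star$ or $0$; if $x_c^\star>\hat{x}_i$, then $\alpha_i\ge x_c^\star>\hat{x}_i$ (since $x_c^\star\le\alpha_i$), whence $U_i(\hat{x}_i)=U_i^{\text{no-curve}}(\hat{x}_i)\le U_i^{\text{no-curve}}(\alpha_i)=U_i(\alpha_i)$. The candidate set therefore collapses to $C(x_{-i})$, as claimed.

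The main obstacle is eliminating the kink $\hat{x}_i$ as a standalone candidate in the make-or-break case, since $U_i$ is not differentiable there and the left and right log-derivatives (which differ by a factor of $(n-1)/n$ in the grade term) a priori leave open the possibility that $\hat{x}_i$ is a local maximum of the piecewise-defined payoff. The case-split above, which hinges on the simple inequality $x_c^\star\le\alpha_i$, resolves this cleanly; the remaining algebra to translate each sign condition into a concrete interval of $\overline{x}_{-i}$ is routine.
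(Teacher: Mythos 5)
Your proposal is correct and follows essentially the same route as the paper: the same tripartite decomposition via Proposition \ref{tripartite}, the same first-order-condition and log-concavity analysis in the curve-made regime (with the corner at $x_i=0$ when the interior critical point is negative), and the same reduction of the make-or-break case to the finite candidate set $C(x_{-i})$ after excluding $x_i=1$ and the kink. The only divergence is how the kink $\hat{x}_i$ is discarded: you case-split on the position of $x_c^\star$ relative to $\hat{x}_i$ and use monotonicity of the two log-concave pieces, whereas the paper dispatches it in one line via the security-level bound $U_i(\hat{x}_i,x_{-i})=\hat{x}_i^{\alpha_i}(1-\hat{x}_i)^{1-\alpha_i}\leq\alpha_i^{\alpha_i}(1-\alpha_i)^{1-\alpha_i}\leq U_i(\alpha_i,x_{-i})$; both arguments are valid.
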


\begin{proof}
If $\overline{x}_{-i}>nm/(n-1)$, meaning that the effort level of $i$'s opponents is so high as to guarantee that the curve is broken, regardless of $x_i$, then student $i$'s best play is to put his normal effort $x_i=\alpha_i$, as we have seen above.  On the other hand, if $\overline{x}_{-i}<(nm-1)/(n-1)$, then the effort level of $i$'s opponents is low enough to guarantee that the curve is made, regardless of $i$'s behavior.  In this happenstance, a best respondent must optimize

\begin{equation}\label{objective}
\max_{0\leq x_i\leq 1}\left(m+\frac{n-1}{n}\left(x_i-\overline{x}_{-i}\right)\right)^{\alpha_i}\left(1-x_i\right)^{1-\alpha_i}.
\end{equation}Taking the $\log$ of the objective (\ref{objective}), and solving the first order condition 
\begin{equation}
\frac{\partial}{\partial x_i}\log\left(U_i\left(x_i,x_{-i}\right)\right)=0
\end{equation}for an interior optimum, we obtain
\begin{equation}\label{interior}
x_i^*=\alpha_i-(1-\alpha_i)\left(\frac{nm}{n-1}-\overline{x}_{-i}\right).
\end{equation}The program (\ref{objective}) cannot have a solution at the corner $x_i=1$, since it corresponds to no leisure time and, accordingly, zero utility.  However, (\ref{objective}) will have the corner solution $x_i=0$ (meaning zero effort) precisely when the formula for $x_i^*$ given in (\ref{interior}) is $\leq0$, viz., when
\begin{equation}
\overline{x}_{-i}\leq\frac{nm}{n-1}-\frac{\alpha_i}{1-\alpha_i}.
\end{equation}Finally, we have the problem of optimizing agent $i$'s utility over the make-or-break region, whereby the existence or non-existence of the exam curve hinges on the particular behavior of kid $i$.  Just as in the single-student example from the prequel, we must consider all endpoints, points of non-differentiability, and all points where the derivative $\partial U_i/\partial x_i$ may be zero.  Thus, we have the five critical points
\begin{equation}
\left\{0,1,\underbrace{nm-(n-1)\overline{x}_{-i}}_{\hat{x}_i\text{, kink in }i\text{`s utility}},\alpha_i-(1-\alpha_i)\left(\frac{nm}{n-1}-\overline{x}_{-i}\right),\alpha_i\right\}.
\end{equation}As we have remarked above, the corner $x_i=1$ can never be a solution, since it gives zero utility (viz., $\alpha_i$ is better).  The kink point, $\hat{x}_i:=nm-(n-1)\overline{x}_{-i}$, is not an optimum unless it coincides with $\alpha_i$, i.e., $U_i\left(\hat{x}_i,x_{-i}\right)=\hat{x}_i^{\alpha_i}\left(1-\hat{x}_i\right)^{1-\alpha_i}<\alpha_i^{\alpha_i}\left(1-\alpha_i\right)^{1-\alpha_i}\leq U_i(\alpha_i,x_{-i})$ if $\hat{x}_i\neq\alpha_i$.  Here, we have used the security level $U_i(\alpha_i,x_{-i})\geq\alpha_i^{\alpha_i}\left(1-\alpha_i\right)^{1-\alpha_i}$, which inequality is true on account of the fact that $\alpha_i+\max\left(m-\overline{x},0\right)\geq\alpha_i$.  Thus, we are left with the set $C(x_{-i})$ of three critical points that were given in the statement of the Lemma.  Q.E.D.
\end{proof}
Note that the zero-effort interval $\left[0,nm/(n-1)-\alpha_i/(1-\alpha_i)\right]$ will be empty if and only if
\begin{equation}\label{zeroeffort}
\frac{nm}{n-1}<\frac{\alpha_i}{1-\alpha_i},
\end{equation}or equivalently, when 
\begin{equation}
\alpha_i>1-\left(1+\frac{nm}{n+1}\right)^{-1}.
\end{equation}In a large classroom, as $n\to\infty$, this condition converges to $\alpha_i\geq m/(m+1)$, i.e., student $i$ will have positive effort in any best response, regardless of class size, as long as his ability parameter $\alpha_i$ is sufficiently high. 

\begin{example}
If a student's quality parameter $\alpha_i$ is just $50\%$ in a course with $n:=18$ students and a professor's target mean of $m:=75\%$, then the condition (\ref{zeroeffort}) amounts to the fact that $1>0.794$.  As $n\to\infty$, positive effort is guaranteed for all students whose ability parameter $\alpha_i$ satisfies
\begin{equation}
\alpha_i\geq\frac{m}{m+1}=42.9\%,
\end{equation}so that all A, B, C, and D students are guaranteed to have positive effort in any best response, regardless of the number of students who sit for the exam.
\end{example}
\begin{proposition}[Asymptotic Best Responses]\label{myprop}
As the class size $n\to\infty$, each player's make-or-break region becomes negligible, and student $i$'s limiting best response function $BR_i^\infty\left(x_{-i}\right)$ consists in the isotone function

\[ BR_i^\infty\left(x_{-i}\right)=\begin{cases} 
      \alpha_i & \text{ if  }m\leq\overline{x}_{-i}\leq1\text{   (curve broken, full effort)}\\

      \alpha_i-\left(1-\alpha_i\right)\left(m-\overline{x}_{-i}\right) & \text{ if } m-\frac{\alpha_i}{1-\alpha_i}\leq\overline{x}_{-i}\leq m\text{  (curve, partial effort)}\\
      
0 & \text{ if }0\leq\overline{x}_{-i}\leq m-\frac{\alpha_i}{1-\alpha_i}\text{ (curve, no effort).}      
   \end{cases}
\]

\end{proposition}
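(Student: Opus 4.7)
The plan is to take the piecewise best-response formula from Lemma \ref{br} and pass to the limit $n \to \infty$ termwise on each of the four cases. All endpoints are rational in $n$: both $nm/(n-1)$ and $(nm-1)/(n-1)$ tend to $m$, so by Proposition \ref{tripartite} the make-or-break strip collapses to the single point $\{m\}$. The curve-broken region $(nm/(n-1),1]$ then limits to $(m,1]$; the curve-made, interior-optimum region $[nm/(n-1)-\alpha_i/(1-\alpha_i),\,(nm-1)/(n-1))$ limits to $[m - \alpha_i/(1-\alpha_i),\,m)$; and the no-show region limits to $[0,\,m - \alpha_i/(1-\alpha_i)]$. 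On the partial-effort strip, the finite-$n$ formula $\alpha_i - (1-\alpha_i)(nm/(n-1) - \overline{x}_{-i})$ converges pointwise (in fact, uniformly) to $\alpha_i - (1-\alpha_i)(m - \overline{x}_{-i})$, which is exactly the formula in the statement.

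Next I would verify that the three surviving pieces (full effort, partial effort, no effort) agree on their common boundaries, so that the collapsed make-or-break point $\{m\}$ produces no ambiguity in the limit. Evaluating the partial-effort expression at $\overline{x}_{-i} = m$ gives $\alpha_i - (1-\alpha_i)(0) = \alpha_i$, which matches the full-effort piece; evaluating at $\overline{x}_{-i} = m - \alpha_i/(1-\alpha_i)$ gives $\alpha_i - (1-\alpha_i)\cdot\alpha_i/(1-\alpha_i) = 0$, which matches the no-show piece. Hence $BR_i^{\infty}$ is continuous and single-valued everywhere. Isotonicity then follows immediately: the zero-effort and full-effort pieces are constant, and the intermediate piece is affine in $\overline{x}_{-i}$ with slope $1-\alpha_i > 0$, with the three pieces arranged in ascending order of $\overline{x}_{-i}$.

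The only potential obstacle is the shrinking make-or-break strip. On this strip the finite-$n$ best response is the $\argmax$ over the three critical points in $C(x_{-i})$, and one must rule out the possibility that this $\argmax$ oscillates between $0$ and $\alpha_i$ as $n\to\infty$, which would produce a jump at $\overline{x}_{-i}=m$. But the matching-at-the-boundary computation above shows that the two neighboring pieces already agree on the value $\alpha_i$ at $\overline{x}_{-i} = m$ (approached from the curve-broken side and from the partial-effort side), and agree on the value $0$ at $\overline{x}_{-i} = m - \alpha_i/(1-\alpha_i)$ (approached from the partial and no-show sides). Consequently every limit point of finite-$n$ best responses must coincide with $BR_i^{\infty}(x_{-i})$ as stated, and the limiting correspondence is indeed the single-valued isotone function displayed in the proposition.
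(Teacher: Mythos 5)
Your proposal is correct and takes essentially the same route as the paper, whose entire proof is the one-line remark that the result is immediate upon letting $n\to\infty$ in the piecewise formula of Lemma \ref{br}. Your added boundary-matching computations (showing the three limiting pieces agree at $\overline{x}_{-i}=m$ and at $\overline{x}_{-i}=m-\alpha_i/(1-\alpha_i)$, so the collapsing make-or-break strip and the vanishing jump at $J_i$ cause no ambiguity) simply make explicit what the paper leaves to the reader.
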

The proof of Proposition \ref{myprop} is immediate, by taking $n\to\infty$ in the best response curve $BR_i^{(n)}(\bullet)$ that was constructed in Lemma \ref{br}.  For large $n$, the graph of the correspondence $BR^{(n)}_i:[0,1]^{n-1}\rightrightarrows[0,1]$ will be practically indistinguishable from that of the function $BR_i^{\infty}(x_{-i})$ that is given in the proposition.

\begin{figure}[t]
\begin{center}
\includegraphics[height=200px]{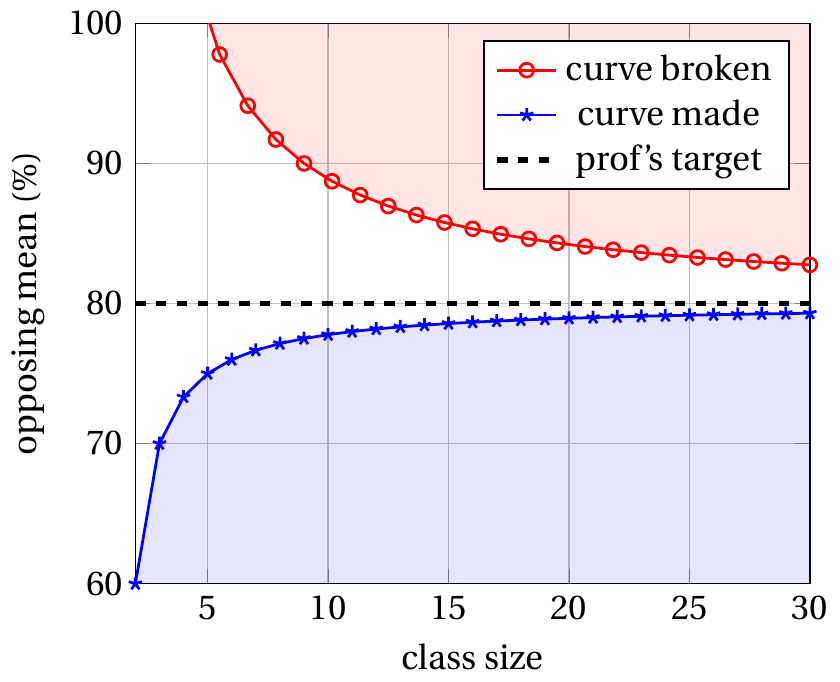}
\caption{\sc The Make-or-Break Region $\overline{x}_{-i}\in\left[\left(nm-1\right)/(n-1),nm/(n-1)\right]$ for different class sizes $n\in\left[2,30\right]\cap\mathbb{Z}$, assuming that $m:=80\%$ is the professor's target mean.  If the average effort of student $i$'s opponents lies in this interval, then student $i$'s raw score can make or break the curve.  If $\overline{x}_{-i}$ lies outside of this region, then the curve has already been made or broken by the ex-$i$ players.}
\end{center}
\end{figure}

The next Proposition uses the supermodularity of $\Gamma_n$ to show that all of player $i$'s strategies that lie to the left of his best reply to the zero vector are strictly dominated;  this explains why they are never best responses in the reaction correspondence that was given above. 

\begin{proposition}\label{dominated2}
For each player $i$, all the pure strategies in the interval\footnote{This interval may or may not be empty, depending on the parameters of the model.  If it is empty, then the Proposition is vacuously true.} 
\begin{equation}
\left[0,\alpha_i-\left(1-\alpha_i\right)\frac{nm}{n-1}\right)=\left[0,BR_i(\mathbf{0})\right)
\end{equation}are strictly dominated by $\alpha_i-\left(1-\alpha_i\right)nm/\left(n-1\right)$, which is the best response to the zero vector $x_{-i}=\mathbf{0}_{n-1}=\left(0,...,0\right)$.
\end{proposition}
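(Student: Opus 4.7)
The plan is to pin down a single monotonicity property, after which strict dominance drops out in one line: for every opponent profile $x_{-i}\in[0,1]^{n-1}$, the payoff $x_i\mapsto U_i(x_i,x_{-i})$ should be \emph{strictly increasing} on the segment $[0,x_i^{\star}]$, where $x_i^{\star}:=\alpha_i-(1-\alpha_i)nm/(n-1)$ denotes the candidate dominating strategy. (When this interval has empty interior, the Proposition is vacuous, as its own footnote points out.) Once the monotonicity is in hand, every $x_i'\in[0,x_i^{\star})$ will satisfy $U_i(x_i',x_{-i})<U_i(x_i^{\star},x_{-i})$ uniformly in $x_{-i}$, which is precisely the strict dominance we need.

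To establish the monotonicity, I would decompose $U_i$ using the tripartite partition of Proposition \ref{tripartite}, together with the ``curve-on'' and ``no-curve'' payoff pieces
\[
f_1(x_i,x_{-i}):=\left(m+\frac{n-1}{n}(x_i-\overline{x}_{-i})\right)^{\alpha_i}(1-x_i)^{1-\alpha_i},\qquad f_2(x_i):=x_i^{\alpha_i}(1-x_i)^{1-\alpha_i}.
\]
A direct log-calculation shows that each is strictly concave in $x_i$ (being a sum of two strictly concave logarithms), with respective unique maximizers $z(x_{-i}):=\alpha_i-(1-\alpha_i)\left(nm/(n-1)-\overline{x}_{-i}\right)$ --- already produced by the first-order condition in Lemma \ref{br} --- and $\alpha_i$. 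Since $z(\cdot)$ is linearly increasing in $\overline{x}_{-i}\geq 0$, it satisfies $z(x_{-i})\geq z(\mathbf{0})=x_i^{\star}$, and plainly $x_i^{\star}<\alpha_i$; hence both $f_1(\cdot,x_{-i})$ and $f_2(\cdot)$ are strictly increasing throughout $[0,x_i^{\star}]$.

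Finally, I would patch the three opponent regimes together. When $\overline{x}_{-i}>nm/(n-1)$ (the curve is broken by the non-$i$ players), $U_i\equiv f_2$ and the monotonicity follows from the strict increase of $f_2$. When $\overline{x}_{-i}<(nm-1)/(n-1)$ (the curve is made by the non-$i$ players), $U_i\equiv f_1$, and the same conclusion holds. In the make-or-break regime $\overline{x}_{-i}\in[(nm-1)/(n-1),nm/(n-1)]$, the upper bound $\overline{x}_{-i}\leq nm/(n-1)$ is precisely what forces $m-(n-1)\overline{x}_{-i}/n\geq 0$, which keeps $f_1$'s base non-negative and thereby entitles us to the pointwise identity $U_i=\max\{f_1,f_2\}$; the strict monotonicity of each piece on $[0,x_i^{\star}]$ is inherited by the max, and the dominance claim follows. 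The main --- admittedly mild --- obstacle is precisely this last sub-case, where one must verify that $f_1$'s base stays non-negative in order to legitimize the max decomposition; everything else reduces to routine Cobb-Douglas bookkeeping.
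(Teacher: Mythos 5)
Your proof is correct, but it takes a genuinely different route from the paper's. The paper dispatches this Proposition in two lines using the increasing-differences property established in Theorem 1: writing $\log U_i\left(BR_i(\mathbf{0}),x_{-i}\right)-\log U_i\left(x_i,x_{-i}\right)\geq\log U_i\left(BR_i(\mathbf{0}),\mathbf{0}\right)-\log U_i\left(x_i,\mathbf{0}\right)>0$, where the first inequality is supermodularity applied to $x_{-i}\geq\mathbf{0}$ and the second is the fact that $x_i$ is not a best reply to $\mathbf{0}$. You instead argue directly that $x_i\mapsto U_i(x_i,x_{-i})$ is strictly increasing on $\left[0,BR_i(\mathbf{0})\right]$ for \emph{every} opponent profile, via the decomposition $U_i=\max\{f_1,f_2\}$ into the curve-on and curve-off Cobb--Douglas pieces, each strictly log-concave with maximizer at or to the right of $BR_i(\mathbf{0})=z(\mathbf{0})$; you correctly flag that the only delicate point is the non-negativity of $f_1$'s base on the make-or-break region, which the bound $\overline{x}_{-i}\leq nm/(n-1)$ supplies (and on the curve-broken region $U_i\equiv f_2$, so no issue arises there). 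The paper's argument is shorter and illustrates the general supermodularity principle that everything below the best reply to the least opponent profile is dominated; your argument is more elementary and self-contained, proves the slightly stronger statement that the payoff is monotone on the whole interval uniformly in $x_{-i}$, and does not lean on the strictness of the optimality of $BR_i(\mathbf{0})$ against $\mathbf{0}$, which the paper's second inequality tacitly requires. Both are valid; yours trades brevity for robustness and explicitness.
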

\begin{proof}
Assume that $x_i\in\left[0,BR_i(\mathbf{0})\right)$, and form the difference (cf. with \cite{yildizslides})
\begin{equation}\label{mydiff}
\log U_i\left(BR_i(\mathbf{0}),x_{-i}\right)-\log U_i\left(x_i,x_{-i}\right)\geq\log U_i\left(BR_i(\mathbf{0}),\mathbf{0}\right)-\log U_i\left(x_i,\mathbf{0}\right)>0.
\end{equation}The first inequality ($\ge$) in (\ref{mydiff}) obtains from the fact that $\log\left( U_i(\bullet)\right)$ has increasing differences, and $x_{-i}\geq\mathbf{0}$;  the second inequality ($>$) obtains from the hypothesis that $x_i$ is not the best response to $\mathbf{0}$.  Thus, we must have $U_i\left(BR_i(\mathbf{0}),x_{-i}\right)>U_i\left(x_i,x_{-i}\right)$ for all $x_{-i}\in[0,1]^{n-1}$, so that $x_i$ is strictly dominated by $\alpha_i-(1-\alpha_i)nm/\left(n-1\right)$, and the Proposition is proved.

\end{proof}

The following useful Theorem gives exact formulas for the equilibrium behavior of $n$ pupils who all give positive effort, and yet such effort is coordinated (within the scope of our classical, non-cooperative framework) for the sake of generating a mutually beneficial exam curve.  Such coordination will allow all of the students to decrease their efforts (and enjoy more leisure time), although the exact amount of this decrease will be idiosyncratic to the individual ability levels.  As we will see below, the equilibrium behavior becomes particularly elegant and simple in the limit as the class size becomes large.

\begin{theorem}[Equilibrium Exam Curve]\label{nstudent}
In a (curved, interior) pure Nash equilibrium (\cite{nash,nash51}) of an $n$-student classroom, the average raw score $\overline{x}^*$ on the exam is given by the formula
\begin{equation}\boxed{
\overline{x}^*=1-\frac{nm}{n-1}\left(\frac{1}{\hat{\alpha}_n}-1\right),}
\end{equation}where the parameter
\begin{multline}\label{parameter}
\hat{\alpha}_n:=n\left(1-\left(\sum_{i=1}^n\frac{1}{n-\alpha_i}\right)^{-1}\right)\\
=n-\text{HarmonicMean}\left(n-\alpha_1,n-\alpha_2,...,n-\alpha_n\right)\in\left[\min_{1\leq i\leq n}\alpha_i,\max_{1\leq i\leq n}\alpha_i\right]
\end{multline}is the proper measurement of class ability.  Player $i$'s raw exam score $x_i^*$ consists in the expression
\begin{equation}\label{equilibrium}\boxed{
x_i^*=\frac{(n-1)\alpha_i-n\left(1-\alpha_i\right)\left(m-\overline{x}^*\right)}{n-\alpha_i}.}
\end{equation}
\end{theorem}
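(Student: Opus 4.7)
The plan is to start from Lemma~\ref{br}. In a curved interior equilibrium, every $x_i^*$ is strictly positive and strictly below $\alpha_i$, so each player's best-response condition must come from the ``curve made'' branch of the piecewise formula, namely
\[
x_i^* \;=\; \alpha_i - (1-\alpha_i)\!\left(\frac{nm}{n-1} - \overline{x}_{-i}^*\right).
\]
This is a coupled system of $n$ linear equations. The first step is to decouple it by substituting $\overline{x}_{-i}^* = (n\overline{x}^* - x_i^*)/(n-1)$, which turns each condition into a single scalar linear equation in $x_i^*$, parametrized only by the (as-yet-unknown) grand mean $\overline{x}^*$. The coefficient of $x_i^*$ becomes $(n-\alpha_i)/(n-1)$, and clearing denominators yields formula~(\ref{equilibrium}) directly, without having to know $\overline{x}^*$.

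The second step is to impose the consistency condition $\sum_{i=1}^n x_i^* = n\overline{x}^*$ on formula~(\ref{equilibrium}). I would introduce the single scalar $T := \sum_{i=1}^n 1/(n-\alpha_i)$ and apply the elementary rewrites $\alpha_i/(n-\alpha_i) = n/(n-\alpha_i) - 1$ and $(1-\alpha_i)/(n-\alpha_i) = 1 - (n-1)/(n-\alpha_i)$, so that each of the two sums on the right-hand side collapses to an affine function of $T$ alone. The resulting equation is linear in $\overline{x}^*$ and is solved on sight. Comparing with the definition in~(\ref{parameter}), namely $\hat{\alpha}_n = n - n/T$ (equivalently $T = n/(n-\hat{\alpha}_n)$), the answer collapses algebraically to the boxed expression $\overline{x}^* = 1 - (nm/(n-1))(1/\hat{\alpha}_n - 1)$.

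The bracketing $\hat{\alpha}_n \in [\min_i \alpha_i,\,\max_i \alpha_i]$ then follows at once from the classical fact that the harmonic mean of the positive numbers $n-\alpha_1,\ldots,n-\alpha_n$ lies between $n - \max_i \alpha_i$ and $n - \min_i \alpha_i$. The main obstacle in this plan is purely bookkeeping in the second step: if one simply expands the numerator of~(\ref{equilibrium}) and divides by $n-\alpha_i$ without the two rewrites above, the resulting sums do not obviously simplify, and the harmonic-mean structure is hidden. The trick is to force every summand into the form $(\text{constant in }i)/(n-\alpha_i) + (\text{constant in }i)$; once that is done, all sums reduce to $T$ and the entire problem becomes a single scalar linear equation in $\overline{x}^*$.
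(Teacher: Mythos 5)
Your proposal is correct and follows essentially the same route as the paper: decouple the interior best-response system by expressing $\overline{x}_{-i}^*$ through the grand total, solve each equation for $x_i^*$ with coefficient $(n-\alpha_i)/(n-1)$, then close the system by summing; your rewrite $\alpha_i/(n-\alpha_i)=n/(n-\alpha_i)-1$ is exactly the paper's identity $S_1=n(S_2-1)$ with $T=S_2$, and the harmonic-mean bracketing is handled identically. The only cosmetic difference is that the paper works with the aggregate $S=n\overline{x}^*$ rather than the mean itself.
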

\begin{proof}
Let $S:=\sum\limits_{i=1}^nx_i=S_{-i}+x_i$ denote the aggregate classroom effort.  In the curved best response condition (\ref{interior}), substituting $\overline{x}_{-i}=\left(S-x_i\right)/\left(n-1\right)$ gives us
\begin{equation}
x_i=\alpha_i-\left(1-\alpha_i\right)\frac{nm-S+x_i}{n-1},
\end{equation}so that, solving for $x_i$ in terms of $S$, we have 
\begin{equation}\label{summing}
x_i=\frac{(n-1)\alpha_i-\left(1-\alpha_i\right)\left(nm-S\right)}{n-\alpha_i}=\frac{(n-1)\alpha_i-(1-\alpha_i)nm}{n-\alpha_i}+\frac{1-\alpha_i}{n-\alpha_i}\cdot S.
\end{equation}Summing (\ref{summing}) over all students $i$, and solving for $S$, we obtain
\begin{equation}\label{simplify}
S=\frac{\sum\limits_{i=1}^n\frac{(n-1)\alpha_i-(1-\alpha_i)nm}{n-\alpha_i}}{1-\sum\limits_{i=1}^n\frac{1-\alpha_i}{n-\alpha_i}}.
\end{equation}In order to simplify (\ref{simplify}), we let
\begin{equation}\label{auxsums}
S_1:=\sum_{i=1}^n\frac{\alpha_i}{n-\alpha_i}\text{ and }S_2:=\sum_{i=1}^n\frac{1}{n-\alpha_i},
\end{equation}so that
\begin{equation}\label{sumexpr}
S=\frac{\left(n-1+nm\right)S_1-nmS_2}{1+S_1-S_2}.
\end{equation}Now, using the fact that $S_1=n(S_2-1)$, we get
\begin{multline}\label{avgeffort}
\overline{x}^*=\frac{S}{n}=\frac{1}{n-1}\left(n-1+nm-\frac{mS_2}{S_2-1}\right)=1+\frac{1}{n-1}\left(nm-\frac{m}{1-S_2^{-1}}\right)\\
=1-\frac{nm}{n-1}\left(\frac{1}{\hat{\alpha_n}}-1\right),
\end{multline}as promised.  Finally, putting $S^*=n\overline{x}^*$ in (\ref{summing}), and simplifying, we obtain the lovely expression (\ref{equilibrium}) for player $i$'s equilibrium behavior.  Note that the harmonic mean $H_n$ of the numbers $\left(n-\alpha_1,n-\alpha_2,...,n-\alpha_n\right)$ must lie in the interval 
\begin{equation}
\left[\min_{1\leq i\leq n}(n-\alpha_i),\max_{1\leq i\leq n}(n-\alpha_i)\right]=\left[n-\max\limits_{1\leq i\leq n}\alpha_i,n-\min\limits_{1\leq i\leq n}\alpha_i\right],
\end{equation}so that $\min\limits_{1\leq i\leq n}\alpha_i\leq n-H_n\leq\max\limits_{1\leq i\leq n}\alpha_i$.  Q.E.D.
\end{proof}

\begin{figure}[t]
\begin{center}
\includegraphics[height=200px]{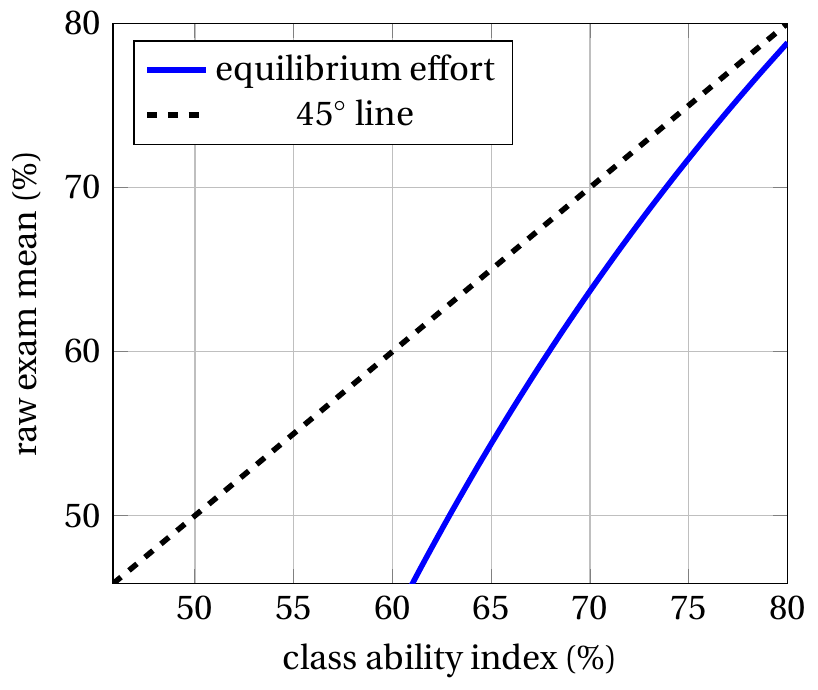}
\caption{\sc Equilibrium classroom effort $\overline{x}^*$ for different ability indices $\hat{\alpha}\leq m$, assuming a class size of $n:=18$ students and a professor's target mean of $m:=80\%$.}
\end{center}
\end{figure}

Note well that the formula (\ref{parameter}) for $\hat{\alpha}_n$ constitutes a legitimate mean (distinct from, say, the arithmetic mean) of the student quality parameters $\left(\alpha_i\right)_{i=1}^n$.  That is, we first subtract all the $\alpha_i$ from $n$;  we then take the harmonic mean $H_n$ of the resulting sequence of numbers;  finally, we subtract that number from $n$ in order to ``undo'' the initial operation $\alpha_i\mapsto n-\alpha_i$ that was fed into the harmonic mean.  Say, if the students' abilities $\alpha_i\equiv\alpha$ are all equal, then our averaging process gives $\hat{\alpha}=\alpha$, which is a sensible result.  The measurement $\hat{\alpha}$ is increasing in each parameter $\alpha_i$, viz., if $\alpha_i$ increases, then the numbers $\left(n-\alpha_1,...,n-\alpha_n\right)$ all decrease, so that their harmonic mean $H_n$ decreases, whence $n-H_n$ increases.  Since the harmonic mean $H_n(\bullet)$ is concave, our game-theoretic measure $\hat{\alpha}\left(\alpha_1,...,\alpha_n\right)$ is a convex function of the students' Cobb-Douglas parameters, since we have substituted affine functions $\alpha_i\mapsto n-\alpha_i$ into $H_n(\bullet)$, and then taken the opposite (cf. with \cite{boyd}).  The student ability index is a symmetric function of the $\alpha_i$, viz., we have $\hat{\alpha}\left(\alpha_{j_1},\alpha_{j_2}...,\alpha_{j_n}\right)\equiv\hat{\alpha}\left(\alpha_1,\alpha_2,...,\alpha_n\right)$ for any permutation $\left(j_1,j_2,...,j_n\right)$ of the students $\left\{1,2,...,n\right\}$.

\begin{figure}[t]
\begin{center}
\includegraphics[height=200px]{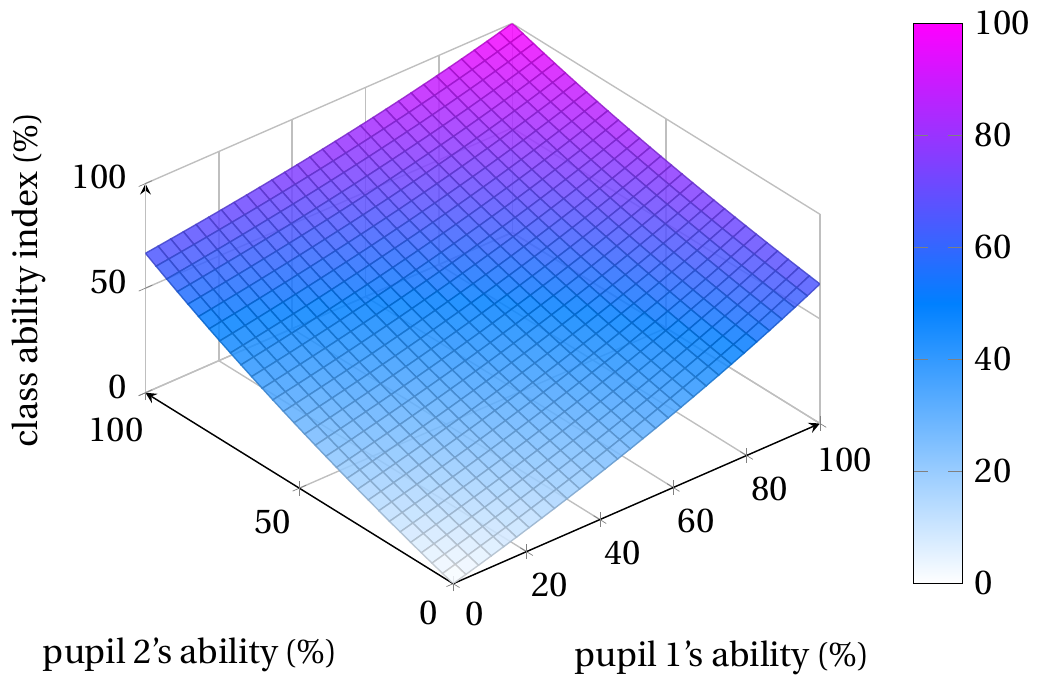}
\caption{\sc Illustration of the equilibrium ability measure $\hat{\alpha}(\bullet)$, for the case of $n:=2$ students.  Here, the game-theoretic ability index consists in the formula $\hat{\alpha}\left(\alpha_1,\alpha_2\right)=2\left(\alpha_1+\alpha_2-\alpha_1\alpha_2\right)/\left(4-\alpha_1-\alpha_2\right)$, which is an increasing, continuous, symmetric, convex function of $\left(\alpha_1,\alpha_2\right)$.  We have the diagonal values $\hat{\alpha}(z,z)\equiv z$, and the bounds $\min\left(\alpha_1,\alpha_2\right)\leq\hat{\alpha}\left(\alpha_1,\alpha_2\right)\leq\max\left(\alpha_1,\alpha_2\right)$. }
\end{center}
\end{figure}

\begin{corollary}[The Grade Inflation Theorem]\label{mycor}
As the class size $n\to\infty$, the average raw (effort) score on the exam converges to
\begin{equation}
\overline{x}^{*}_\infty=1-m\left(\frac{1}{\hat{\alpha}_\infty}-1\right),
\end{equation}where
\begin{equation}
\hat{\alpha}_\infty:=\lim_{n\to\infty}\hat{\alpha}_n
\end{equation}is the game-theoretic ability index of the student population.  Kid $i$'s equilibrium effort converges to 
\begin{equation}\boxed{
	x_i^{*\infty}=\alpha_i-(1-\alpha_i)\left(\frac{m}{\hat{\alpha}_\infty}-1\right);}
\end{equation}thus, the exam curve will asymptotically create a net change in effort $x_i^{*\infty}-\alpha_i$ that is directly proportional to $-(1-\alpha_i)$.  That is, the strongest students in the class will have the lowest absolute decrease in their effort levels.  In equilibrium, all students in the curved course will increase their leisure time by the same percentage:
\begin{equation}\boxed{
\frac{L_i^{*\infty}}{1-\alpha_i}=\frac{m}{\hat{\alpha}_\infty}.}
\end{equation}Finally, all students' grades  will get inflated by a factor of $m/\hat{\alpha}_\infty$ relative to an uncurved situation, viz., we have
\begin{equation}\boxed{
\frac{G_i^{*\infty}}{\alpha_i}=\frac{m}{\hat{\alpha}_\infty},}
\end{equation}where $G_i^{*\infty}$ is the asymptotic grade of kid $i$ as the class size becomes large.  Thus, the professor can deduce student $i$'s true ability parameter $\alpha_i$ by using the formula $\alpha_i=\hat{\alpha}_\infty G_i^{*}/m$, where $G_i^{*}$ is student $i$'s curved exam grade.
\end{corollary}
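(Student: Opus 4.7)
The plan is to deduce each of the four boxed identities directly from Theorem \ref{nstudent} by passing to the limit $n\to\infty$, taking $\hat{\alpha}_\infty:=\lim_{n\to\infty}\hat{\alpha}_n$ as the natural population analog of the finite-class ability index. First I would observe that in the average-effort formula
\begin{equation*}
\overline{x}^*=1-\frac{nm}{n-1}\left(\frac{1}{\hat{\alpha}_n}-1\right),
\end{equation*}
the factor $nm/(n-1)\to m$, and the continuity of the reciprocal away from zero (together with $\hat{\alpha}_n\in[\min_i\alpha_i,\max_i\alpha_i]\subset(0,1)$) delivers the asserted limit $\overline{x}^*_\infty=1-m(1/\hat{\alpha}_\infty-1)$ at once.

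Second, for the individual raw score I would rewrite the equilibrium expression
\begin{equation*}
x_i^*=\frac{(n-1)\alpha_i-n(1-\alpha_i)(m-\overline{x}^*)}{n-\alpha_i}
\end{equation*}
by dividing numerator and denominator by $n$. The denominator tends to $1$ (since $\alpha_i/n\to0$), the first summand in the numerator tends to $\alpha_i$, and $n(m-\overline{x}^*)/n=m-\overline{x}^*$ tends to $m-\overline{x}^*_\infty$. Substituting the value of $\overline{x}^*_\infty$ from the previous step gives $m-\overline{x}^*_\infty = m/\hat{\alpha}_\infty -1$, so that
\begin{equation*}
x_i^{*\infty}=\alpha_i-(1-\alpha_i)\left(\frac{m}{\hat{\alpha}_\infty}-1\right),
\end{equation*}
which is the second boxed identity. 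A small point to be careful about is that I am tacitly using the fact that the limiting $x_i^{*\infty}$ still corresponds to a curved interior equilibrium, i.e., the quantity $m-\overline{x}^*_\infty$ is nonnegative and the limit $x_i^{*\infty}$ is still in $(0,\alpha_i]$; this is automatic from the formula so long as the model is in the regime $m>\hat{\alpha}_\infty$.

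Third, I would compute leisure and the grade as simple algebraic consequences. The identity $L_i^{*\infty}=1-x_i^{*\infty}=(1-\alpha_i)(m/\hat{\alpha}_\infty)$ falls out by collecting terms, which gives the proportional leisure inflation $L_i^{*\infty}/(1-\alpha_i)=m/\hat{\alpha}_\infty$. For the grade, I would substitute into $G_i^{*\infty}=x_i^{*\infty}+(m-\overline{x}^*_\infty)$, using $m-\overline{x}^*_\infty=m/\hat{\alpha}_\infty-1$ once more; after cancelling $1-\alpha_i$ against $\alpha_i$ I expect a clean factorization to $G_i^{*\infty}=\alpha_i\cdot m/\hat{\alpha}_\infty$, which is precisely the grade-inflation factor. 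The inversion $\alpha_i=\hat{\alpha}_\infty G_i^*/m$ is then just a restatement.

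The only genuine obstacle is conceptual rather than computational: I want to make sure the limit $\hat{\alpha}_\infty$ is actually well defined as the class grows. If one imagines the abilities $(\alpha_i)$ as i.i.d.\ draws from some population distribution $F$ on $(0,1)$, then by the strong law of large numbers the harmonic-mean piece $(1/n)\sum 1/(n-\alpha_i)\to \mathbb{E}[1/(1-0)]$-type reasoning is misleading because the $n$ inside $n-\alpha_i$ itself grows. I would therefore justify $\hat{\alpha}_\infty$ by expanding $1/(n-\alpha_i)=(1/n)\sum_{k\ge0}(\alpha_i/n)^k$ and showing that $n-H_n\to \mathbb{E}[\alpha]$ in the i.i.d.\ setting (or, more generally, whenever $(1/n)\sum\alpha_i$ has a limit). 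This clarifies that $\hat{\alpha}_\infty$ collapses, at leading order, to the ordinary mean ability of the student population, at which point the boxed limits are unambiguous. \hfill$\blacksquare$
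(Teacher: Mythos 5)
Your proposal is correct and follows essentially the same route as the paper, which simply takes $n\to\infty$ in the formulas of Theorem \ref{nstudent} and simplifies, under the same tacit assumption that $\hat{\alpha}_\infty:=\lim_n\hat{\alpha}_n$ exists. Your closing observation that the expansion $1/(n-\alpha_i)=\tfrac{1}{n}\sum_{k\ge0}(\alpha_i/n)^k$ forces $\hat{\alpha}_\infty$ to coincide with $\lim_n\overline{\alpha}$ whenever the arithmetic mean converges is a correct (and welcome) sharpening that the paper leaves implicit.
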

The proof of Corollary \ref{mycor} consists in taking $n\to\infty$ in the $n$-student equilibrium that was derived in Theorem \ref{nstudent}, and simplifying.  Here, we are tacitly imposing an asymptotic stability condition on quality of the student population, e.g., we assume that the limit
\begin{equation}\boxed{
\hat{\alpha}_\infty=\lim_{n\to\infty}n\left(1-\left(\sum\limits_{i=1}^n\frac{1}{n-\alpha_i}\right)^{-1}\right)}
\end{equation}exists.  Thus, we have a general ``grade inflation factor'' of $m/\hat{\alpha}_\infty$;  although all students receive the same number of free points $m-\overline{x}_\infty^*$ from the curve, the weaker students decrease their effort more (in absolute terms) than do the stronger students.  This happens on account of the fact that all students increase their leisure time by a fixed percentage ($=m/\hat{\alpha}_\infty-1$):  the bottom students already take high number of leisure hours, so that the absolute change in their leisure time is high.  On the other hand, the try-hard students are increasing their leisure by the same percentage, but from a very low base.  For instance, if $m=80\%$ and $\hat{\alpha}_\infty=70\%$, then all students will take $14\%$ more leisure time than they did before, and all students' equilibrium grades will get multiplied by a factor of $1.14$.  Thus, a student who would normally score $90\%$ should receive $103\%$ in equilibrium, net of the curve, for a gain of $13$ percentage points.  On the other hand, a C student whose ability parameter is $70\%$ will wind up with $80\%$ for a net gain of only $10$ percentage points.
\par
Having derived the interior equilibrium with an exam curve, we proceed to study the fine-grained properties and character of each player's best response correspondence $BR_i:[0,1]^{n-1}\rightrightarrows[0,1]$ over the make-or-break region.  We will require the following Lemma in order to remove $x_i=0$ from the set $C(x_{-i})$ of critical points that features prominently in Lemma \ref{br}.

\begin{figure}[t]
\begin{center}
\includegraphics[height=200px]{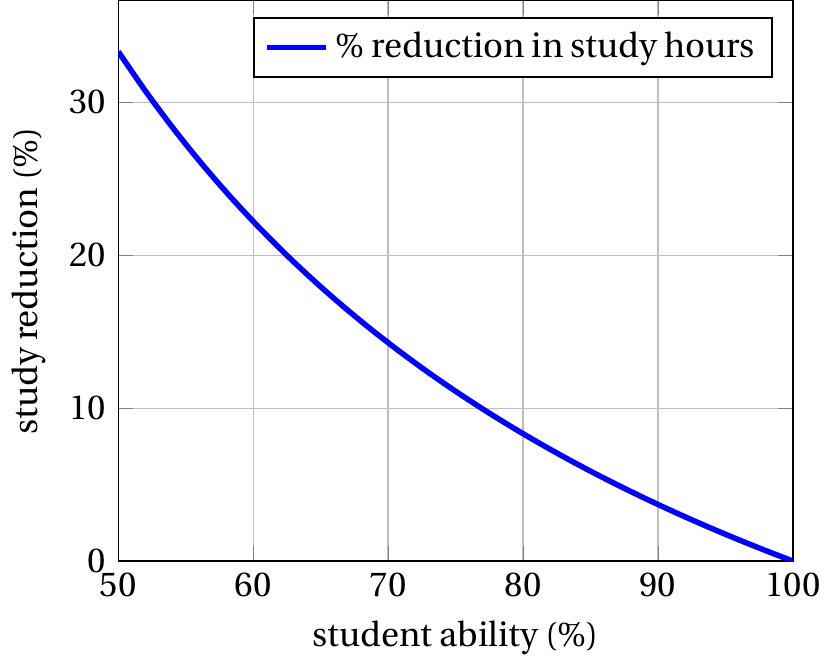}
\caption{\sc The equilibrium percentage reduction in total study hours, for different ability levels $\alpha_i\in[50\%,100\%]$.  Here, we have used the parameters $m:=80\%$ and $\hat{\alpha}:=60\%$.  The higher-performing students will see more muted reductions of their study time in equilibrium.}
\end{center}
\end{figure}

\begin{figure}[t]
\begin{center}
\includegraphics[height=200px]{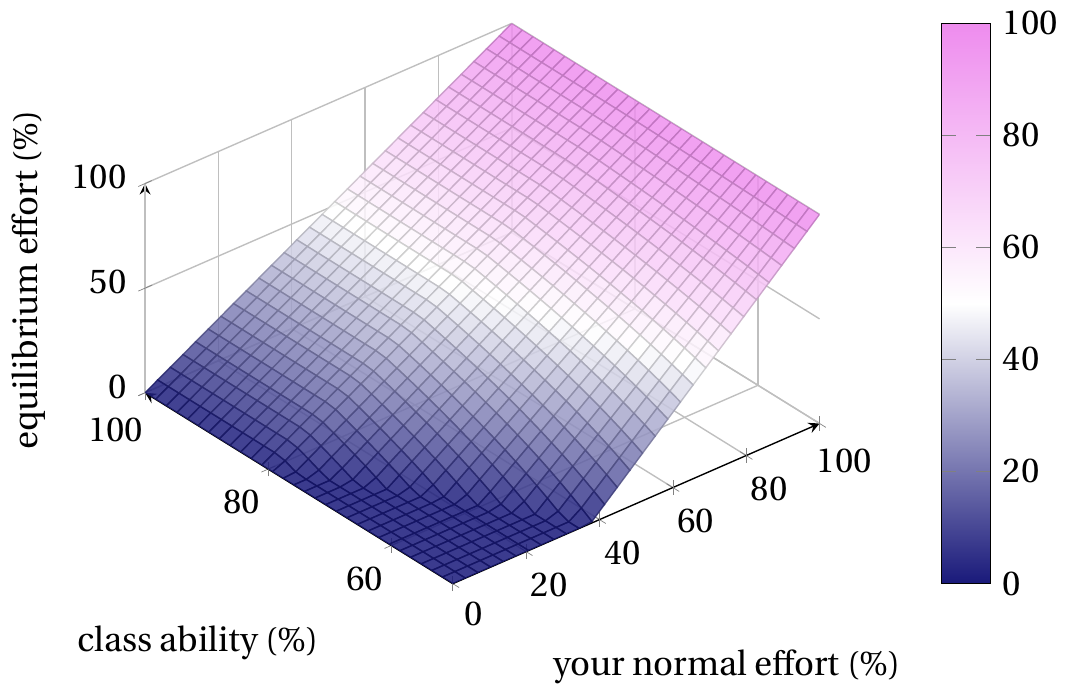}
\caption{\sc Equilibrium strategy card for Garivaltis students ($m:=80\%$, $n:=\infty$).  The lower your ``preference'' for leisure ($=1-\alpha_i$), the harder you should try.  You should study more if you believe that the ability $\hat{\alpha}$ of the class is high.  In response to the disincentives of a curve, you should increase your total leisure hours by the same \textit{\textbf{percentage}} that all your classmates increase theirs.  Good luck.}
\end{center}
\end{figure}

\begin{lemma}[Positive Effort in the Make-or-Break Region]\label{poseffort}
There is no point $x_{-i}$ in player $i$'s make-or-break region that has the following two properties simultaneously:
\begin{itemize}
\item
The non-zero critical points $C(x_{-i})\backslash\{0\}$ both break the curve\footnote{For values of $x_{-i}$ in student $i$'s make-or-break region, $x_i=0$ will always make the curve.  Thus, in this happenstance, the relevant choice for the pupil is to decide whether to break the curve with full effort, or to make the curve with zero effort.  In this particular situation, the critical point $\alpha_i-(1-\alpha_i)\left(nm/(n-1)-\overline{x}_{-i}\right)$ is inferior to $\alpha_i$, since they both break the curve.  On the other hand, if $\alpha_i-(1-\alpha_i)\left(nm/(n-1)-\overline{x}_{-i}\right)$ makes the curve, then $x_i=0$ becomes the irrelevant critical point;  this is a separate case that will be dealt with below.};
\item
$U_i(0,x_{-i})\ge U_i(\alpha_i,x_{-i})$, e.g., zero effort for agent $i$ gives higher utility than full effort.
\end{itemize}Consequently, we have $0\notin\argmax\limits_{x_i\in C(x_{-i})}U_i(x_i,x_{-i})$;  which is to say, if $x_{-i}$ lies in player $i$'s make-or-break region, then zero is not a best response to $x_{-i}$, and we have 
\begin{equation}
\argmax\limits_{x_i\in C(x_{-i})}U_i(x_i,x_{-i})\subseteq\left\{\underbrace{\alpha_i-(1-\alpha)\left(\frac{nm}{n-1}-\overline{x}_{-i}\right)}_{\text{left-hand critical point, }x_i^{(L)}},\alpha_i\right\}.
\end{equation}
\end{lemma}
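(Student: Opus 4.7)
The plan is to assume both properties simultaneously and derive $U_i(0,x_{-i})<U_i(\alpha_i,x_{-i})$, which will contradict property (ii). The key structural observation is that on the curved branch $x_i\in[0,\hat{x}_i]$, where $\hat{x}_i:=nm-(n-1)\overline{x}_{-i}$ is the kink, the grade reduces to the linear, strictly increasing function $G_i=\hat{x}_i/n+(n-1)x_i/n$, so the log-payoff $\alpha_i\log G_i+(1-\alpha_i)\log(1-x_i)$ is strictly concave in $x_i$ with unconstrained maximizer exactly at $x_i^{(L)}$.

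First I would unpack property (i) as the pair of algebraic inequalities $x_i^{(L)}\geq\hat{x}_i$ and $\alpha_i\geq\hat{x}_i$. The first places the unconstrained curved-branch optimum at or above the kink, so by strict concavity the map $x_i\mapsto U_i(x_i,x_{-i})$ is strictly monotone increasing over the entire interval $[0,\hat{x}_i]$. Since the curve is exactly neutral at the kink, $G_i(\hat{x}_i,x_{-i})=\hat{x}_i$, and I obtain
\begin{equation}
U_i(0,x_{-i})<U_i(\hat{x}_i,x_{-i})=\hat{x}_i^{\alpha_i}(1-\hat{x}_i)^{1-\alpha_i}.
\end{equation}

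Next, because the Cobb-Douglas map $z\mapsto z^{\alpha_i}(1-z)^{1-\alpha_i}$ attains its unique maximum on $[0,1]$ at $z=\alpha_i$, and the second half of property (i) gives $\hat{x}_i\leq\alpha_i$, one has $\hat{x}_i^{\alpha_i}(1-\hat{x}_i)^{1-\alpha_i}\leq\alpha_i^{\alpha_i}(1-\alpha_i)^{1-\alpha_i}$. Because $\alpha_i$ itself breaks the curve by hypothesis, the right-hand side is precisely $U_i(\alpha_i,x_{-i})$. Chaining the two bounds yields $U_i(0,x_{-i})<U_i(\alpha_i,x_{-i})$, which directly contradicts property (ii) and finishes the main assertion.

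The ``consequently'' conclusion then follows by a short case split on the status of $x_i^{(L)}$. When both non-zero critical points break the curve, the main assertion forces $U_i(0,x_{-i})<U_i(\alpha_i,x_{-i})$ and removes $0$ from the argmax over $C(x_{-i})$. In the complementary sub-case where $x_i^{(L)}$ instead makes the curve (so $0<x_i^{(L)}<\hat{x}_i$), the same strict concavity of the curved-branch payoff places its unique maximum on $[0,\hat{x}_i]$ squarely at $x_i^{(L)}$, so $U_i(0,x_{-i})<U_i(x_i^{(L)},x_{-i})$ rules out zero directly. The most delicate part of the plan is preserving the strictness of the key inequality in the degenerate boundary configurations: $\hat{x}_i=0$ is trivial (since then $U_i(0,x_{-i})=0$), whereas the coincidences $\hat{x}_i=x_i^{(L)}$ and $\hat{x}_i=\alpha_i$ must be handled by carefully appealing to the strict concavity of the log-payoff and the strict unimodality of the Cobb-Douglas function, so that no ties slip in at either end of the chain.
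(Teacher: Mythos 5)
Your proof is correct, and it takes a genuinely different route from the paper's. The paper translates each bullet into an explicit interval for $\overline{x}_{-i}$ --- the first bullet gives $\overline{x}_{-i}\geq nm/(n-1)-\alpha_i/(n-\alpha_i)$ and the second gives $\overline{x}_{-i}\leq\frac{n}{n-1}\left(m-\alpha_i(1-\alpha_i)^{(1-\alpha_i)/\alpha_i}\right)$ --- and shows the two intervals are disjoint by reducing their compatibility to the parameter inequality $(n-\alpha_i)(1-\alpha_i)^{(1-\alpha_i)/\alpha_i}\leq(n-1)/n$, which it then refutes by a calculus study of $f_n(z):=(n-z)(1-z)^{(1-z)/z}$ (monotonicity plus the limit $n/e$ at $z=0$, via L'H\^{o}pital and $e^{\zeta}\geq1+\zeta$). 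You instead chain through the kink: under the first bullet the curved-branch log-payoff, being strictly concave with stationary point $x_i^{(L)}\geq\hat{x}_i$, is strictly increasing on $[0,\hat{x}_i]$, so $U_i(0,x_{-i})<U_i(\hat{x}_i,x_{-i})=\hat{x}_i^{\alpha_i}(1-\hat{x}_i)^{1-\alpha_i}\leq\alpha_i^{\alpha_i}(1-\alpha_i)^{1-\alpha_i}=U_i(\alpha_i,x_{-i})$, the middle equality using that the curve is exactly neutral at the kink and the last inequality using $\hat{x}_i\leq\alpha_i$ together with unimodality of $z\mapsto z^{\alpha_i}(1-z)^{1-\alpha_i}$. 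This is shorter and more conceptual: it needs no transcendental inequality and makes visible \emph{why} zero effort loses (it is dominated along the entire curved branch up to the kink, and the kink is in turn dominated by $\alpha_i$), whereas the paper's computation buys an explicit quantitative separation of the two $\overline{x}_{-i}$-intervals. Your boundary bookkeeping is right: only the first link of the chain needs to be strict, and the degenerate case $\hat{x}_i=0$ is handled by $U_i(0,x_{-i})=0$. One caveat, which afflicts the paper's own Remark equally and is not a defect introduced by your argument: the ``consequently'' clause tacitly assumes $x_i^{(L)}\geq0$; when $\alpha_i<1/n$ the no-show region can intersect the make-or-break region, $x_i^{(L)}$ is excised from $C(x_{-i})$ by the truncation to $[0,1]$, and zero can then genuinely be the best response, consistent with the no-show branch of Theorem~\ref{jumptheorem}.
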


\begin{proof}
Assume that the stated conditions on $x_{-i}$ are all in effect;  we proceed to derive a contradiction, i.e., the intersection of all these constraints on $x_{-i}$ is the empty set.  The non-zero critical points $x_i\in C(x_{-i})\backslash\{0\}$ will both break the curve if and only if the leftmost\footnote{The factor $\left(nm/\left(n-1\right)-\overline{x}_{-i}\right)$ in the expression for $x^{(L)}_i$ is non-negative when $\overline{x}_{-i}$ lies in the make-or-break-region.} point $x_i^{(L)}:=\alpha_i-(1-\alpha_i)\left(nm/\left(n-1\right)-\overline{x}_{-i}\right)$ does, because if some value of $x_i$ breaks the curve, then all higher values do as well.  Now, according to Proposition \ref{complement}, such breakage is characterized by an effort level $x_i$ that is higher than $\hat{x}_i=nm-(n-1)\overline{x}_{-i}$.  Thus, solving the inequality
\begin{equation}
\alpha_i-\left(1-\alpha_i\right)\left(\frac{nm}{n-1}-\overline{x}_{-i}\right)\ge nm-(n-1)\overline{x}_{-i}
\end{equation}for $\overline{x}_{-i}$, we get the simplified condition
\begin{equation}\label{simpcondition}
\overline{x}_{-i}\geq\frac{nm}{n-1}-\frac{\alpha_i}{n-\alpha_i}
\end{equation}that expresses our first bullet point above.  Now, the second bullet point says that
\begin{equation}
\left(m-\frac{n-1}{n}\overline{x}_{-i}\right)^{\alpha_i}\ge \alpha_i^{\alpha_i}(1-\alpha_i)^{1-\alpha_i},
\end{equation}which, when solved for $\overline{x}_{-i}$, means that
\begin{equation}\label{meansthat}
\overline{x}_{-i}\leq\frac{n}{n-1}\left(m-\alpha_i(1-\alpha_i)^{\frac{1-\alpha_i}{\alpha_i}}\right).
\end{equation}If it is possible to have the inequalities (\ref{simpcondition}) and (\ref{meansthat}) hold simultaneously, then the model parameters $\theta:=\left(\alpha_1,...,\alpha_n,m\right)\in\Theta$ must satisfy the condition
\begin{equation}\label{proved}
(n-\alpha_i)(1-\alpha_i)^{\frac{1-\alpha_i}{\alpha_i}}\leq\frac{n-1}{n},
\end{equation}which is impossible, as we will show presently.  To that end, let $f_n(z):=(n-z)\left(1-z\right)^{\frac{1-z}{z}}$.  We will demonstrate that $\lim\limits_{z\to0}f_n(z)=n/e>1-1/n$, and that $f_n(\bullet)$ is an increasing function over the interval $[0,1]$, whence the inequality (\ref{proved}) will be false for all possible parameter values.  Note that $\lim\limits_{z\to0}(1-z)^{\frac{1-z}{z}}=1/e$, since, taking logs, we have
\begin{equation}
\lim\limits_{z\to0}\left(\frac{\log(1-z)}{z}-\log(1-z)\right)=\lim\limits_{z\to0}\frac{-1}{1-z}=-1,
\end{equation}by L'H\^{o}pital's rule.  Now, the inequality $n/e>1-1/n$ is equivalent to saying that $e<n+1+\frac{1}{n-1}$, which is true for all $n\geq2$.
\par
As to the fact that $f_n(\bullet)$ is increasing, we reckon that
\begin{equation}\label{reckon}
\frac{d}{dz}\left[\log f_n(z)\right]=-\frac{1}{z}\left(\frac{n}{n-z}+\frac{\log(1-z)}{z}\right).
\end{equation}Now, let $g_n(z):=nz+(n-z)\log(1-z)$.  We will show that $g_n(z)\leq0$ for all $z\in[0,1)$.  Note that $g_n(0)=0$, and that $g_n(\bullet)$ is decreasing over the interval $[0,1)$.  The reason for this is as follows.  After differentiating $g_n(\bullet)$, we obtain
\begin{equation}
g_n'(z)=-(n-1)\frac{z}{1-z}+\log\left(\frac{1}{1-z}\right).
\end{equation}Thus, in order to prove that $g_n'(z)\le0$ for all $n\geq2$, it suffices to demonstrate the truth of the relation for $n=2$, since $g_n'(z)$ is decreasing in $n$ for all $z\in[0,1)$.  After a bit of re-arranging, the statement that $g_n'(z)\le0$ is equivalent to the assertion that $e^{z/(1-z)}\ge1/(1-z)$.  Finally, then, in the bound $e^{\zeta}\geq1+\zeta$, which holds good over the whole real axis\footnote{The graph of the convex function $\zeta\mapsto e^{\zeta}$ must lie above all of its tangents, in particular, the line $\zeta\mapsto1+\zeta$.}, we put $\zeta:=z/(1-z)$ in order to obtain the \textit{coup de gr\^{a}ce}
\begin{equation}
\exp\left(\frac{z}{1-z}\right)\ge1+\frac{z}{1-z}=\frac{1}{1-z},
\end{equation}and our Lemma is hereby established.
\end{proof}

\begin{remark}
If the non-zero critical points $C(x_i)\backslash\{0\}$ do not both break the curve, then the left-hand point $x_i^{(L)}=\alpha_i-(1-\alpha_i)\left(nm/(n-1)-\overline{x}_{-i}\right)$ is guaranteed to make the curve;  for, if $x_i=\alpha_i$ makes the curve, then so too does every smaller value of $x_i$.  In that case, $x_i=0$ is not the best response, since $x_i^{(L)}$ uniquely satisfies the first order condition of the log-concave program (\ref{objective}).  Thus Lemma \ref{poseffort} rules out the only possible situation where we could have had a zero-effort best response over the make-or-break region, and we are free to focus our attention on the remaining pair of critical points.  
\end{remark}

On the strength of the all the foregoing theory, we are at last in a position to give a definitive, final formula for each player's best response correspondence, a formula which is rich in its consequences.

\begin{theorem}[Pinpointing the Jump in the Reaction Correspondence]\label{jumptheorem}
The function  
\begin{equation}\label{pinpoint}\boxed{
\phi_i(z):=\left(m+\frac{n-1}{n}\left(1-z\right)\right)^{\alpha_i}\left(1+\frac{nm}{n-1}-z\right)^{1-\alpha_i}-1}
\end{equation}has a unique zero, $J_i$, in the interval
\begin{equation}\label{interval}
\left[\frac{nm-\alpha_i}{n-1},\frac{nm}{n-1}-\frac{\alpha_i}{n-\alpha_i}\right],
\end{equation}which is a subset of player $i$'s make-or-break region.  If student $i$'s opponents play an action profile $x_{-i}$ whose sample mean is $J_i$ (viz., $\overline{x}_{-i}=J_i=\phi_i^{-1}(0)$), then agent $i$ is exactly indifferent between the pair of non-zero critical points $C(x_{-i})\backslash\{0\}=\left\{x_i^{(L)},\alpha_i\right\}$, meaning that $U_i\left(x_i^{(L)},x_{-i}\right)=U_i\left(\alpha_i,x_{-i}\right)$.  Against such values\footnote{In general, there is a continuum (with $n-2$ degrees of freedom) of opponent action profiles $x_{-i}$ that have a sample mean equal to $J_i$.  This generates a tear, or bifurcation, in player $i$'s best response correspondence.  In the two-person game, whereby the sample mean $\overline{x}_{-i}$ is just $x_{-i}$ itself, $x_{-i}=J_i$ will be the lone point at which there is a jump in the graph of student $i$'s best response correspondence.} of $x_{-i}$, the left-hand critical point $x_i^{(L)}$ always makes the curve, and the right-hand critical point $\alpha_i$ always breaks the curve.  For $\overline{x}_{-i}<J_i$ the unique best response is $x_i^{(L)}$, which makes the curve, and for $\overline{x}_{i}>J_i$ the unique best response is $\alpha_i$, which breaks the curve.  Thus, the complete formula for player $i$'s best response correspondence is given by

\[\boxed{ BR_i\left(x_{-i}\right)=\begin{cases}
      \alpha_i & \text{ if  }J_i<\overline{x}_{-i}\leq1\text{   (no-curve region)}\\

\left\{\alpha_i-(1-\alpha_i)\left(\frac{nm}{n-1}-J_i\right),\alpha_i\right\}& \text{ if }\overline{x}_{-i}=J_i\text{  (indifference hyperplane)}\\
      
      \alpha_i-\left(1-\alpha_i\right)\left(\frac{nm}{n-1}-\overline{x}_{-i}\right) & \text{ if } \frac{nm}{n-1}-\frac{\alpha_i}{1-\alpha_i}\leq\overline{x}_{-i}<J_i\text{  (curve region)}\\
      
0 & \text{ if }0\leq\overline{x}_{-i}\leq\frac{nm}{n-1}-\frac{\alpha_i}{1-\alpha_i}\text{ (no-show region).}      
   \end{cases}
}\]

\end{theorem}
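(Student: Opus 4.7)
The plan is to interpret $\phi_i(z)$ as $U_i(x_i^{(L)},x_{-i})/U_i(\alpha_i,x_{-i})-1$ in the regime where $\alpha_i$ breaks the curve and the candidate $x_i^{(L)} := \alpha_i - (1-\alpha_i)(nm/(n-1) - z)$ makes it, with $z := \overline{x}_{-i}$. I would first substitute the formula for $x_i^{(L)}$ into the curved grade and into the leisure, and observe the clean factorizations $G_i(x_i^{(L)}) = \alpha_i(m + \frac{n-1}{n}(1-z))$ and $1 - x_i^{(L)} = (1-\alpha_i)(1 + \frac{nm}{n-1} - z)$; since $U_i(\alpha_i,x_{-i}) = \alpha_i^{\alpha_i}(1-\alpha_i)^{1-\alpha_i}$ in this regime, the prefactors $\alpha_i^{\alpha_i}(1-\alpha_i)^{1-\alpha_i}$ cancel, leaving exactly $\phi_i(z)+1$. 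Thus $\phi_i(z)=0$ encodes the indifference $U_i(x_i^{(L)},x_{-i})=U_i(\alpha_i,x_{-i})$.

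Next, I would verify that on the interval $I := [(nm-\alpha_i)/(n-1),\, nm/(n-1) - \alpha_i/(n-\alpha_i)]$ the two non-zero critical points behave as advertised and that $I$ lies in the make-or-break region. The condition $\alpha_i \geq \hat{x}_i = nm-(n-1)z$ rearranges to $z \geq (nm-\alpha_i)/(n-1)$, and the condition $x_i^{(L)} \leq \hat{x}_i$ rearranges (after the same simplification used in Lemma \ref{poseffort}) to $z \leq nm/(n-1)-\alpha_i/(n-\alpha_i)$. These two inequalities simultaneously put $I$ inside the make-or-break segment and certify the claimed ``$x_i^{(L)}$ makes, $\alpha_i$ breaks'' regime. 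Since both bases $m+\frac{n-1}{n}(1-z)$ and $1+\frac{nm}{n-1}-z$ are strictly decreasing in $z$, the function $\phi_i$ is continuous and strictly decreasing.

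For existence and uniqueness of $J_i$ I would evaluate $\phi_i$ at the endpoints and apply IVT. At the lower endpoint the factors collapse to $(n-1+\alpha_i)/n$ and $(n-1+\alpha_i)/(n-1)$, so that $\phi_i = (n-1+\alpha_i)/[n^{\alpha_i}(n-1)^{1-\alpha_i}] - 1$; the weighted AGM inequality $n^{\alpha_i}(n-1)^{1-\alpha_i} < \alpha_i n + (1-\alpha_i)(n-1) = n-1+\alpha_i$ (strict since $n \neq n-1$) forces $\phi_i > 0$. At the upper endpoint the factors collapse to $(n-1)/(n-\alpha_i)$ and $n/(n-\alpha_i)$, so $\phi_i = (n-1)^{\alpha_i} n^{1-\alpha_i}/(n-\alpha_i)-1$; AGM again gives $(n-1)^{\alpha_i} n^{1-\alpha_i} < \alpha_i(n-1)+(1-\alpha_i)n = n-\alpha_i$, hence $\phi_i<0$. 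Monotonicity plus IVT then deliver a unique zero $J_i$ in the open interior of $I$.

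Finally I would glue together the piecewise formula. For $z \in [(nm-1)/(n-1),J_i)$ the candidate $x_i^{(L)}$ dominates $\alpha_i$: either both points make the curve, in which case $x_i^{(L)}$ is the unique interior FOC maximizer, or we are in $I$ with $\phi_i(z)>0$. For $z \in (J_i, nm/(n-1)]$ the point $\alpha_i$ dominates $x_i^{(L)}$: either both break the curve and $\alpha_i$ is the uncurved FOC maximizer, or $\phi_i(z)<0$ in $I$. Throughout the make-or-break region, Lemma \ref{poseffort} rules out zero effort, so the set $C(x_{-i})\setminus\{0\}$ suffices. Splicing this classification with the curve-broken, curve-made, and no-show regimes of Lemma \ref{br} assembles the boxed piecewise formula. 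The main obstacle is the endpoint sign computation: once the algebraic collapse at the two endpoints of $I$ is recognized, and once weighted AGM is correctly marshaled in the right direction at each end, monotonicity and IVT finish the existence/uniqueness for free, and the rest is careful bookkeeping against the previously established Lemmas.
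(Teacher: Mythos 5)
Your proposal is correct and follows essentially the same route as the paper: the same identification of $\phi_i(z)+1$ with the utility ratio $U_i(x_i^{(L)},x_{-i})/U_i(\alpha_i,x_{-i})$ via the factorizations $G_i(x_i^{(L)})=\alpha_i\left(m+\tfrac{n-1}{n}(1-z)\right)$ and $1-x_i^{(L)}=(1-\alpha_i)\left(1+\tfrac{nm}{n-1}-z\right)$, the same endpoint evaluations settled by the weighted AGM inequality, monotonicity plus the intermediate value theorem for the unique root, and the same left-to-right case split of the make-or-break region with Lemma \ref{poseffort} eliminating zero effort. Your strict form of the AGM bounds even sharpens the paper's weak endpoint inequalities, placing $J_i$ in the open interior of the interval, but this is a refinement rather than a different argument.
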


\begin{proof}
To start, the interval (\ref{interval}) is clearly a subset of the make-or-break interval (\ref{complement}).  Now, $\phi_i(z)$ is a strictly decreasing function of $z$, so that it can have at most one root in the interval (\ref{interval}).  We proceed to show that $\phi_i\left(\left(nm-\alpha_i\right)/\left(n-1\right)\right)\ge1$ and $\phi_i\left(nm/\left(n-1\right)-\alpha_i/\left(n-\alpha_i\right)\right)\le1$, whence the existence of the root will obtain from the intermediate value theorem.  Thus, there lies
\begin{equation}
\phi_i\left(\frac{nm-\alpha_i}{n-1}\right)=\left(\frac{\alpha_i+n-1}{n}\right)^{\alpha_i}\left(1+\frac{\alpha_i}{n-1}\right)^{1-\alpha_i}-1=\frac{\alpha_i+n-1}{n^{\alpha_i}(n-1)^{1-\alpha_i}}-1.
\end{equation}Hence, we need to show that $n^{\alpha_i}(n-1)^{1-\alpha_i}\leq\alpha_i+n-1$;  but this is an immediate consequence of the AGM inequality applied to the geometric mean $n^{\alpha_i}(n-1)^{1-\alpha_i}$.  As to the other endpoint, we have
\begin{multline}
\phi_i\left(\frac{nm}{n-1}-\frac{\alpha_i}{n-\alpha_i}\right)=\left(\frac{n-1}{n}\left(1+\frac{\alpha_i}{n-\alpha_i}\right)\right)^{\alpha_i}\left(1+\frac{\alpha_i}{n-\alpha_i}\right)^{1-\alpha_i}-1\\
=\frac{(n-1)^{\alpha_i}n^{1-\alpha_i}}{n-\alpha_i}-1.
\end{multline}Thus, we require the fact that $(n-1)^{\alpha_i}n^{1-\alpha_i}\leq n-\alpha_i$, which follows at once by applying the AGM inequality to the geometric mean $(n-1)^{\alpha_i}n^{1-\alpha_i}$.  Hence, we have demonstrated that $\phi_i(\bullet)$ has a unique root in the interval (\ref{interval}).
\par
Now, we will address all four segments of the make-or-break region, moving from left to right.  We begin with the segment
\begin{equation}\label{seg1}
\overline{x}_{-i}\in\left[\frac{nm-1}{n-1},\frac{nm-\alpha_i}{n-1}\right].
\end{equation}For these values of $\overline{x_i}$, we have $\hat{x}_i\geq\alpha_i$ so that all the critical points of $i$'s utility $U_i\left(\bullet,x_{-i}\right)$ will make the curve.  Thus, over the segment (\ref{seg1}), agent $i$'s best response is to play $x_i^{(L)}$.  Next, we have the segment 
\begin{equation}\label{seg2}
\overline{x}_{-i}\in\left(\frac{nm-\alpha_i}{n-1},J_i\right).
\end{equation}For these values of $\overline{x}_{-i}$, we have the relations $x_i^{(L)}<\hat{x}_i<\alpha_i$.  Note that the inequality $x_i^{(L)}<\hat{x}_i$ is true for all $\overline{x}_{-i}$ such that $\overline{x}_{-i}<\frac{nm-\alpha_i}{n-1}$;  here, we have $\overline{x}_{-i}<J_i\leq\frac{nm-\alpha_i}{n-1}$.  Thus, over the segment (\ref{seg2}), $\alpha_i$ breaks the curve and $x_i^{(L)}$ makes the curve.  In this happenstance, we need to show that $U_i\left(x_i^{(L)},x_{-i}\right)>U_i\left(\alpha_i,x_{-i}\right)$.  Simplifying the ratio of these two utilities, we get the expression
\begin{equation}
\frac{U_i\left(x_i^{(L)},x_{-i}\right)}{U_i\left(\alpha_i,x_{-i}\right)}-1=\left(m+\frac{nm}{n-1}\left(1-\overline{x}_{-i}\right)\right)^{\alpha_i}\left(1+\frac{nm}{n-1}-\overline{x}_{-i}\right)^{1-\alpha_i}-1=\phi_i\left(\overline{x}_{-i}\right).
\end{equation}Hence, our goal is simply to show that $\phi_i\left(\overline{x}_{-i}\right)>0$.  This follows immediately from the fact that $\phi_i(\bullet)$ is strictly decreasing:  in the hypothesis $\overline{x}_{-i}<J_i$, we apply $\phi_i(\bullet)$ to both sides, and obtain $\phi_i\left(\overline{x}_{-i}\right)>\phi_i\left(J_i\right)=0$, by the definition of $J_i$.  Therefore, we have $BR_i\left(x_{-i}\right)=x_i^{(L)}$, as promised.
\par
Next, if $\overline{x}_{-i}=J_i$, then we clearly have $U_i\left(x_i^{(L)},x_{-i}\right)=U_i\left(\alpha_i,x_{-i}\right)$ (since $\phi_i\left(\overline{x}_{-i}\right)=0$), so that $BR_i\left(x_{-i}\right)=\left\{x_i^{(L)},\alpha_i\right\}$.  Again, this value of $\overline{x}_{-i}$ is such that $\alpha_i$ breaks the curve and $x_i^{(L)}$ makes the curve.  Moving on, we proceed with the segment
\begin{equation}\label{seg3}
\overline{x}_{-i}\in\left(J_i,\frac{nm}{n-1}-\frac{\alpha_i}{n-\alpha_i}\right].
\end{equation}Over this set of non-$i$ action profiles, we have $\phi_i\left(\overline{x}_{-i}\right)<0$, so that $BR_i\left(\overline{x}_{-i}\right)=\alpha_i$.  That is, the situation in the interval (\ref{seg3}) is that $\alpha_i$ breaks the curve and $x_i^{(L)}$ makes the curve, but we have $U_i\left(\alpha_i,x_{-i}\right)>U_i\left(x_i^{(L)},x_{-i}\right)$.  To finish the proof, we must dispose of the interval
\begin{equation}
\overline{x}_{-i}\in\left(\frac{nm}{n-1}-\frac{\alpha_i}{n-\alpha_i},\frac{nm}{n-1}\right].
\end{equation}For such values of $\overline{x}_{-i}$, all the non-zero critical points break the curve, and $x_i=0$ makes the curve.  According to Lemma \ref{poseffort}, zero effort can never be a best response in this situation.  Thus, since $x_i^{(L)}$ and $\alpha_i$ both break the curve, and $0\notin BR(x_{-i})$, student $i$'s best play is to break the curve, and we have $BR_i(x_{-i})=\alpha_i$, which completes the proof of the Theorem.
\end{proof}

\begin{figure}[t]
\begin{center}
\includegraphics[height=200px]{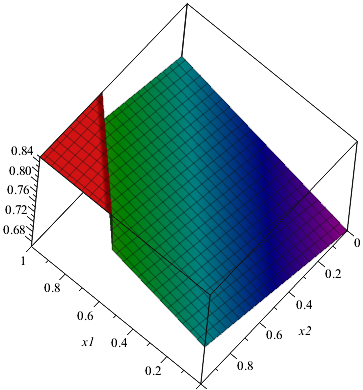}
\caption{\sc Illustration of player 3's best response surface in the 3-person game.  This example uses the parameter values $\left(\alpha_3,m\right):=\left(85\%,80\%\right)$.  The bifurcation, or jump, in the graph of $BR_3(x_1,x_2)$ occurs over the line $x_1+x_2=2J_3$.  Here, we have $J_3=\phi_3^{-1}(0)=78.9\%$.}
\end{center}
\end{figure}

\begin{figure}[t]
\begin{center}
\includegraphics[height=200px]{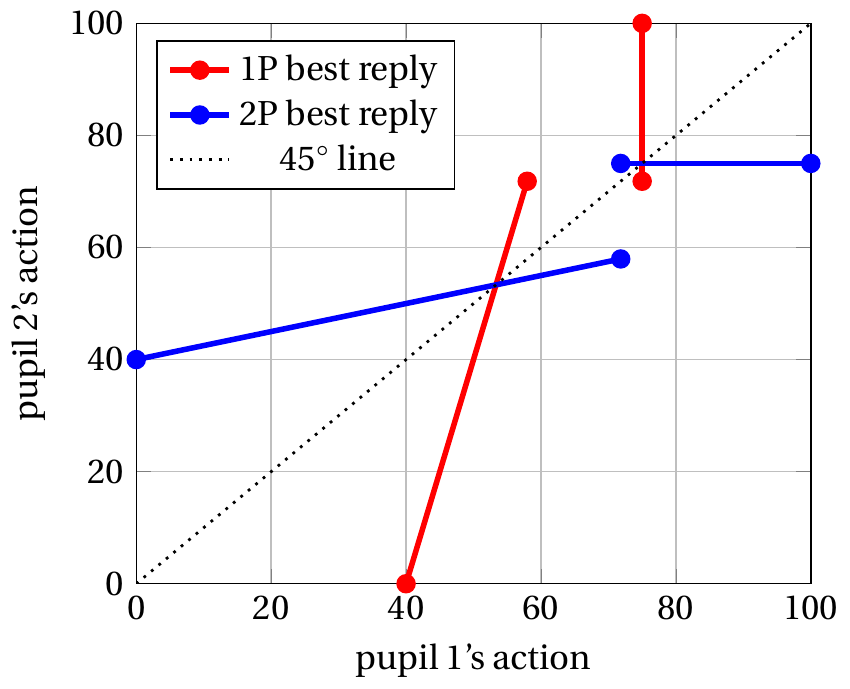}
\caption{A typical example of multiple equilibria in the 2-person game.  Here, we used the parameters $m:=70\%$ and $\alpha_1=\alpha_2=75\%$, so that the reaction correspondences are pleasantly symmetric with respect to the $45^\circ$ line.  Since $n:=2$, the opposing sample mean $\overline{x}_{-i}$ is equal to $x_{-i}$ itself.  The jump in each player's best reply occurs at $\left(x_i,x_{-i}\right)=\left(0.58,0.718\right)$.  We have the no-curve (``try-hard'') equilibrium $y^*=\left(0.75,0.75\right)$, and the curved interior equilibrium $x^*=\left(0.533,0.533\right)$.  Note well the lack of oddness (due to the non-convexities).  Both players receive lower grades under $x^*$ ($G_i=70\%$) than they do under $y^*$ ($G_i=75\%$), but the low-effort equilibrium (always) Pareto dominates the high-effort equilibrium;  we have $U_i(x^*)\equiv0.633$ and $U_i(y^*)\equiv0.57$ for $i=1,2.$}
\end{center}
\end{figure}

Note well that, although each player's best response correspondence has a closed graph, it has a non-convexity due to its jump behavior, which obtains from a lack of quasi-concavity in player $i$'s own action.  That is, at jump points $\overline{x}_{-i}=J_i$, player $i$ has a pair of distant best responses (one high effort, one low effort), and the points in between are all missing from the graph of $BR_i(\bullet)$.  Thus, the Kakutani fixed point theorem (\cite{kakutani}) does not apply;  however, the monotonic character of these best responses (which, as noted above, is a general implication of the \cite{topkispaper} theory) means that the general existence of equilibria in our university model is instead due to the Knaster-Tarski fixed point theorem (cf. with \cite{knaster,tarsky,mwg}).

Having fully elaborated the exact behavior of each player's reaction correspondence, we proceed to strengthen Corollary \ref{basic} and give conditions on the parameter vector $\theta$ that are both necessary and sufficient for the existence of a full effort (uncurved) equilibrium.  For parameter vectors $\theta=\left(\alpha_1,...,\alpha_n,m\right)\in\Theta$ that generate a no-curve equilibrium, the corresponding strategy profile $y^*=\left(\alpha_1,...,\alpha_n\right)$ will constitute the greatest equilibrium point with respect to the vector partial order $\leq$. 

\begin{theorem}[Parameter Set for the Try-Hard Equilibrium]
There exists a no-curve equilibrium (in which each kid $i$ exerts his or her fullest undominated effort $y^*_i=\alpha_i$) if and only if the parameters of the model satisfy the condition
\begin{equation}\label{necessary}\boxed{
\overline{\alpha}\ge\frac{1}{n}\max\limits_{1\leq i\leq n}\left[\left(n-1\right)J_i(\alpha_i,n,m)+\alpha_i\right].}
\end{equation}The right-hand-side of this inequality converges to the instructor's target mean, $m$, as the class size $n$ tends to infinity.
\end{theorem}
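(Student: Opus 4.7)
The plan is to invoke the complete best-response formula from Theorem \ref{jumptheorem} and translate the equilibrium condition $\alpha_i \in BR_i(y^*_{-i})$, where $y^*_{-i} = (\alpha_j)_{j\neq i}$, into an explicit inequality on the sample mean $\overline{\alpha}_{-i} := \frac{1}{n-1}\sum_{j\neq i}\alpha_j$. From the piecewise formula in Theorem \ref{jumptheorem}, we see that $\alpha_i$ is a best response to $y^*_{-i}$ if and only if $\overline{\alpha}_{-i} \ge J_i$, where $J_i := \phi_i^{-1}(0)$. Since the curve-made and no-show branches both prescribe efforts strictly below $\alpha_i$, this characterization is both necessary and sufficient.

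Next, I would use the identity $\overline{\alpha}_{-i} = (n\overline{\alpha} - \alpha_i)/(n-1)$ to convert each condition $\overline{\alpha}_{-i} \ge J_i$ into
\begin{equation}
n\overline{\alpha} \;\ge\; (n-1)J_i + \alpha_i,
\end{equation}
and then insist upon the binding constraint by taking the maximum over $i \in \{1,\ldots,n\}$. This yields precisely (\ref{necessary}). The upshot is that the try-hard profile $y^* = (\alpha_1,\ldots,\alpha_n)$ is an equilibrium iff the class mean is large enough that \emph{every} student's opponent-average already clears his individual jump point.

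The last step is to verify the asymptotic claim $\frac{1}{n}\max_i\bigl[(n-1)J_i + \alpha_i\bigr] \to m$ as $n \to \infty$. The main obstacle here will be pinning down the limit of $J_i$ as $n\to\infty$ uniformly in $i$. I would argue as follows: from the defining equation
\begin{equation}
\left(m + \tfrac{n-1}{n}(1 - J_i)\right)^{\alpha_i} \left(1 + \tfrac{nm}{n-1} - J_i\right)^{1-\alpha_i} = 1,
\end{equation}
together with the sandwich (\ref{interval}) that confines $J_i$ to an interval collapsing onto $\{m\}$ as $n\to\infty$, one gets $J_i \to m$. Substituting into $((n-1)J_i + \alpha_i)/n$ gives a limit of $m$; since each $\alpha_i \in (0,1)$ is bounded, the maximum over $i$ also tends to $m$, completing the verification.
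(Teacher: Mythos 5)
Your proposal is correct and follows essentially the same route as the paper: both arguments read off from the exact reaction correspondence of Theorem \ref{jumptheorem} that $\alpha_i\in BR_i(\alpha_{-i})$ if and only if $\overline{\alpha}_{-i}\ge J_i$, rewrite this as $n\overline{\alpha}\ge(n-1)J_i+\alpha_i$, and take the maximum over $i$. The asymptotic step is also the paper's: $J_i$ is squeezed to $m$ because it lives in the make-or-break interval whose endpoints collapse to $m$, and the $\alpha_i/n$ term vanishes.
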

\begin{proof}
Assume that there exists an uncurved equilibrium $y^*=\left(y^*_1,...,y^*_n\right)$.  Then, we must have $y^*_i=\alpha_i$ for all $i$;  as we have noted above, there can be situations where the point $\alpha_i-\left(1-\alpha_i\right)\left[nm/\left(n-1\right)-J_i\right]$ ties with $\alpha_i$, but in that happenstance, choosing the lower effort level will break the curve, contradicting our hypothesis.  Thus, we must have $y^*_i=\alpha_i$ for all $i$.  Now, reading off from our exact formula for $BR_i(\bullet)$, the only way that $\alpha_i$ can be a best response to $\alpha_{-i}$ (or a best response to anything at all) is when $\overline{\alpha}_{-i}\geq J_i$.  Thus, since $\alpha_i$ is assumed to be a best response to $\alpha_{-i}$ for all $i=1,...,n$, we must have
\begin{equation}\label{numbers}
n\overline{\alpha}\ge\left(n-1\right)J_i+\alpha_i
\end{equation}for all players $i$.  Hence, we obtain the necessary condition (\ref{necessary}).  Conversely, suppose that the inequality (\ref{necessary}) holds true.  Then re-arranging (\ref{numbers}), we have the fact that $\overline{\alpha}_{-i}\geq J_i$ for all $i$, whence $\alpha_i\in BR_i\left(\alpha_{-i}\right)$ for every player $i$, so that $\left(\alpha_1,...,\alpha_n\right)$ is an equilibrium point of $\Gamma_n$.
\par
As to the limiting behavior of the characterization (\ref{necessary}), we have
\begin{equation}\label{sequence}
\frac{1}{n}\max\limits_{1\leq i\leq n}\left[\left(n-1\right)J_i+\alpha_i\right]=\frac{n-1}{n}J_{i^*(n)}+\frac{\alpha_{i^*(n)}}{n}
\end{equation}for some index $i^*(n)$.  Since $J_{i^*(n)}$ always lies in the make-or-break region, and the endpoints of that interval tend to $m$ as $n\to\infty$, we have, by the squeezing process (cf. with \cite{lang}), $\lim\limits_{n\to\infty}J_{i^*(n)}=m$.  Similarly, we have $0\leq\alpha_{i^*(n)}/n\leq1/n$, so that $\lim\limits_{n\to\infty}\alpha_{i^*(n)}/n=0$.  Thus, the sequence (\ref{sequence}) converges to the professor's target mean as the class size becomes infinite.  Q.E.D.
\end{proof}
In a similar vein, having exactly pinpointed the location of the jumps (\ref{pinpoint}), we can read off from the best response correspondence a necessary and sufficient condition on model parameters in order that the interior, curved equilibrium (\ref{equilibrium}) obtains.  In words, the total number of free points that the professor gives away must be sufficiently large in order to guarantee that the equilibrium is curved, but sufficiently small in order to guarantee that the worst  student actually shows up to the exam.
\begin{theorem}[Parameter Set for the Curved Interior Equilibrium]
The parameter vector $\theta:=\left(\alpha_1,...,\alpha_n,m\right)\in\Theta$ generates the interior, curved equilibrium point (\ref{equilibrium}) if and only if the following condition is satisfied:
\begin{equation}\label{0dc}\boxed{
\max_{1\leq i\leq n}\left\{\left(n-\alpha_i\right)\left(\frac{nm}{n-1}-J_i\right)-\alpha_i\right\}\le n\left(m-\overline{x}^*\right)<\frac{\left(n-1\right)\alpha_{(1)}}{1-\alpha_{(1)}},}
\end{equation}where $\overline{x}^*:=1-nm\left(\hat{\alpha}_n^{-1}-1\right)/(n-1)$, $\hat{\alpha}_n:=n-\text{HarmonicMean}\left(n-\alpha_1,...,n-\alpha_n\right)$, $J_i\left(\alpha_i,n,m\right)=\phi_i^{-1}(0)$ is the location of the jump in student $i$'s best response correspondence, and $\alpha_{(1)}:=\min\limits_{1\leq i\leq n}\alpha_i$ is the first order statistic (the minimum) of the students' ability parameters.
\end{theorem}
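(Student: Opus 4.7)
The plan is to read off from Theorem \ref{jumptheorem} exactly when the candidate profile (\ref{equilibrium}) is a mutual best response. By construction, that profile has every student $i$ playing the interior critical point $x_i^{(L)}=\alpha_i-(1-\alpha_i)\bigl(nm/(n-1)-\overline{x}_{-i}^*\bigr)$ from the ``curve'' branch of the reaction correspondence. So for $(x_1^*,\ldots,x_n^*)$ to be an equilibrium of $\Gamma_n$, two things must hold for every player $i$: (a) $x_i^*>0$, so the candidate lies strictly above the no-show boundary; and (b) $\overline{x}_{-i}^*\le J_i$, so that $x_i^{(L)}$ is actually a best response (possibly tied with $\alpha_i$ on the indifference hyperplane), rather than being dominated by the try-hard branch $\alpha_i$.

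Condition (a) is immediate from the equilibrium formula. Setting $D:=n(m-\overline{x}^*)$, equation (\ref{equilibrium}) reads $x_i^*=[(n-1)\alpha_i-(1-\alpha_i)D]/(n-\alpha_i)$, so $x_i^*>0$ is equivalent to $D<(n-1)\alpha_i/(1-\alpha_i)$. Since $\alpha\mapsto\alpha/(1-\alpha)$ is strictly increasing on $[0,1)$, the binding constraint is delivered by the weakest student $\alpha_{(1)}$, recovering the right-hand inequality of (\ref{0dc}).

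Condition (b) is where the real work sits. Using $\overline{x}_{-i}^*=(n\overline{x}^*-x_i^*)/(n-1)=nm/(n-1)-(D+x_i^*)/(n-1)$, the inequality $\overline{x}_{-i}^*\le J_i$ is equivalent to $(D+x_i^*)/(n-1)\ge nm/(n-1)-J_i$. A one-line manipulation of (\ref{equilibrium}) produces the clean identity $D+x_i^*=(n-1)(D+\alpha_i)/(n-\alpha_i)$, so after cancelling $(n-1)$ and clearing $(n-\alpha_i)$ the requirement linearizes to $D\ge(n-\alpha_i)\bigl(nm/(n-1)-J_i\bigr)-\alpha_i$. Taking the maximum over $i$ yields the left-hand inequality of (\ref{0dc}).

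The converse is then a one-line appeal to Theorem \ref{jumptheorem}: under (\ref{0dc}), the profile from (\ref{equilibrium}) satisfies both $x_i^*>0$ and $\overline{x}_{-i}^*\le J_i$ for every $i$, so each coordinate is a best response and we have a pure Nash equilibrium. The bound $J_i\le nm/(n-1)-\alpha_i/(n-\alpha_i)$ established inside the proof of Theorem \ref{jumptheorem} forces the maximand on the left of (\ref{0dc}) to be nonnegative, so the hypothesis in particular guarantees $D>0$, i.e.\ $\overline{x}^*<m$, confirming that the curve is genuinely in effect. The only real obstacle I anticipate is the bookkeeping in step (b): keeping track of which denominators have been cleared while isolating the linear dependence on $D$ is where an arithmetic slip would be most likely to creep in.
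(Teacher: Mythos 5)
Your proof is correct and follows essentially the same route as the paper: both arguments read off from Theorem \ref{jumptheorem} the two requirements $\overline{x}_{-i}^*\le J_i$ and $x_i^*>0$ for every player, and translate them into the left- and right-hand inequalities of (\ref{0dc}) on the aggregate giveaway $n\left(m-\overline{x}^*\right)$. The only quibble is in your closing aside: the bound $J_i\le nm/(n-1)-\alpha_i/(n-\alpha_i)$ yields only $D\ge0$ rather than $D>0$, since $J_i$ may sit at the right endpoint of the interval (\ref{interval}); this is not load-bearing for the equivalence.
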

\begin{proof}
Based on the final, exact formula for each player's reaction correspondence that was given in Theorem \ref{jumptheorem}, the curved equilibrium play $x^*=\left(x_i^*\right)_{i=1}^n$ that was specified in (\ref{equilibrium}) obtains if and only if $x^*$ has the property that, for each player $i$, the sample mean $\overline{x^*}_{-i}$ of his or her opponents' effort is less than or equal to $J_i$.  Since the formula (\ref{equilibrium}) for $x_i^*$ is strictly increasing in $\overline{x}_{-i}$, the condition that $\overline{x^*}_{-i}\leq J_i$ is equivalent to the statement that
\begin{equation}\label{statemt}
x_i^*\leq\alpha_i-(1-\alpha_i)\left(\frac{nm}{n-1}-J_i\right).
\end{equation}Thus, combining (\ref{equilibrium}) with (\ref{statemt}), and solving the resulting inequality for $\overline{x}^*$, we obtain the relation
\begin{equation}
\overline{x}^*\leq m+\frac{1}{n}\min_{1\leq i\leq n}\left\{\alpha_i-\left(n-\alpha_i\right)\left(\frac{nm}{n-1}-J_i\right)\right\},
\end{equation}as promised.  Finally, in order to guarantee the interiority of the curved equilibrium, we must ensure that $\overline{x}^*_{-i}$ does not lie in the no-show region of any player $i$;  this is equivalent to saying that $x_i^*>0$, where $x_i^*$ is given by (\ref{equilibrium}).  Thus, we must have $n\left(\overline{x}^*-m\right)>-(n-1)\alpha_i/\left(1-\alpha_i\right)$ for all $i$, which is equivalent to the stated relation
\begin{equation}
n\left(\overline{x}^*-m\right)>-(n-1)\left(\frac{1}{1-\min\limits_{1\leq i\leq n}\alpha_i}-1\right).
\end{equation}Q.E.D.
\end{proof}

\begin{figure}[t]
\begin{center}
\includegraphics[height=200px]{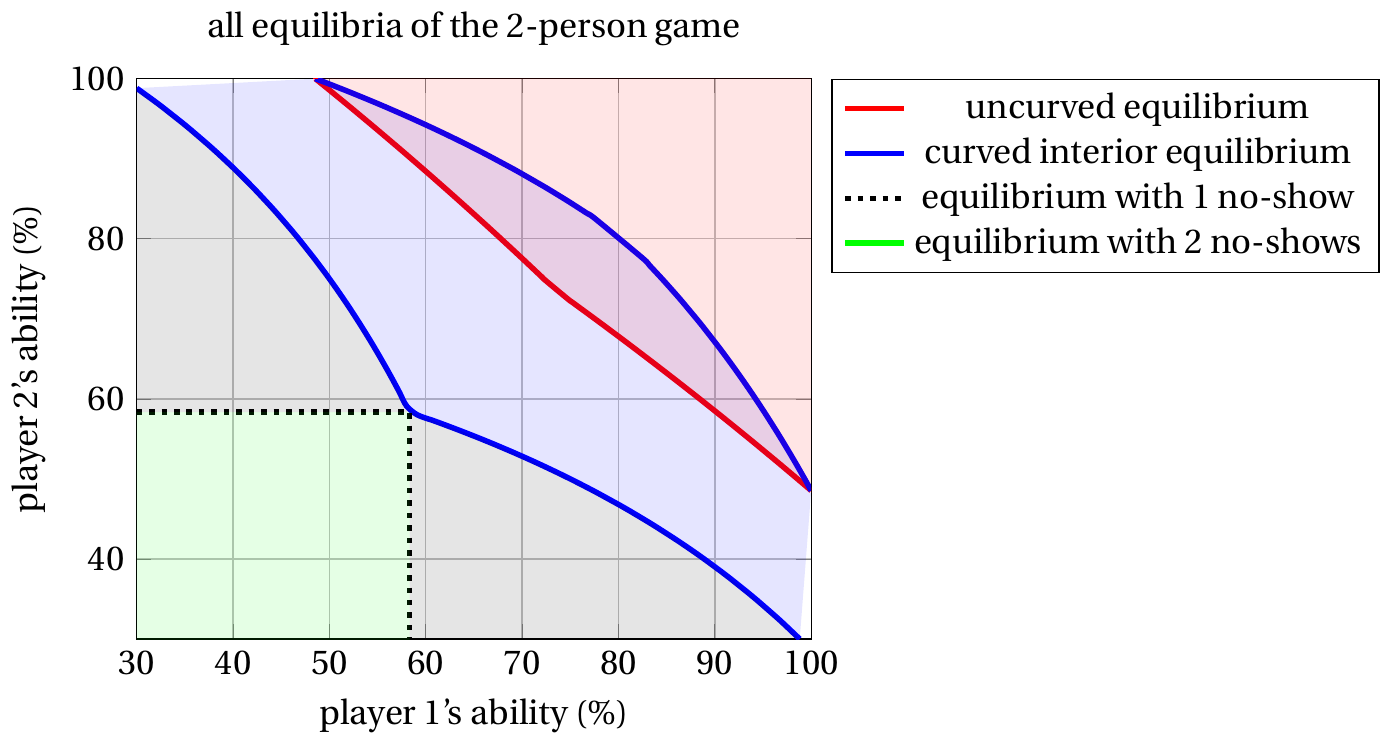}
\caption{\sc Bivariate cross section ($m:=70\%$) of the parameter space $\Theta$ for the 2-person game.  For ability vectors $\alpha:=\left(\alpha_1,\alpha_2\right)$ in the red region, a no-curve (``try-hard'') equilibrium exists;  in the blue region, we have a curved (interior) equilibrium.  In the black region, we have an equilibrium whereby the weaker student is a no-show;  in the green region, we have a 2-don't care equilibrium, with both players receiving a curved grade of $70\%$.  Note well that in the intersection of the red and blue regions, we have an even number of equilibria (viz., 2).}
\end{center}
\end{figure}

Having characterized the exact subsets of the parameter space $\Theta$ that support uncurved (``try-hard'') and curved interior (all-try) equilibria, we proceed to give formulas for all remaining types of equilibrium points, which we will refer to as ``$k$-don't care'' equilibria.  This means that the bottom $k$ students are no-shows\footnote{Such an equilibrium has obtained for very small numbers of no-shows ($k=1$ or $k=2$ didn't care) a couple of times in the lived experience of the author.  Of course, although the model will generally have multiple equilibria (depending on the parameter vector, $\theta$), equilibrium selection is up to the students;  the precise equilibrium play that obtains in the university will ultimately depend on the extent to which the agents succeed in coordinating their respective effort levels, for mutual benefit.} in the equilibrium profile $x^{*(k)}=\left(x_1^{*(k)},...,x_n^{*(k)}\right)$, i.e., they give zero effort ($x_i^{*(k)}=0$).

\begin{theorem}[Formulas for $k$-Don't Care Equilibria]\label{dontcare}
In any equilibrium strategy profile $x^*=\left(x^*_1,...,x^*_n\right)$, if some player $i$ is a no-show (meaning that $x^*_i=0$), then all weaker students (i.e., all students with lower ability parameters) will also be no-shows;  if a given player shows up to the exam ($x^*_i>0$), then all stronger students will also show up to the exam in equilibrium.  In a Nash equilibrium $x^{*(k)}$ whereby the bottom $k$ students don't care, the efforts of the top $n-k$ students will be given by the formulas
\begin{equation}\label{kdcformula}\boxed{
x_i^{*(k)}=\frac{(n-1)\alpha_i-n(1-\alpha_i)\left(m-\overline{x}^{*(k)}\right)}{n-\alpha_i},}
\end{equation}where the $k$-don't care mean $\overline{x}^{*(k)}$ consists in the expression
\begin{equation}\boxed{
\overline{x}^{*(k)}=\frac{(n-1)(m+1)S_2-\left(n-k\right)\left(m+1-1/n\right)}{(n-1)\left(S_2-1\right)+k},}
\end{equation}where 
\begin{equation}\label{sum}\boxed{
S_2:=\sum\limits_{i=k+1}^n\frac{1}{n-\alpha_{(i)}},}
\end{equation}and $\alpha_{(i)}$ denotes the $i^{th}$ order statistic of the ability vector $(\alpha_1,...,\alpha_n)$, i.e., the sum (\ref{sum}) is taken over the ability parameters of the top $n-k$ students in the class.
\end{theorem}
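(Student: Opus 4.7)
The proof splits into a structural sorting claim (no-shows are always the weakest students) and a system-solving claim (closed-form efforts via the partial-effort best reply). For the sorting, I would argue by contradiction: suppose equilibrium $x^*$ exhibits $x_i^* = 0$, $x_j^* > 0$, and $\alpha_i \geq \alpha_j$. Theorem \ref{jumptheorem}'s best-response decomposition forces $\overline{x}_{-i}^* \leq nm/(n-1) - \alpha_i/(1-\alpha_i)$ (since $i$'s best reply is zero) and simultaneously $\overline{x}_{-j}^* > nm/(n-1) - \alpha_j/(1-\alpha_j)$ (since $j$'s best reply lies outside the no-show interval). Mechanically, $\overline{x}_{-i}^* - \overline{x}_{-j}^* = (x_j^* - x_i^*)/(n-1) = x_j^*/(n-1) > 0$: omitting the zero-effort player $i$ from the average strictly \emph{inflates} the opponents' mean. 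Chaining these three inequalities yields $\alpha_j/(1-\alpha_j) > \alpha_i/(1-\alpha_i)$, whence $\alpha_j > \alpha_i$, contradicting the hypothesis. Reindexing via order statistics $\alpha_{(1)} \leq \cdots \leq \alpha_{(n)}$, the no-shows must be exactly students $1,\ldots,k$.

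With the bottom $k$ silent, I would write $S^* := \sum_{l=k+1}^n x_l^{*(k)} = n\,\overline{x}^{*(k)}$, so that $\overline{x}_{-i}^* = (S^* - x_i^{*(k)})/(n-1)$ for each active $i > k$. Every such $i$ must lie in the partial-effort branch of the reaction correspondence, giving $x_i^{*(k)} = \alpha_{(i)} - (1-\alpha_{(i)})[nm/(n-1) - \overline{x}_{-i}^*]$. Plugging in the expression for $\overline{x}_{-i}^*$, clearing the denominator $n-1$, and isolating $x_i^{*(k)}$ delivers the boxed formula (\ref{kdcformula}) after the cosmetic step $nm - S^* = n(m - \overline{x}^{*(k)})$. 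Summing (\ref{kdcformula}) over $i > k$ then produces the scalar equation
\[
S^* \;=\; (n-1)\tilde S_1 \;-\; (nm - S^*)\,\bigl[(n-k) - (n-1)S_2\bigr],
\]
where $\tilde S_1 := \sum_{i > k} \alpha_{(i)}/(n - \alpha_{(i)})$ and $S_2$ is as in (\ref{sum}).

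The algebraic key is the identity $\tilde S_1 = nS_2 - (n-k)$, which follows instantly from the decomposition $\alpha/(n-\alpha) = n/(n-\alpha) - 1$ and is the $k$-truncated analogue of the identity $S_1 = n(S_2 - 1)$ that powered the proof of Theorem \ref{nstudent}. Eliminating $\tilde S_1$ in favor of $S_2$ reduces the preceding equation to a single linear equation in $S^*$; solving it and dividing by $n$ reproduces the boxed expression for $\overline{x}^{*(k)}$ exactly as stated. The conceptually delicate step is the sorting argument, where one must carefully track the sign of $\overline{x}_{-i}^* - \overline{x}_{-j}^*$; the remaining steps are routine bookkeeping controlled by the one elementary telescoping identity.
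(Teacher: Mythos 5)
Your proposal is correct and follows essentially the same route as the paper: the sorting claim rests on the identity $\overline{x}_{-i}-\overline{x}_{-j}=(x_j-x_i)/(n-1)$ chained with the monotonicity of $\alpha\mapsto\alpha/(1-\alpha)$ (the paper argues this directly rather than by contradiction, but the inequalities are identical), and the closed forms come from summing the interior best-reply condition over the $n-k$ active students and eliminating $\tilde S_1$ via the telescoping identity $\tilde S_1=nS_2-(n-k)$, which is exactly the paper's relation $S_1=n(S_2-1)+k$. The algebra checks out and reproduces the boxed expression for $\overline{x}^{*(k)}$.
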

\begin{remark}
The $n$-don't care mean (whereby there are $n$ no-shows, and nobody cares) is $\overline{x}^{*(n)}=0$, and we have $x_i^{*(n)}\equiv0$ for all $i$.  Note that for $k=n$, we get $S_2=0$ (the empty sum), so that the numerator of (\ref{kdcformula}) is equal to zero.  On the other extreme, if we put $k=0$ (zero don't care), then we get the curved interior equilibrium (\ref{equilibrium}).
\end{remark}

\begin{proof}
First, assume that player $i$ is a no-show in the Nash equilibrium profile $x=\left(x_1,...,x_n\right)$.  Then, $\overline{x}_{-i}$ must lie in player $i$'s no-show region, viz., $\overline{x}_{-i}\leq nm/\left(n-1\right)-\alpha_i/\left(1-\alpha_i\right)$.  Now, consider some other player $j\neq i$ that has a lower ability parameter, $\alpha_j\le\alpha_1$.  Then, since $x_i=0$, we have the inequalities
\begin{equation}
\overline{x}_{-j}\leq\overline{x}_{-j}+\frac{x_j}{n-1}=\overline{x}_{-i}\le\frac{nm}{n-1}-\frac{\alpha_i}{1-\alpha_i}\leq\frac{nm}{n-1}-\frac{\alpha_j}{1-\alpha_j},
\end{equation}so that $\overline{x}_{-j}$ belongs to student $j$'s no-show region.  Thus, since pupil $j$ plays a best response to $x_{-j}$ in the equilibrium profile $x=\left(x_j,x_{-j}\right)$, we must have $x_j=0$, so that agent $j$ is also a no-show.
\par
Next, retracing our steps in the proof of Theorem \ref{nstudent}, in general we must sum equation (\ref{summing}) over all the students who show up to exam, viz., the students whose ability parameters are $\left(\alpha_{(k+1)},...,\alpha_{(n)}\right)$, assuming that the equilibrium profile $x$ contains exactly $k$ no-shows.  Thus, the auxiliary sums (\ref{auxsums}) turn into
\begin{equation}\label{generalized}
S_1:=\sum_{i=k+1}^n\frac{\alpha_{(i)}}{n-\alpha_{(i)}}\text{ and }S_2:=\sum_{i=k+1}^n\frac{1}{n-\alpha_{(i)}},
\end{equation}where we have the relation $S_1=n(S_2-1)+k$.  Then, with the generalized expressions (\ref{generalized}) in hand, the expression (\ref{sumexpr}) for the aggregate equilibrium effort gives us
\begin{equation}
\overline{x}^{*(k)}=\frac{(n-1)(m+1)S_2-(n-k)(m+1-1/n)}{(n-1)(S_2-1)+k},
\end{equation}which completes the proof.
\end{proof}

\begin{figure}[t]
\begin{center}
\includegraphics[height=200px]{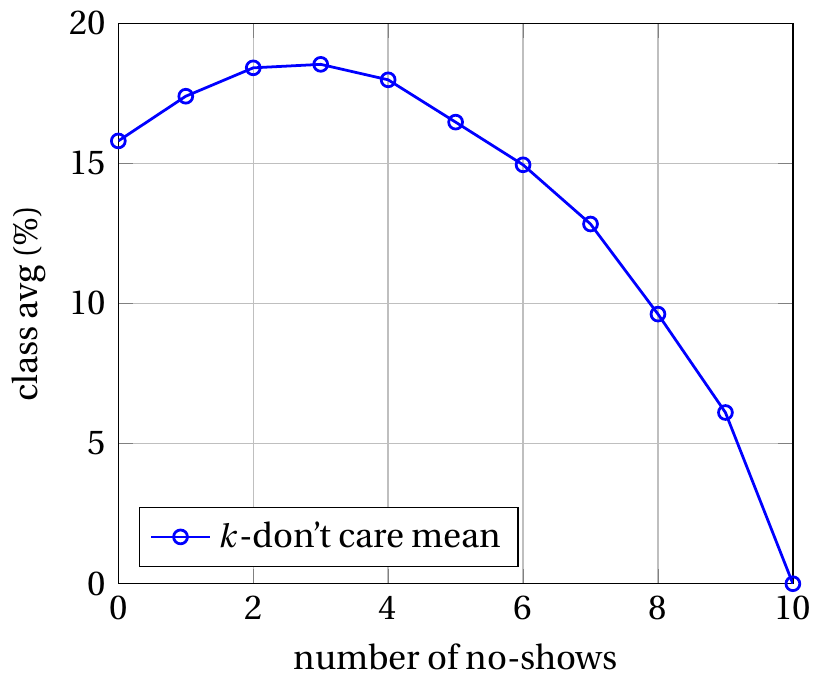}
\caption{\sc The $k$-don't care means $\overline{x}^{*(k)}$ for different values of $k$ in a 10-student course.  Here, we have used the parameter values $m:=85\%$ and $\alpha:=\left(.38,.39,.42,.45,.5,.51,.55,.62,.65,.8\right)$.  In general, the $k$-don't care means $\overline{x}^{*(k)}$ are not guaranteed to decrease monotonically in $k$, although such will often be the case in practical examples.  However, if a $k$-don't care equilibrium and an $l$-don't care equilibrium co-exist in the same model, with $k\le l$, then we always have $\overline{x}^{*(k)}\ge\overline{x}^{*(l)}$.  Be advised that this is a highly exaggerated (counter-) example;  the average ability is just $52.7\%$, and the curve is outlandishly generous.}

\end{center}
\end{figure}

Given the insights that we have just derived in Theorem \ref{dontcare} above, it becomes a simple matter to identify the exact parameter set that supports an equilibrium with $k$ no-shows;  thus, with the advent of the following Corollary, we will have fully characterized and classified all possible types of university equilibria that can occur in our model.

\begin{corollary}[Parameter Sets for $k$-Don't Care Equilibria]
The model's parameter vector $\theta\in\Theta$ supports an equilibrium with exactly $k$ no-shows (by the bottom $k$ students in the course) if and only if
\begin{equation}\label{kdcinterval}\boxed{
\alpha_{(k)}\leq1-\left(1+\frac{n}{n-1}\left(m-\overline{x}^{*(k)}\right)\right)^{-1}<\alpha_{(k+1)},}
\end{equation}where $\overline{x}^{*(k)}$ is the $k$-don't care mean, and $\alpha_{(k)},\alpha_{(k+1)}$ are the $k^{th}$ and $\left(k+1\right)^{st}$ order statistics of the ability vector, respectively.  A nobody-cares equilibrium ($k=n$) will exist if and only if $\alpha_{(n)}\le m/\left(m+1-1/n\right)$.
\begin{proof}
In the formula (\ref{kdcformula}), we require that the numerator is strictly positive for the top $n-k$ students in the course, viz., those whose ability parameters are greater than or equal to $\alpha_{(k+1)}/\left(1-\alpha_{(k+1)}\right)$.  Thus, we must have
\begin{equation}\label{ineq1}
\frac{n}{n-1}\left(m-\overline{x}^{*(k)}\right)<\frac{\alpha_{(k+1)}}{1-\alpha_{(k+1)}}.
\end{equation}On the other hand, the numerator of (\ref{kdcformula}) must be non-positive for the bottom $k$ students in the course (i.e., $\overline{x}^{*(k)}_{-i}$ belongs to player $i$'s no-show reason for $i$ such that $\alpha_i\leq\alpha_{(k)}$), so that
\begin{equation}\label{ineq2}
\frac{\alpha_{(k)}}{1-\alpha_{(k)}}\leq\frac{n}{n-1}\left(m-\overline{x}^{*(k)}\right).
\end{equation}After re-arranging and simplifying the inequalities (\ref{ineq1}) and (\ref{ineq2}), we obtain the stated characterization (\ref{kdcinterval}).  In case $k=n$, we will have a nobody-cares equilibrium if and only if the zero vector belongs to the no-show region of all players;  thus, we have an $n$-don't care equilibrium if and only if (\ref{ineq2}) holds for $k=n$.  The latter condition simplifies to $\alpha_{(n)}\leq m/\left(m+1-1/n\right)$, as promised above.  Q.E.D.  
\end{proof}
\end{corollary}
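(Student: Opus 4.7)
The plan is to translate the best-response correspondence from Lemma \ref{br} into conditions on the parameter vector $\theta$, applied to the candidate profile $x^{*(k)}$ from Theorem \ref{dontcare}. That theorem has already forced the identity of the no-shows: if exactly $k$ students give zero effort, they must be the $k$ with smallest ability parameters, so the partition of players into inactive versus active corresponds precisely to the cut at the $k^{th}$ order statistic. It then remains to express, purely in terms of $\theta$, (i) that $x_i^{*(k)}$ from (\ref{kdcformula}) is strictly positive for each of the top $n-k$ students, and (ii) that $x_i=0$ lies in the no-show branch of $BR_i$ for each of the bottom $k$ students.

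First I would compute $\overline{x}_{-i}^{*(k)} = n\overline{x}^{*(k)}/(n-1)$ whenever $x_i^{*(k)}=0$, and substitute this into the no-show inequality $\overline{x}_{-i} \leq nm/(n-1) - \alpha_i/(1-\alpha_i)$ supplied by Lemma \ref{br}. After cancellation, the no-show condition collapses to $\alpha_i/(1-\alpha_i) \leq (n/(n-1))(m - \overline{x}^{*(k)})$. On the other side, directly inspecting the numerator of (\ref{kdcformula}), strict positivity for an active student $j$ is equivalent to $\alpha_j/(1-\alpha_j) > (n/(n-1))(m - \overline{x}^{*(k)})$. Both halves of the equilibrium condition thus reduce to comparisons against a single threshold $c := (n/(n-1))(m - \overline{x}^{*(k)})$.

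Since the map $\alpha \mapsto \alpha/(1-\alpha)$ is strictly increasing on $[0,1)$, the simultaneous inequalities across all players collapse to their binding instances at $\alpha_{(k)}$ and $\alpha_{(k+1)}$, yielding $\alpha_{(k)}/(1-\alpha_{(k)}) \leq c < \alpha_{(k+1)}/(1-\alpha_{(k+1)})$. Inverting via $\alpha = c/(1+c) = 1 - (1+c)^{-1}$ reproduces the stated characterization (\ref{kdcinterval}) verbatim. The edge case $k=n$ is handled separately: with $\overline{x}^{*(n)}=0$ and no active players to constrain, only the single no-show bound $\alpha_{(n)}/(1-\alpha_{(n)}) \leq nm/(n-1)$ survives, which rearranges algebraically to $\alpha_{(n)} \leq m/(m + 1 - 1/n)$.

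The main obstacle, which is conceptually minor but deserves a careful check, is confirming that $\overline{x}_{-j}^{*(k)}$ for an active student $j$ actually lands in the pure ``curve made'' segment of Lemma \ref{br}, not in the make-or-break region where the jump $J_j$ from Theorem \ref{jumptheorem} would take over. This should follow from the consistency of the simultaneous first-order equations used in the proof of Theorem \ref{nstudent} to derive $\overline{x}^{*(k)}$ in the first place, combined with the strict positivity established in step (i). Provided this self-consistency cross-check goes through, no jump-point comparison against $J_j$ is needed for the $k$-don't care profile, and the characterization is complete.
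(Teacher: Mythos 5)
Your proposal is correct and follows essentially the same route as the paper: strict positivity of the numerator of (\ref{kdcformula}) for the active students and the no-show inequality for the inactive ones both reduce to comparing $\alpha_i/(1-\alpha_i)$ against the single threshold $\frac{n}{n-1}\left(m-\overline{x}^{*(k)}\right)$, which is then inverted via $\alpha\mapsto\alpha/(1-\alpha)$ to yield (\ref{kdcinterval}), with the $k=n$ case handled by the no-show bound alone. The one point of difference is that you explicitly flag the need to confirm that an active student's opponents' sample mean does not land above the jump $J_j$ in the make-or-break region \textemdash{} a verification the paper's own proof silently omits as well \textemdash{} so your write-up is, if anything, marginally more careful than the source on that score.
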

Note that the middle expression in (\ref{kdcinterval}), which is bracketed by $\left[\alpha_{(k)},\alpha_{(k+1)}\right)$, is strictly increasing in the aggregate number of free exam points $n\left(m-\overline{x}^{*(k)}\right)$ that are given away by the instructor in equilibrium.  Thus, the condition (\ref{kdcinterval}) says that the total number of free points must be large enough that $k$-don't care, but small enough that the $(k+1)^{st}$ weakest student in the course does care to show up for the exam and exert strictly positive effort.

\begin{figure}[t]
\begin{center}
\includegraphics[height=200px]{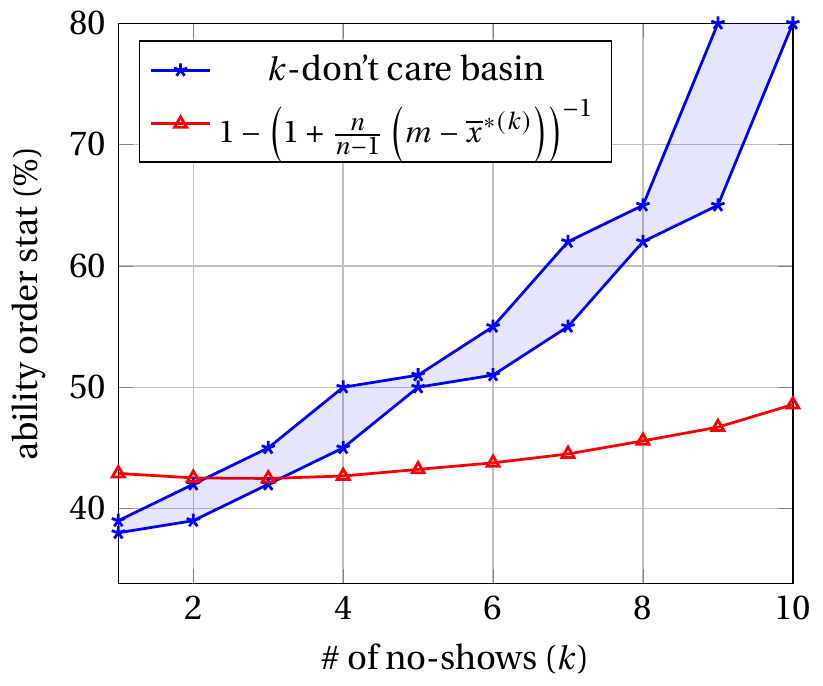}
\caption{\sc Illustration of a $3$-don't care equilibrium in a 10-student course, for the parameters $m:=85\%$ and $\alpha:=\left(.38,.39,.42,.45,.5,.51,.55,.62,.65,.8\right)$.  The number $1-\left(1+(10/9)\left(0.85-\overline{x}^{*(3)}\right)\right)^{-1}=0.4248$ lies between the third and fourth order statistics, $\alpha_{(3)}=0.42$ and $\alpha_{(4)}=0.45$.  Thus, we have a Nash equilibrium whereby the bottom 3 students are no-shows.}
\end{center}
\end{figure}

We proceed to give a complete structure theorem for the lattice of Nash equilibria (cf. with \cite{vives,berkeley}).  In  general, for a supermodular game, when we take the coordinate-wise maximum (join) or minimum (meet) of two Nash equilibria, the resulting action profile is also a Nash equilibrium.  In our particular application, we can say much more:  the Nash equilibria are totally ordered with respect to effort and with respect to Pareto preference;  the $k$-don't care equilibria form a chain in each student's commodity space, i.e., when the equilibrium number of no-shows increases, all students' grades will increase and all students' efforts will decrease.  In passing from a no-curve equilibrium to a curved interior ($0$-don't care) equilibrium (assuming that they both exist in the same model), one or more students can wind up with lower grades, but all students are guaranteed to wind up with higher utility.

\begin{theorem}[Structure of the Fixpoint Lattice]\label{comparable}
For a given model $\theta\in\Theta$, the lattice of Nash equilibria forms a chain, or linearly ordered set, with respect to vector comparisons ($\leq$) in $\mathbb{R}^n$.  That is, given two Nash equilibria $x^*$ and $y^*$, one of the two will have lower effort (and higher leisure) across the board:  either $x^*_i\leq y^*_i$ for all $i$ or else $x^*_i\geq y^*_i$ for all $i$.  
\par
The Nash equilibria of the game are totally ordered with respect to Pareto dominance;  given two equilibria $\left(x^*,y^*\right)$, all agents are better off in the low-effort equilibrium than they are in the high-effort equilibrium.  
\par
The set of $k$-don't care equilibria generates a linearly ordered set of resource allocations $\left(G_i^{*(k)},L_i^{*(k)}\right)$ that travels in a north-easterly direction in each student's commodity space $\mathbb{R}_+^2$:  given a $k$-don't care equilibrium and an $l$-don't care equilibrium, with $l>k$, all students will have higher grades and more leisure time\footnote{This is stronger than mere Pareto dominance, since all students are receiving more of both goods.  Say, when passing from the no-curve equilibrium to the $0$-don't care equilibrium (when they both exist), we are no longer guaranteed to travel north-east in each student's commodity space $\mathbb{R}^2_+$;  some or all students may wind up travelling north-west, but landing on a higher indifference curve.} under the $l$-don't care equilibrium than they do under the $k$-don't care equilibrium.
\begin{proof}
Let us start by comparing the effort levels in a $k$-don't care equilibrium with those in an $l$-don't care equilibrium, with $l>k$.  According to the characterization (\ref{kdcinterval}), a certain strictly decreasing function  
\begin{equation}
\delta(z):=1-\left(1+\frac{n}{n-1}\left(m-z\right)\right)^{-1}
\end{equation}of the $k$-don't care mean must lie in the interval $\left[\alpha_{(k)},\alpha_{(k+1)}\right)$;  and $\delta(\bullet)$ applied to the $l$-don't care mean must lie in the interval $\left[\alpha_{(l)},\alpha_{(l+1)}\right)$.  Thus, we must have $\overline{x}^{*(l)}<\overline{x}^{*(k)}$, so that the curve will be more generous in an equilibrium with $l$ no-shows.  Note well that this logic remains correct if $l=n$ (whence $\delta\left(\overline{x}^{*(n)}\right)\geq\alpha_{(n)}$) or if $k=0$ (in which case $\delta\left(\overline{x}^{*(0)}\right)<\alpha_{(1)}$, by (\ref{0dc})).
\par
Now, the (zero) effort levels of the bottom $k$ students are obviously unchanged by the transition $l\hookrightarrow k$;  the bottom $(k+1)^{st}$ through $l^{th}$ students will decrease their efforts to zero.  The top $n-l$ students in the course, who are still giving positive effort, will see those efforts (strictly) decrease, on account of the equilibrium formula (\ref{kdcformula}), which is affinely strictly increasing in the don't-care mean.  This proves that the set of Nash equilibria is a chain in $[0,1]^n$. 
\par
Next, let us consider student $i$'s welfare as we move from a $k$-don't care to an $l$-don't care equilibrium.  The curve is more generous, and he has more leisure time, but his effort has undergone a negative change $\Delta x_i^*$ across the two equilibrium points.  Looking at the formula (\ref{kdcformula}), and bearing in mind that we must have $\Delta x_i^*\geq -x_i^*$, student $i$'s effort change consists in
\begin{equation}
\Delta x_i^*=\max\left(\frac{n(1-\alpha_i)}{n-\alpha_i}\Delta\overline{x}^*,-x_i^*\right),
\end{equation}where $\Delta\overline{x}^*:=\overline{x}^{*(l)}-\overline{x}^{*(k)}<0$ is the change in the don't care mean.  If student $i$'s grade change is denoted by $\Delta G_i^*$, then we have
\begin{equation}
\Delta G_i^*=\Delta x_i^*-\Delta\overline{x}^*\geq\left(\frac{n(1-\alpha_i)}{n-\alpha_i}-1\right)\Delta\overline{x}^*=-\frac{(n-1)\alpha_i}{n-\alpha_i}\Delta\overline{x}^*>0,
\end{equation}so that all student's grades will strictly increase in the transition from a $k$-don't care to an $l$-don't care equilibrium point.
\par
Finally, in order to demonstrate that player $i$ is better off under the $0$-don't care equilibrium profile $x^{*(0)}$ than he is under the no-curve action profile $\left(\alpha_1,...,\alpha_n\right)$, we must show that\footnote{Recall that the equation on the left-hand side of (\ref{mustshow}) was the very definition of the function $\phi_i(\bullet)$ above.  The jump $J_i$ in student $i$'s reaction correspondence occurs precisely where he is indifferent between the two types of outcomes.}
\begin{equation}\label{mustshow}
\phi_i\left(\overline{x}^{*(0)}_{-i}\right)=\frac{U_i\left(x^{*(0)}\right)}{\alpha_i^{\alpha_1}(1-\alpha_i)^{1-\alpha_i}}-1\geq0,
\end{equation}where $\overline{x}^{*(0)}_{-i}$ is his opponents' sample mean in the curved interior equilibrium.  Since $\phi_i(\bullet)$ is strictly decreasing, and $\phi_i^{-1}(0)=J_i$, the inequality (\ref{mustshow}) is equivalent to $\overline{x}^{*(0)}_{-i}\leq J_i$, which, reading off from the formula for student $i$'s best reply correspondence, is true of any curved Nash equilibrium.  Q.E.D.
\end{proof}

\end{theorem}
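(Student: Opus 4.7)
The plan is to reduce the problem to a finite case analysis. By the three-piece structure of $BR_i$ in Theorem \ref{jumptheorem}, each player's best response takes values in the set $\{0,\,\alpha_i-(1-\alpha_i)(nm/(n-1)-\overline{x}_{-i}),\,\alpha_i\}$, and the consistent joint selections are exactly the $n+2$ equilibrium types already identified: the no-curve profile $(\alpha_1,\ldots,\alpha_n)$, and the $k$-don't-care profiles $x^{*(k)}$ for $k=0,1,\ldots,n$. To establish a chain structure it therefore suffices to pairwise compare (i) a $k$-dc equilibrium against an $l$-dc equilibrium for $0\le k<l\le n$, and (ii) the no-curve profile against each $x^{*(k)}$ with which it coexists.

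For step (i), I would first order the don't-care means via the characterization (\ref{kdcinterval}). Define the strictly decreasing function
\[
\delta(z):=1-\Bigl(1+\tfrac{n}{n-1}(m-z)\Bigr)^{-1},
\]
so that $\delta(\overline{x}^{*(k)})<\alpha_{(k+1)}\le\alpha_{(l)}\le\delta(\overline{x}^{*(l)})$, which yields $\overline{x}^{*(l)}<\overline{x}^{*(k)}$. The affine, strictly increasing formula (\ref{kdcformula}) then shows that each of the top $n-l$ players strictly reduces effort, players in positions $k+1$ through $l$ fall from a positive level to zero, and the bottom $k$ are unchanged. Leisure $L_i=1-x_i$ rises across the board. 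For the grades, I would compute $\Delta G_i=\Delta x_i^{*}-\Delta\overline{x}^{*}$ and substitute $\Delta x_i^{*}=(n(1-\alpha_i)/(n-\alpha_i))\,\Delta\overline{x}^{*}$ from (\ref{kdcformula}) for the top-$(n-l)$ players to obtain $\Delta G_i=-((n-1)\alpha_i/(n-\alpha_i))\,\Delta\overline{x}^{*}>0$; for players who drop out, $\Delta G_i=-\Delta\overline{x}^{*}>0$ directly. Pareto dominance of the $l$-dc equilibrium then follows immediately since both $G_i$ and $L_i$ rise for every agent.

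For step (ii), I would first observe that $x^{*(k)}\le(\alpha_1,\ldots,\alpha_n)$ componentwise: the bottom $k$ students play $0\le\alpha_i$, and each of the top $n-k$ plays an interior curved best response $\alpha_i-(1-\alpha_i)(nm/(n-1)-\overline{x}^{*(k)}_{-i})$, which is strictly less than $\alpha_i$. For Pareto dominance of $x^{*(k)}$ over the try-hard profile, I would chain two inequalities: the best-response property gives $U_i(x^{*(k)})\ge U_i(\alpha_i,x^{*(k)}_{-i})$, and negative spillovers together with $x^{*(k)}_{-i}\le\alpha_{-i}$ yield $U_i(\alpha_i,x^{*(k)}_{-i})\ge U_i(\alpha_i,\alpha_{-i})$. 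Equivalently, for top players in the $0$-dc case one can argue directly via the $\phi_i$ function: the defining condition $\overline{x}^{*(0)}_{-i}\le J_i=\phi_i^{-1}(0)$ combined with the strict decrease of $\phi_i$ gives $\phi_i(\overline{x}^{*(0)}_{-i})\ge 0$, which is the desired Pareto comparison written multiplicatively.

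The hard part, I expect, is the grade-monotonicity claim in part (c) of the theorem, because the natural argument through (\ref{kdcformula}) applies only to players who remain in the curved interior regime across the transition; the intermediate players (those moving from positive effort in the $k$-dc equilibrium to zero effort in the $l$-dc equilibrium) leave the affine regime and must be handled by a separate direct computation exploiting the fact that a drop in $\overline{x}^{*}$ contributes to every player's grade through the curve. A further subtlety is that the no-curve-to-$0$-dc transition need not move monotonically in the grade-leisure plane (some students may receive lower raw scores while still gaining utility), which is precisely why part (c) is phrased only for $k$-dc versus $l$-dc pairs and why the Pareto statement in the no-curve case must go through the $\phi_i$ mechanism rather than through a direct resource comparison.
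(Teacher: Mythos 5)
Your overall route matches the paper's: order the don't-care means via the strictly decreasing function $\delta$ and the characterization (\ref{kdcinterval}), get effort monotonicity from the affine formula (\ref{kdcformula}), compute grade changes as $\Delta G_i^*=\Delta x_i^*-\Delta\overline{x}^*$, and treat the no-curve versus $0$-don't-care comparison separately through $\phi_i$ and $J_i$. Your step (ii) actually adds something the paper leaves implicit: the chain $U_i(x^{*(k)})\ge U_i(\alpha_i,x^{*(k)}_{-i})\ge U_i(\alpha_i,\alpha_{-i})$ via best response plus negative spillovers is the general supermodular argument and works uniformly, whereas the paper only runs the $\phi_i$ computation for the $0$-don't-care case.

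There is, however, one step that fails as written. For a player who drops from positive effort $x_i^{*(k)}>0$ in the $k$-don't-care equilibrium to zero effort in the $l$-don't-care equilibrium, you assert $\Delta G_i=-\Delta\overline{x}^*>0$ "directly." But such a player's grade moves from $x_i^{*(k)}+m-\overline{x}^{*(k)}$ to $0+m-\overline{x}^{*(l)}$, so in fact $\Delta G_i=-x_i^{*(k)}-\Delta\overline{x}^*$, and positivity is not automatic: you must show $\overline{x}^{*(k)}-\overline{x}^{*(l)}>x_i^{*(k)}$. The missing observation is that these players drop out precisely because the affine formula evaluated at the $l$-don't-care mean is nonpositive, i.e.
\begin{equation}
x_i^{*(k)}+\frac{n(1-\alpha_i)}{n-\alpha_i}\,\Delta\overline{x}^*\le 0,
\qquad\text{hence}\qquad
-x_i^{*(k)}\ge\frac{n(1-\alpha_i)}{n-\alpha_i}\,\Delta\overline{x}^*,
\end{equation}
which gives $\Delta G_i\ge -\frac{(n-1)\alpha_i}{n-\alpha_i}\Delta\overline{x}^*>0$, the same lower bound as for the players who stay interior. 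The paper packages exactly this into the single expression $\Delta x_i^*=\max\bigl(\frac{n(1-\alpha_i)}{n-\alpha_i}\Delta\overline{x}^*,\,-x_i^*\bigr)$, which handles both classes of players at once; you correctly flagged in your closing paragraph that the intermediate players leave the affine regime and need separate treatment, but the "direct computation" you supplied for them is the wrong one.
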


\section{Concluding Remarks.}
This paper gave an utter and complete elaboration and solution of a supermodular game that all of my students have been playing for the past eight semesters.  In order to protect his teaching evaluations, a university professor implements an exam curve with a target mean $m$, as follows.  If the class average $\overline{x}$ is less than $m$, then the instructor will give everyone $m-\overline{x}$ free points in order to bring the mean up to an acceptable level;  if the class average is at least $m$, then no additional points are given.  Under this curving scheme, it becomes possible for one or more students to receive a curved grade that exceeds $100\%$;  in such cases there is no truncation of the grade, i.e., a $110\%$ net of the curve does not get cut down to $100\%$. 
\par
There are $n$ students in the course, who all have Cobb-Douglas preferences over a 2-dimensional commodity space that consists of grades and leisure time, or non-effort.  The elasticity $\alpha_i$ of each student $i$'s utility with respect to his grade is regarded as the ability, or quality, parameter for student $i$.  Put more delicately, students with lower values of $\alpha_i$ have a greater relative preference for leisure, and they feel the pain of studying more acutely.
\par
In the absence of an exam curve, each student's effort $x_i$ will correspond exactly to his or her grade, i.e., if you give it $90\%$, then your uncurved grade will be $90\%$, and you will have a $10\%$ allocation of laziness, or leisure time.  As a consequence of this resource constraint (that a player's uncurved grade and his leisure time must sum to $100\%$), each player's ability parameter is precisely the grade that he or she would get without the benefit of a curve.  For instance, a pupil with $\alpha_i=75\%$ is a C student, and someone with $\alpha_i=85\%$ is a B student.
\par
The resulting $n$-person interaction, whose outcomes the author continues to observe in his actual life, is a fascinating coordination game with negative spillovers, in which the various pupils' effort levels are strategic complements.  That is, if my classmates (read:  opponents) study harder, then this decreases the number of free points that I will receive from the curve, lowering my utility.  This lowering of my grade is painful, and increases the marginal utility of my effort.  Thus, if I believe that my opponents will exert high effort (say, because they have a high relative preference for a good grade), then I must rationally join in and make the necessary sacrifices.  But the coordination and complementarity also works in the reverse direction:  if I believe that my opponents are not going to try very hard, then the padding of the curve reduces my marginal returns to effort, inducing me to enjoy some additional leisure.  
\par
Speaking of marginal utility, the players' various payoffs are non-differentiable at effort profiles whose sample mean is exactly equal to the professor's target mean;  this generates a kink, or sharp corner, in the graph of each student's payoff.  In a game with smooth payoffs, strategic complementarity is characterized via the non-negativity of the cross partials, i.e., each student's marginal utility of effort must be increasing in the actions of all his opponents;  and such is the case here, away from the kinks.
\par
Thus, we resorted to Topkis' general theory of supermodularity (\cite{topkispaper,topkisbook}), wherein the notion of complementarity and increasing returns is expressed as a certain property of the forward differences of player $i$'s payoff.  Thus, given a hypothetical discrete amount $\Delta x_i$ of extra effort by kid $i$, the corresponding discrete utility gain $\Delta U_i$ is an increasing function of the efforts of all the non-$i$ players, so that the payoffs exhibit ``increasing differences,'' \'{a} la Topkis.  Although the students' payoffs are all continuous functions, the mechanics of our game duly prevent player $i$'s utility from being quasi-concave in his or her own strategy.  This generates an interesting non-convexity in the reaction correspondence, which we located and pinpointed in earnest.  To be specific, we discovered the ``indifference hyperplane,'' at which there is a jump, or bifurcation, in student $i$'s best reply;  his reaction correspondence is not single-valued over this piece of the domain (viz., it is a 2-point set).  
\par
This lack of convex-valuedness makes the usual machinery of the Kakutani fixed point theorem inapplicable;  instead, the general existence of equilibria is established by applying the Knaster-Tarski fixed point theorem to the extremal best response correspondences.  The supermodularity of the game guarantees the monotonicity of the best replies, so that the fixpoint set (of pure Nash equilibria) is a complete lattice.
\par
We took great pains to derive the exact expression for each student's best response correspondence, which enabled us to give complete formulas for all equilibria that could obtain in all possible situations.  The first step involved decomposing the domain of each player's reaction correspondence into a triplet of convex polytopes, which we called, respectively, the ``no-curve region,'' the ``curve region,'' and the ``make-or-break region.''  In the no-curve region, the efforts of the non-$i$ players are high enough to break the curve, regardless of $i$'s behavior, and so his best response is to play $\alpha_i$.  Similarly, in the curve region, the curve is made regardless of player $i$'s effort level, and his best reply is affinely increasing in the opponents' sample mean, $\overline{x}_{-i}$.  Within the curve region, if student $i$'s ability is sufficiently low, there will be a ``no-show'' subregion in which his best response is to put zero effort, i.e., he doesn't even care to show up to the exam.  On the other hand, if pupil $i$'s ability is sufficiently high, then all the pure strategies that involve less effort than his best reply to the zero vector are strictly dominated.  At the opposite extreme, all of student $i$'s pure strategies above $\alpha_i$ are strictly dominated by $\alpha_i$ itself.
\par
In the most complicated piece of the trifecta, student $i$ can make or break the curve, depending on the level of effort that he chooses.  If the opposing sample mean lies in kid $i$'s make-or-break region, then there will be a point $\hat{x}_i$ in $i$'s action set at which the class average is exactly equal to $m$.  Thus, the curve will be made or broken according as to whether $i$'s action is less than or greater than $\hat{x}_i$.  
\par
After so many Lemmas, and the manipulation of a couple or three inequalities, we found that there is a unique value $J_i$ of the opponents' sample mean (which we characterized as the sole root of a certain logarithmic equation $\phi_i(z)=0$) that causes $i$'s reaction correspondence to become double-valued.  That is, against opposing action profiles $x_{-i}$ that lie on the hyperplane $\overline{x}_{-i}=J_i$, student $i$ will have a distant pair of best responses, one high-effort ($=\alpha_i$) and one low-effort.  In this happenstance, player $i$ will be exactly indifferent between breaking the curve with high effort or making the curve with low effort.  If $x_{-i}$ is in $i$'s make-or-break region, but it lies below the indifference plane, then student $i$'s best response is to make the curve; when the opposing strategy profile lies above the indifference plane, then kid $i$'s unique reaction is to break the curve.
\par
The make-or-break region comprises a narrow strip (it is a convex polytope) within the domain of each player's reaction correspondence.  In the limit as the class size $n$ becomes large, the volumes of the make-or-break regions and the lengths of the corresponding jumps on the players' strategy axes will all converge to zero, thereby erasing our non-convexities as $n\to\infty$.  
\par
At the other end of the spectrum, smaller class sizes will tend magnify the strategic effects of the ``make-or-break'' curve mechanic.  For instance, in the 2-person game, every $2\%$ that you lose on the exam increases your opponent's score (and yours) by $1\%$;  similarly, every additional $2\%$ that you score on the exam will create a significant negative spillover for your classmate.  This effect gets dampened away in larger sections, i.e., in a 50-student course, every additional $50\%$ that you score imposes a $1\%$ externality on your peers.  In a large course, then, the main strategic consideration is not whether to make or break the curve, but rather, how to properly tune and calibrate one's effort level to the overall ability of the student population.    
\par
On the strength of our exact formulas for the best reply surfaces, we gave algebraic expressions for all points of rational play and derived the precise structure of the lattice of pure Nash equilibria.  We proved that there are $n+2$ possible types of equilibria in the model, which are totally ordered with respect to vector comparisons in $n$-dimensional Euclidean space.  The greatest possible type of Nash equilibrium (when it exists in a given model, $\theta$) is the no-curve (or ``try-hard'') outcome, whereby all students try as hard as they normally would in an uncurved course.  Next, we have the curved interior equilibrium, which we refer to as ``$0$-don't care,'' since everybody shows up to the exam. 
\par
Finally, we have the ``$k$-don't care'' equilibria, for $k=1,2,...,n$, wherein the bottom $k$ students (who have the $k$ lowest ability parameters) are no-shows.  Accordingly, we gave a complete pre-image and decomposition of the parameter space $\Theta$, resplendent with the characteristic conditions under which each type of equilibrium will or will not exist as a fixpoint of the reaction correspondence.  Although the number of equilibria in any model $\theta$ is guaranteed to be finite (between 1 and $n+2$), there is no generic expectation of oddness (as we saw in a beautiful figures and example above), on account of the infinite strategy sets and the non-convex best responses.
\par
In our comparative static analysis of the model, we found that the correct way to partially order the parameter space $\left(\Theta,\le\right)$ is with respect to the ``hardness,'' or difficulty, of the model $\theta\in\Theta$, which is considered to be increasing in the ability vector and decreasing in the instructor's target mean.  We showed that each person's payoff is properly indexed with respect to hardness, in the sense that all individual utilities have increasing differences with respect to $\theta$ under the hardness order.  Such indexation has far-reaching theoretical consequences, both for general supermodular games and for the concrete details that are specific to our university model.  Thus, the extremal best responses (which, for a given player, only differ from each other over the indifference plane) and the greatest and least pure Nash equilibria are all increasing in the hardness of the course.  In terms of learning to play the game, iterating the greatest and least best responses (seeded by the greatest and least strategy profiles, respectively) is guaranteed to converge to the greatest and least pure Nash equilibria, which, respectively, are also the greatest and least profiles of rationalizable strategies.
\par
Given two Nash equilibria, one of which will always have across-the-board less effort than the other, the low-effort equilibrium will always Pareto dominate the high-effort equilibrium.  In our university setting, we discovered that the dominance is even stronger, in the following sense.  Given a $k$-don't care equilibrium and an $l$-don't care equilibrium, with $l>k$, all students' grades will be higher (and their efforts will be lower) in the equilibrium that has more no-shows.  Hence, the set of equilibrium allocations for any given student will form a chain that moves to the north-east in his or her grade-leisure plane.  The sole exception to this north-easterly movement occurs when we pass from a no-curve equilibrium to a $0$-don't care equilibrium.  As we showed in a 2-person example above, in the aftermath of this unique equilibrium transition, some or all of the students can wind up with lower grades, so that their consumption bundles undergo north-westerly displacements in the grade-leisure plane.  However, the extra leisure time is more than compensatory for the lost points, and all students are guaranteed to land on a higher indifference curve.  Be that as it may, as the class size becomes large, even this one mode of north-westerly travel becomes impossible.
\par
The most analytically tractable type of equilibrium, and the one preferred by the author, is the curved interior equilibrium without any no-shows.  In this connection, the formulas for rational play become especially beautiful and simple.  We found that each player's behavior in the $0$-don't care equilibrium is driven by the following sufficient statistic (``$\hat{\alpha}_n$'') for the class ability vector:  we take $n$ minus the harmonic mean of the numbers $\left(n-\alpha_1,...,n-\alpha_n\right)$.  Letting $n\to\infty$ in the curved interior equilibrium, where $\hat{\alpha}_\infty$ is the sufficient statistic for the student population, we obtained the following beautiful result:  the grades and leisure allocations of all students get multiplied by the same factor $m/\hat{\alpha}_\infty$ (think 1.14) relative to the no-curve outcome.  Thus, we have a simple dilation of each student's original $\left(\text{grade},\text{leisure}\right)$ bundle, which was $\left(\alpha_i,1-\alpha_i\right)$.  
\par
The disincentive of the exam curve, (which was necessary in order to protect the professor's teaching evaluations) will therefore exacerbate (say, by a factor of $1.14$) the grade inequality\footnote{and the leisure inequality.} that was already present in the class ability vector.  That is, since all students increase their leisure hours by a uniform percentage, the top students in the course (who are starting from a low leisure base) will lose a comparatively small number of study hours;  say, a low-performing student who increases leisure by yet another $14\%$ will end up losing more study time than a high-ability student who does the same.
\par
In the opinion of this author, our game-theoretic exam curve furnishes the perfect excuse for upping the quality and enjoyability of the source material, for the mutual benefit of all the $n+1-k$ people who care.  What a great way to learn Economics.

\textit{Marathon, Greece, Summer 2021}.

\subsection*{Disclosures.}
This paper is solely the work of the (sole) author;  it was internally funded, and He has no conflicts of interest to declare.
\printbibliography

@book{boyd,
	doi = {10.1017/cbo9780511804441},
	url = {https://doi.org/10.1017%2Fcbo9780511804441},
	year = {2004},
	month = {mar},
	publisher = {Cambridge University Press},
	address={Cambridge, England},
	author = {Stephen Boyd and Lieven Vandenberghe},
	title = {Convex Optimization}
}

@book{mwg,
  title={Microeconomic Theory},
  author={Mas-Colell, Andreu and Whinston, Michael Dennis and Green, Jerry R.},
  year={1995},
  publisher={Oxford University Press},
	address={New York, New York}
}

@article{tarsky,
	doi = {10.2140/pjm.1955.5.285},
	url = {https://doi.org/10.2140%2Fpjm.1955.5.285},
	year = 1955,
	month = {jun},
	publisher = {Mathematical Sciences Publishers},
	volume = {5},
	number = {2},
	pages = {285--309},
	author = {Alfred Tarski},
	title = {A lattice-theoretical fixpoint theorem and its applications},
	journal = {Pacific Journal of Mathematics}
}

@article{knaster,
  title={Un theoreme sur les functions d'ensembles},
  author={Knaster, Bronis{\l}aw},
  journal={Ann. Soc. Polon. Math.},
  volume={6},
  pages={133--134},
  year={1928}
}

@article{kakutani,
	doi = {10.1215/s0012-7094-41-00838-4},
	url = {https://doi.org/10.1215%2Fs0012-7094-41-00838-4},
	year = 1941,
	month = {sep},
	publisher = {Duke University Press},
	volume = {8},
	number = {3},
	author = {Shizuo Kakutani},
	title = {A generalization of Brouwer's fixed point theorem},
	journal = {Duke Mathematical Journal},
	address={Durham, North Carolina},
}

@incollection{nash51,
	doi = {10.1515/9781400884087-009},
	url = {https://doi.org/10.1515%2F9781400884087-009},
	year = 2002,
	month = {dec},
	publisher = {Princeton University Press},
	address={Princeton, New Jersey},
	pages = {85--98},
	author = {John F. Nash},
	title = {7. Non-Cooperative Games},
	booktitle = {The Essential John Nash}
}

@incollection{nash,
	doi = {10.1515/9781400884087-007},
	url = {https://doi.org/10.1515%2F9781400884087-007},
	year = 2002,
	month = {dec},
	publisher = {Princeton University Press},
	address={Princeton, New Jersey},
	pages = {49--50},
	author = {John F. Nash},
	title = {5. Equilibrium Points in n-Person Games},
	booktitle = {The Essential John Nash}
}

@book{lang,
	doi = {10.1007/978-1-4419-8532-3},
	url = {https://doi.org/10.1007%2F978-1-4419-8532-3},
	year = 1986,
	publisher = {Springer-Verlag},
	address={New York, New York},
	author = {Serge Lang},
	title = {A First Course in Calculus}
}

@article{bulow,
	doi = {10.1086/261312},
	url = {https://doi.org/10.1086%2F261312},
	year = 1985,
	month = {jun},
	publisher = {University of Chicago Press},
	address={Chicago, Illinois},
	volume = {93},
	number = {3},
	pages = {488--511},
	author = {Jeremy I. Bulow and John D. Geanakoplos and Paul D. Klemperer},
	title = {Multimarket Oligopoly: Strategic Substitutes and Complements},
	journal = {Journal of Political Economy}
}

@book{fudenbergtirole,
  title={Game Theory},
  author={Fudenberg, Drew and Tirole, Jean},
	year={1991},
  publisher={The MIT Press},
  address={Cambridge, Massachusetts}
}

@article{glicksberg,
	doi = {10.2307/2032478},
	url = {https://doi.org/10.2307%2F2032478},
	year = 1952,
	month = {feb},
	publisher = {{JSTOR}},
	volume = {3},
	number = {1},
	pages = {170},
	author = {I. L. Glicksberg},
	title = {A Further Generalization of the Kakutani Fixed Point Theorem, with Application to Nash Equilibrium Points},
	journal = {Proceedings of the American Mathematical Society}
}

@book{topkisbook,
	doi = {10.1515/9781400822539},
	url = {https://doi.org/10.1515%2F9781400822539},
	year = 2011,
	month = {dec},
	publisher = {Princeton University Press},
	address={Princeton, New Jersey},
	author = {Donald M. Topkis},
	title = {Supermodularity and Complementarity}
}

@article{topkispaper,
	doi = {10.1287/opre.26.2.305},
	url = {https://doi.org/10.1287%2Fopre.26.2.305},
	year = 1978,
	month = {apr},
	publisher = {Institute for Operations Research and the Management Sciences ({INFORMS})},
	volume = {26},
	number = {2},
	pages = {305--321},
	author = {Donald M. Topkis},
	title = {Minimizing a Submodular Function on a Lattice},
	journal = {Operations Research}
}

@article{cobbdouglas,
  title={A theory of production},
  author={Cobb, Charles W. and Douglas, Paul H.},
  journal={The American Economic Review},
  volume={18},
  number={1},
  pages={139--165},
  year={1928},
  publisher={American Economic Association},
	url={https://www.aeaweb.org/aer/top20/18.1.139-165.pdf},
}

@book{osborne,
  title={A Course in Game Theory},
  author={Osborne, Martin J. and Rubinstein, Ariel},
  year={1994},
  publisher={The MIT Press},
	address={Cambridge, Massachusetts},
}

@book{vnm,
	doi = {10.1515/9781400829460},
	url = {https://doi.org/10.1515%2F9781400829460},
	year = 2007,
	month = {dec},
	publisher = {Princeton University Press},
	address={Princeton, New Jersey},
	author = {John von Neumann and Oskar Morgenstern},
	title = {Theory of Games and Economic Behavior (60th Anniversary Commemorative Edition)},
}

@book{meaning,
  title={Mathematics:  its Content, Methods, and Meaning},
  author={Aleksandrov, Aleksandr Danilovich and Lavrent'ev, Mikhail Alekseevich},
  year={1999},
  publisher={Courier Corporation},
  address={Mineola, New York},
}

@book{berge,
  title={Topological Spaces:  Including a Treatment of Multi-Valued Functions, Vector Spaces, and Convexity},
  author={Berge, Claude},
  year={1997},
  publisher={Courier Corporation},
  address={Mineola, New York},
}

@incollection{debreu,
	doi = {10.1017/ccol052123736x.003},
	url = {https://doi.org/10.1017/ccol052123736x.003},
	publisher = {Cambridge University Press},
	pages = {50--58},
	year={1952},
	author = {Debreu, Gerard and Hildenbrand, Werner},
	title = {A social equilibrium existence theorem},
	booktitle = {Mathematical Economics},
	address={Cambridge, England},
}

@misc{yildiz,
title={Lecture notes on supermodular games},
author={Muhamet Yildiz},
year={2010},
month={apr},
day={22},
note={MIT course 14, subject 126, Game Theory.  Accessed on July 24, 2021.},
}

@misc{berkeley,
title={Theory and application of non-cooperative games:  supermodular games},
author={Shachar Kariv},
year={2013},
note={University of California at Berkeley, ECON 209A (Fall 2013).  Accessed on July 29, 2021.},
}

@misc{yildizslides,
title={Supermodularity:  based on lectures by Paul Milgrom},
author={Muhamet Yildiz},
year={2016},
note={MIT course 14, subject 126, Game Theory (Spring 2016).  Accessed on July 29, 2021.},
}

@article{vives,
	doi = {10.1016/0304-4068(90)90005-t},
	url = {https://doi.org/10.1016%2F0304-4068%2890%2990005-t},
	year = 1990,
	month = {jan},
	publisher = {Elsevier {BV}},
	volume = {19},
	number = {3},
	pages = {305--321},
	author = {Xavier Vives},
	title = {Nash equilibrium with strategic complementarities},
	journal = {Journal of Mathematical Economics},
}

@misc{levin,
title={Supermodular games},
author={Jonathan Levin},
year={2016},
month={apr},
note={Stanford University, Economics 286:  Graduate Game Theory.  Accessed on August 1, 2021.},
url={http://www.stanford.edu/~jdlevin/Econ%20286/Supermodular%20Games.pdf},
}

@article{milgrom,
	doi = {10.2307/2938316},
	url = {https://doi.org/10.2307%2F2938316},
	year = 1990,
	month = {nov},
	publisher = {{JSTOR}},
	volume = {58},
	number = {6},
	pages = {1255},
	author = {Paul Milgrom and John Roberts},
	title = {Rationalizability, Learning, and Equilibrium in Games with Strategic Complementarities},
	journal = {Econometrica},
}

@article{monotone,
	doi = {10.2307/2951479},
	url = {https://doi.org/10.2307%2F2951479},
	year = 1994,
	month = {jan},
	publisher = {{JSTOR}},
	volume = {62},
	number = {1},
	pages = {157},
	author = {Paul Milgrom and Chris Shannon},
	title = {Monotone Comparative Statics},
	journal = {Econometrica},
}
\end{document}